\begin{document}

\title{Broadcasts on Paths and Cycles}

\author{Sabrina BOUCHOUIKA~\thanks{Faculty of Mathematics, Laboratory L'IFORCE, University of Sciences and Technology
Houari Boumediene (USTHB), B.P.~32 El-Alia, Bab-Ezzouar, 16111 Algiers, Algeria.}
\and Isma BOUCHEMAKH~\footnotemark[1]
\and \'Eric SOPENA~\thanks{Univ. Bordeaux, CNRS, Bordeaux INP,  LaBRI, UMR 5800, F-33400, Talence, France.}
}
\maketitle

\begin{abstract}
A broadcast on a graph $G=(V,E)$ is a function  $f: V\longrightarrow \{0,\ldots,\operatorname{diam}(G)\}$ such that 
$f(v)\leq e_G(v)$ for every vertex $v\in V$, where
$\operatorname{diam}(G)$ denotes the diameter of $G$ and $e_G(v)$ the eccentricity of $v$ in $G$. 
The cost of such a broadcast is then the value $\sum_{v\in V}f(v)$.
Various types of broadcast functions on graphs have been considered in the literature, 
in relation with domination, irredundence, independence
or packing, leading to the introduction of several broadcast numbers on graphs.

In this paper, we determine these broadcast numbers for all paths and cycles, thus answering a question
raised in [D.~Ahmadi, G.H.~Fricke, C.~Schroeder, S.T.~Hedetniemi and R.C.~Laskar, 
Broadcast irredundance in graphs. {\it Congr. Numer.} 224 (2015), 17--31].
\end{abstract}

\newtheorem{theorem}{Theorem}[section]
\newtheorem{lemma}[theorem]{Lemma}
\newtheorem{conjecture}[theorem]{Conjecture}
\newtheorem{observation}[theorem]{Observation}
\newtheorem{claim}{Claim}
\renewcommand{\theclaim}{\Alph{claim}}
\newtheorem{corollary}[theorem]{Corollary}
\newtheorem{proposition}[theorem]{Proposition}
\newtheorem{question}[theorem]{Question}

\medskip

\noindent {\bf Keywords:}  
Broadcast; Dominating broadcast;
Irredundant broadcast; Independent broadcast;
Packing broadcast; Path; Cycle.

\medskip

\noindent
{\bf MSC 2010:} 05C12, 05C69.

\newcommand\EFFACE[1]{}

\def\diam{{\rm diam}}
\def\rad{{\rm rad}}
\def\cartesian{{\,\square\,}}

\section {Introduction}

Let $G=(V,E)$ be a graph of {\it  order} $n=|V|$ and {\it  size} $m=|E|$. 
The {\it open neighborhood} of a vertex $v \in V$ is the set $N_G(v) = \{u: uv \in E \}$ of vertices adjacent to $v$. 
Each vertex $u \in N_G(v)$ is a {\it neighbor} of $v$ in $G$.  
The {\it  closed neighborhood} of $v$ is the set $N_G[v] = N_G(v) \cup \{v\}$. 
The {\it open neighborhood} of a set $S \subseteq V$ of vertices is $N_G(S) = \cup_{v\in S} N_G(v)$, while the {\it closed neighborhood} of $S$ is the set $N_G[S]=N_G(S)\cup S$.
The {\it degree} of a vertex $v$ in $G$,  denoted  $\deg_G(v)$, is the size of the open neighborhood of $v$. 

A {\it $(u,v)$-geodesic} in a graph $G$ is a shortest path joining $u$ and $v$.
We denote by $d_G(u,v)$ the {\it distance} between the vertices $u$ and $v$ in $G$, that is, the length of a $(u,v)$-geodesic in $G$.
The {\it  eccentricity} $e_G(v)$ of a vertex $v$ in $G$ is the maximum distance from $v$ to any other vertex of $G$.
The {\it radius} $\rad(G)$ and the {\it diameter} $\diam(G)$ of a graph $G$ are  the minimum and the maximum eccentricity among the vertices of $G$,  respectively.

\medskip

A function $f: V \longrightarrow \{0,\ldots,\operatorname{diam}(G)\}$ is a {\it  broadcast} on a graph $G=(V,E)$ if $f(v)\leq e_G(v)$ for every vertex $v\in V$. 
The value $f(v)$ is called the {\it $f$-value} of $v$.
An {\it  $f$-broadcast vertex} (or an {\it  $f$-dominating vertex}) is a vertex $v$ for which $f(v)>0$. 
The set of all $f$-broadcast vertices  is  denoted $V^+_f(G)$. 
If $v\in V^+_f$ is an $f$-broadcast vertex, $u\in V$ and $d_G(u,v)\leq f(v)$, then the vertex $u$ {\it  hears} a broadcast from $v$ and $v$ {\it  broadcasts to} (or {\it  $f$-dominates}) $u$. 
Note that, in particular, each vertex $v\in V^+_f$ hears a broadcast from itself
and $f$-dominates itself.

The {\it $f$-broadcast neighborhood} of a vertex $v\in V^+_f$ is the set of vertices that hear $v$, that is
$$N_f(v)=\big\{u:d_G(u,v)\leq f(v)\big\},$$
and the {\it broadcast neighborhood} of $f$ is the set 
$$N_f(V^+_f)=\cup_{v\in V^+_f} N_f(v).$$
The set of $f$-broadcast vertices that a vertex $u\in V$ can hear is the set
$$H_f(u) = \big\{v\in V^+_f: d_G(u,v)\leq f (v)\big\}.$$
For a vertex $v\in V^+_f$, the {\it  private $f$-neighborhood} of $v$ is the set of vertices that hear only $v$, that is
$$PN_f(v)=\big\{u\in V: H_f(u)=\{v\}\big\},$$
and every vertex $u\in PN_f(v)$ is a {\it private $f$-neighbor} of $v$.
Moreover,  
the {\it private $f$-border of} $v$ is either the set of private $f$-neighbors of $v$ that are at distance $f(v)$ 
from $v$, 
or the singleton $\{v\}$ if $f(v)=1$ and $PN_f(v)=\{v\}$, 
that is
$$PB_f(v) = 
\left\{
\begin{array}{ll}
    \{v\} & \text{if $f(v)=1$ and $PN_f(v)=\{v\}$},\\[2ex]
    \big\{u \in PN_f(v) : d_G(u,v) = f(v)\big\}& \text{otherwise}.
\end{array}
\right.$$
Every vertex in $PB_f(v)$ is a {\it bordering private $f$-neighbor} of $v$.
In particular, if $f(v)=1$ and $PN_f(v)=\{v\}$, then $v$ is its own bordering private $f$-neighbor.


The {\it  cost} of a broadcast $f$ on a graph $G$ is 
$$\sigma(f)=\sum_{v\in V^+_f}f (v).$$
A broadcast $f$ on $G$ of some type is {\it  minimal} (resp. {\it  maximal}) if there does not exist any broadcast $g\neq f$ on $G$ of the same type such that $g(u)\leq f(u)$ (resp. $g(u)\geq f(u)$) for all $u\in V$. 
Several types of broadcasts have been defined in the literature, in relation with domination, irredundence, independence
or packing, leading to the introduction of several broadcast numbers on graphs, corresponding to the minimum or maximum possible
cost of a maximal or minimal broadcast of the corresponding type, respectively.
For any such parameter, say $q(G)$, a broadcast $f$ on $G$ of the corresponding type with $\sigma(f)=q(G)$ will be simply called
a $q$-broadcast. We will also say that such a broadcast is {\it optimal}.

We now introduce the various types of broadcasts we will consider in this paper.


\paragraph{Dominating broadcasts.}
A broadcast $f$ on $G$ is a {\it  dominating broadcast} if every vertex in $V-V^+_f$ is $f$-dominated by some vertex in $V^+_f$ or, equivalently, if for every vertex $v\in V$, $|H_f(v)| \geq 1$. 
The {\it  broadcast domination number} $\gamma_b(G)$ of $G$ is
the minimum cost of a dominating broadcast on $G$.
The {\it  upper broadcast domination number} $\Gamma_b(G)$ of $G$
is the maximum cost of a minimal dominating broadcast on $G$.
If $f$ is a minimal dominating broadcast on $G$ such that $f(v) = 1$ for each $v\in V^+_f$, then $V^+_f$ is a {\it  minimal dominating set} in $G$, and the minimum (resp. maximum) cost of such a broadcast is the {\it  domination number $\gamma(G)$} (resp. the {\it  upper domination number $\Gamma(G)$}) of $G$.

\paragraph{Irredundant broadcasts.}
A broadcast $f$ on $G$ is an {\it  irredundant broadcast} if $PB_f(v) \neq \emptyset$ for every vertex $v\in V^+_f$. 
Stated equivalently, a broadcast $f$ is irredundant if the following two conditions are satisfied : 
(i) for every $f$-broadcast vertex $v$ with $f(v) \geq 2$, there exists a vertex $u$ such that $H_f(u) = \{v\}$ and $d_G(u,v) = f(v)$, and (ii) for every $f$-broadcast vertex $v$ with $f(v)=1$, there exists a vertex $u \in N_G[v]$ such that $H_f(u) = \{v\}$ (note that, in this case, we can have $u=v$). 
The {\it  upper broadcast irredundance number} $I\!R_b(G)$ of $G$
is the maximum cost of an irredundant broadcast on~$G$.
The {\it  broadcast irredundance number} $ir_b(G)$ of $G$
is the minimum cost of a maximal irredundant broadcast on $G$.
If $f$ is a maximal irredundant broadcast on $G$ such that $f(v) = 1$ for each $v\in V^+_f$, then $V^+_f$ is a {\it  maximal irredundant set} in $G$, and the minimum (resp. the maximum) cost of such a broadcast is the {\it  irredundance number $ir(G)$} (resp. the {\it  upper irredundance number $I\!R(G)$}) of~$G$.

\paragraph{Independent broadcasts.}
A broadcast $f$ is an {\it independent broadcast} if no broadcast vertex $f$-dominates any other broadcast vertex or, equivalently, if for every $v \in V^+_f$, $|H_f(v)| = 1$. 
The {\it broadcast independence number} $\beta_b(G)$ of $G$
is the maximum cost of an independent broadcast on $G$.
The {\it lower broadcast independence number} $i_b(G)$ of $G$
is the minimum cost of a maximal independent broadcast on $G$.
If $f$ is a maximal independent broadcast such that $f(v) = 1$ for each $v\in V^+_f$, then $V^+_f$ is a {\it  maximal independent set} in $G$, and the maximum (resp. minimum) cost of such a broadcast is the {\it  vertex independence number $\beta_0(G)$} (resp. the {\it  independent domination number $i(G)$}) of $G$.

\paragraph{Packing broadcasts.}
A broadcast $f$ is a {\it packing broadcast} if every vertex hears at most one broadcast, that is, for every vertex $v\in V$, $|H_f(v)|\leq 1$. 
The {\it broadcast packing number} $P_b(G)$ of $G$ 
is the maximum cost of a packing broadcast on $G$. 
The {\it lower broadcast packing number} $p_b(G)$ of $G$
is the minimum cost of a maximal packing broadcast on $G$. 
If $f$ is a maximal packing broadcast such that $f(v) = 1$ for each $v\in V^+_f$, then $V^+_f$ is a {\it  maximal packing set} in $G$, and the maximum (resp. the minimum) cost of such a broadcast is the {\it  packing number $P(G)$} (resp. the {\it  lower packing  number $p(G)$}) of $G$.

\medskip

\newcommand\LIGNE[4]{
\draw[thick] (#1,#2) to (#3,#4);
}
\newcommand\POINTILLE[4]{
\draw[thick,dotted] (#1,#2) to (#3,#4);
}

\newcommand\bSOM[4]{
   \node[scale=0.7,draw,circle,fill=black] at (#1,#2){};
   \node[above] at (#1,#2+0.2){#3};
   \node[below] at (#1,#2-0.2){#4};
}
\newcommand\dSOM[4]{
   \node[scale=0.7,draw,circle,fill=lightgray] at (#1,#2){};
   \node[above] at (#1,#2+0.2){#3};
   \node[below] at (#1,#2-0.2){#4};
}
\newcommand\SOM[4]{
   \node[scale=0.7,draw,circle,fill=white] at (#1,#2){};
   \node[above] at (#1,#2+0.2){#3};
   \node[below] at (#1,#2-0.2){#4};
}

\newcommand\broadcastPhuit[9]{
\foreach \k in {1,2,...,9}
   {\node[scale=0.6,draw,circle,fill=black] (v\k) at (\k,#1){};}
	
\draw[thick] (v1) to (v9);
\foreach \k in {1,2,...,9}
   {\node[below] at (\k,#1-0.2){$x_\k$};}
\node[above] at (1,#1+0.2){#2};   
\node[above] at (2,#1+0.2){#3};   
\node[above] at (3,#1+0.2){#4};   
\node[above] at (4,#1+0.2){#5};   
\node[above] at (5,#1+0.2){#6};   
\node[above] at (6,#1+0.2){#7};   
\node[above] at (7,#1+0.2){#8};   
\node[above] at (8,#1+0.2){#9};   
\node[above] at (9,#1+0.2){0};
}

\newcommand\FLECHE[2]{
   \draw[->,>=latex] (#1,#2+0.5) to[bend right=45] (#1-0.5,#2) to[bend right=45] (#1,#2-0.5);
}

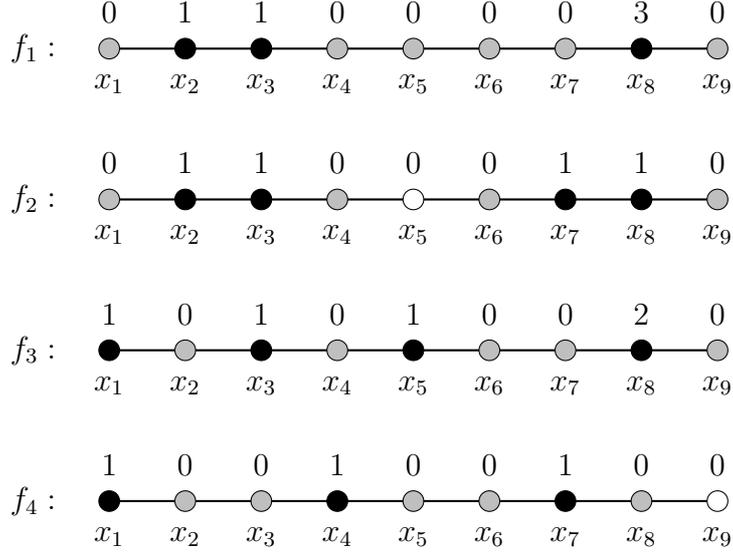
\begin{figure}
\begin{center}
\begin{tikzpicture}
\node at (0,0){$f_1:$};  
   \LIGNE{1}{0}{9}{0}
   \dSOM{1}{0}{0}{$x_1$}
   \bSOM{2}{0}{1}{$x_2$}
   \bSOM{3}{0}{1}{$x_3$}
   \dSOM{4}{0}{0}{$x_4$}
   \dSOM{5}{0}{0}{$x_5$}
   \dSOM{6}{0}{0}{$x_6$}
   \dSOM{7}{0}{0}{$x_7$}
   \bSOM{8}{0}{3}{$x_8$}
   \dSOM{9}{0}{0}{$x_9$}
\node at (0,-2){$f_2:$};  
   \LIGNE{1}{-2}{9}{-2}
   \dSOM{1}{-2}{0}{$x_1$}
   \bSOM{2}{-2}{1}{$x_2$}
   \bSOM{3}{-2}{1}{$x_3$}
   \dSOM{4}{-2}{0}{$x_4$}
   \SOM{5}{-2}{0}{$x_5$}
   \dSOM{6}{-2}{0}{$x_6$}
   \bSOM{7}{-2}{1}{$x_7$}
   \bSOM{8}{-2}{1}{$x_8$}
   \dSOM{9}{-2}{0}{$x_9$}
\node at (0,-4){$f_3:$}; 
   \LIGNE{1}{-4}{9}{-4}
   \bSOM{1}{-4}{1}{$x_1$}
   \dSOM{2}{-4}{0}{$x_2$}
   \bSOM{3}{-4}{1}{$x_3$}
   \dSOM{4}{-4}{0}{$x_4$}
   \bSOM{5}{-4}{1}{$x_5$}
   \dSOM{6}{-4}{0}{$x_6$}
   \dSOM{7}{-4}{0}{$x_7$}
   \bSOM{8}{-4}{2}{$x_8$}
   \dSOM{9}{-4}{0}{$x_9$} 
\node at (0,-6){$f_4:$};  
   \LIGNE{1}{-6}{9}{-6}
   \bSOM{1}{-6}{1}{$x_1$}
   \dSOM{2}{-6}{0}{$x_2$}
   \dSOM{3}{-6}{0}{$x_3$}
   \bSOM{4}{-6}{1}{$x_4$}
   \dSOM{5}{-6}{0}{$x_5$}
   \dSOM{6}{-6}{0}{$x_6$}
   \bSOM{7}{-6}{1}{$x_7$}
   \dSOM{8}{-6}{0}{$x_8$}
   \SOM{9}{-6}{0}{$x_9$} 
\end{tikzpicture}
\caption{\label{fig:broadcasts}Sample broadcasts on the path $P_9$.}
\end{center}
\end{figure}

%

These four different types of broadcasts are illustrated in Figure~\ref{fig:broadcasts}
(broadcast vertices are drawn as black vertices, non-broadcast dominated vertices
as gray vertices, and non dominated vertices as white vertices):
$f_1$ is a dominating broadcast, $f_2$ is an irredundant broadcast
(the $f_2$-broadcast vertices $x_2$, $x_3$, $x_7$ and $x_8$ all have a bordering private $f_2$-neighbor, namely 
$x_1$, $x_4$, $x_6$ and $x_9$, respectively),
$f_3$ is an independent broadcast, and $f_4$ is a packing broadcast.
Moreover, observe the following:
\begin{itemize}
\item $f_1$ is a {\it minimal} dominating broadcast and also a maximal irredundant broadcast.
However, $f_1$ is neither an independent broadcast (the $f_1$-broadcast vertices $x_2$ and $x_3$ are both $f_1$-dominated twice),
nor a packing broadcast ($x_2$ and $x_3$ both hear two $f_1$-broadcast vertices).
\item $f_2$ is a {\it maximal} irredundant broadcast, but is neither
a dominating broadcast ($x_5$ is not $f_2$-dominated), nor
an independent broadcast (the $f_2$-broadcast vertices $x_2$, $x_3$, $x_7$ and $x_8$ are $f_2$-dominated twice), nor
a packing broadcast ($x_2$, $x_3$, $x_7$ and $x_8$ all hear two $f_2$-broadcast vertices).
\item $f_3$ is a {\it maximal} independent broadcast and a dominating broadcast, but is neither
an irredundant broadcast (the $f_3$-broadcast vertex $x_8$ has no bordering private $f_3$-neighbor), nor
a packing broadcast (vertices $x_2$, $x_4$ and $x_6$ are $f_3$-dominated twice).
\item $f_4$ is a {\it maximal} packing broadcast, and also an irredundant broadcast and an independent broadcast.
However, $f_4$ is neither 
a dominating broadcast ($x_9$ is not $f_4$-dominated), nor
a maximal independent broadcast (we can increase the cost of $f_4$ by setting $f_4(x_1)=f_4(x_4)=f_4(x_7)=2$), nor
a maximal irredundant broadcast (we can increase the cost of $f_4$ by setting $f_4(x_7)=2$, so that $x_4$ has still
a bordering private $f_4$-neighbor, namely $x_3$, and $x_9$ is now the bordering private $f_4$-neighbor of $x_7$).
\end{itemize}

Directly from the definitions of these four types of broadcasts, we get the following observations.

\begin{observation}\label{obs:definitions}\mbox{}
\begin{itemize}
\item[{\rm (1)}] Every maximal independent broadcast is a dominating broadcast,
and thus $\gamma_b(G)\le i_b(G)$ for every graph~$G$. 
\item[{\rm (2)}] Every packing broadcast is an independent broadcast, 
and thus $P_b(G)\le \beta_b(G)$ for every graph~$G$.
\item[{\rm (3)}] \label{obs:definitions-item3} Every packing broadcast is an irredundant broadcast, 
and thus $P_b(G)\le I\!R_b(G)$ for every graph~$G$.
\item[{\rm (4)}] Every dominating maximal irredundant broadcast is a minimal dominating broadcast.
\end{itemize}
\end{observation}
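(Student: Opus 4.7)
The four implications are all direct consequences of the definitions, and the parameter inequalities in (1)--(3) follow at once from the stated inclusions (the minimum or maximum cost over a smaller class is no worse than over a larger class). I would organize the proof as four short, independent arguments and expect item~(4) to be the only one requiring genuine care.

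Items (2) and (3) follow immediately from the packing condition. For (2), every broadcast vertex $v \in V^+_f$ hears itself, so $v \in H_f(v)$, and $|H_f(v)| \le 1$ forces $H_f(v) = \{v\}$, which is exactly the independence condition. For (3), I would split on the value of $f(v)$: when $f(v) \ge 2$, any vertex $u$ with $d_G(u,v) = f(v)$ (such a $u$ exists because $f(v) \le e_G(v)$ and a shortest path from $v$ to an eccentric vertex passes through all intermediate distances) satisfies $H_f(u) = \{v\}$ by packing, hence $u \in PB_f(v)$; when $f(v) = 1$, packing gives $v \in PN_f(v)$, and $PB_f(v)$ is non-empty either via $v$ itself (the edge case $PN_f(v) = \{v\}$) or via a private $f$-neighbor at distance~$1$.

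For item (1), I would argue by contrapositive. If $f$ is independent but not dominating, fix a vertex $u$ that is not $f$-dominated; then no $f$-broadcast vertex lies in $N_G[u]$, since otherwise $u$ would hear it (every $f$-broadcast vertex has $f$-value at least~$1$). Setting $f'(u) = 1$ and leaving all other values unchanged therefore preserves independence: $u$ neither hears nor is heard by any other broadcast vertex. Provided $G$ has at least two vertices, $e_G(u) \ge 1$, so $f'$ is a legal broadcast strictly greater than $f$, contradicting the maximality of $f$.

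For item (4), the only one requiring real work, the edge case $f(v)=1$ with $PN_f(v) = \{v\}$ is the main subtlety. I would proceed by contradiction: assume $f$ is a dominating maximal irredundant broadcast and that $g$ is a dominating broadcast with $g \le f$ pointwise and $g \ne f$; fix $v \in V^+_f$ with $g(v) < f(v)$. By irredundance of $f$, the vertex $v$ has a bordering private $f$-neighbor $u$ satisfying $H_f(u) = \{v\}$. Since $g$ is dominating, some $w \in V^+_g$ must $g$-dominate $u$, giving $d_G(u,w) \le g(w) \le f(w)$ and therefore $w \in H_f(u) = \{v\}$, forcing $w = v$. One then derives the contradiction by comparing $d_G(u,v)$ with $g(v)$: if $f(v) \ge 2$ then $d_G(u,v) = f(v) > g(v)$, contradicting $w = v$ dominating $u$ in $g$; if $f(v) = 1$ then $g(v) = 0$, so $v \notin V^+_g$, again contradicting $w = v$. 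I would also note that the maximality of $f$ as an irredundant broadcast is not actually used in this argument---only that $f$ is dominating and irredundant.
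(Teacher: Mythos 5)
Your proposal is correct: the paper states this observation without proof, treating all four items as immediate from the definitions, and your arguments are exactly the routine verifications that claim presupposes (your item~(4) is in substance the ``if'' direction of Proposition~2.1, which the paper cites from Erwin). Your side remarks --- that item~(1) needs $G$ to have at least two vertices, and that the maximality of $f$ is not actually used in item~(4) --- are both accurate.
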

%

\medskip

Broadcast domination was introduced by Erwin~\cite{Erw01} in his Ph.D. thesis, in which he discussed several types of broadcast parameters and the relationships between them.
 Many of these results appeared later in~\cite{Dun}. 
 Since then, several papers have been published on various aspects of broadcasts in graphs, including 
 the algorithmic  complexity  \cite{BR18Alg,HeLo06,HeSae12}, 
 the determination of the broadcast domination number for several classes  of graphs \cite{BHM04,BS11,Da07,H06,MW15,Se08,SK14}, 
 and a characterization of the classes of trees for which the broadcast domination number equals the radius \cite{HM09} or equals the domination number \cite{CHM10,  LM15, MW13}.  
 The upper broadcast domination number is studied in \cite{AFSHL15,BF16,Dun,Erw04,GM17,MR16},  
 the broadcast irredundance number is studied in  \cite{AFSHL15,MR16}, 
 and the broadcast independence  number is studied in \cite{A19,ABS18,ABS19,BR18Ind,BR18Inv,BouZe12}. 
 Broadcast domination and multipacking are considered in \cite{BB18,BBF18,BMY19,BMT13,HM14,MT}.

\medskip

In this paper, we determine all the above defined numbers for paths and cycles.
Ahmadi {\it et al.}  observed in~\cite{AFSHL15} that very little is known concerning these parameters.
We first recall some preliminary results in Section~\ref{sec:preliminary}, and prove our main
results in Section~\ref{sec:paths-cycles}.
These results are summarized in Table~\ref{tab:paths-cycles}.
They confirm the conjectures given in~\cite{AFSHL15} for 
$\gamma_b(P_n)$, $\gamma_b(C_n)$ and $\Gamma_b(P_n)$, but disprove all other conjectures.

\begin{table}
\centering{\small
\begin{tabular}{|c|c|c|c|c|c|c|}
\hline
     & $ \gamma_b=ir_b $ & $i_b$  & $p_b$&   $\Gamma_b=I\!R_b$ & $\beta_b$  &$P_b$  \\

\hline
   $P_n$ & $\left\lceil\frac{n}{3}\right\rceil$   
   & $\begin{array}{cc} \left\lceil\frac{2n}{5}\right\rceil,\\[1ex] n\neq 3 \end{array}$
   & $\begin{array}{cl}
 \frac{n}{4}&  \text{if }   n\equiv 0\pmod 8\\[1ex]
 2\left\lfloor\frac{n}{8}\right\rfloor +1&  \text{if }   n\equiv 1,2,3\pmod 8\\[1ex]
 2\left\lfloor\frac{n}{8}\right\rfloor +2&  \text{otherwise}\\[1ex]
                 \end{array}$ 
   & $\begin{array}{cc} n-1,\\ n\geq 2 \end{array}$
   & $\begin{array}{cc} 2n-4,\\ n\geq 3  \end{array}$
   &  $n-1$\\[4ex]
 & Th.~\ref{th:ir-gamma-path} & Th.~\ref{th:i-path} & Th.~\ref{th:p-path} & Th.~\ref{th:Gamma_IR_paths} & \cite{Erw01} & Th.~\ref{th:P-path-cycle}\\
\hline 
   $C_n$ & $\left\lceil\frac{n}{3}\right\rceil$   
   & $\begin{array}{cc} \left\lceil\frac{2n}{5}\right\rceil,\\[1ex] n\neq 3 \end{array}$ 
   &    $\begin{array}{cl}
 \frac{n}{4}&  \text{if }   n\equiv 0\pmod 8\\[1ex]
 2\left\lfloor\frac{n}{8}\right\rfloor +1&  \text{if }   n\equiv 1,2,3\pmod 8\\[1ex]
 2\left\lfloor\frac{n}{8}\right\rfloor +2&  \text{otherwise}\\[1ex]
                 \end{array}$  
   & $\begin{array}{cc} 2\big(\left\lfloor\frac{n}{2}\right\rfloor -1\big),\\[1ex] n\geq 4 \end{array}$  
   & $\begin{array}{cc} n-2,\\ n\geq 3 \end{array}$
   & $\left\lfloor\frac{n}{2}\right\rfloor$ \\[4ex]
 & Th.~\ref{th:ir-gamma-cycle} & Th.~\ref{th:i-cycle} & Th.~\ref{th:p-cycle} & Th.~\ref{th:IR-egal-Gamma-Cn}, \ref{th:Gamma-Cn} & Th.~\ref{th:beta-cycles} & Th.~\ref{th:P-path-cycle}\\
\hline
\end{tabular}
}
\caption{\label{tab:paths-cycles} Broadcast parameters of paths and cycles}
\end{table}

\section{Preliminary results}\label{sec:preliminary}

The characterization of minimal dominating broadcasts was first given by Erwin in~\cite{Erw04}, and then restated in terms of private borders\footnote{In their paper, Mynhardt and Roux used a slightly different definition of the set $PB_f(v)$ when
$f(v)=1$ and $N_f(v)\neq\{v\}$, by including the vertex $v$ in $PB_f(v)$. 
Moreover, they called the set $PB_f(v)$ the {\it private $f$-boundary} of $v$. 
We here use the term {\it private $f$-border} to avoid confusion between these two definitions.
However, it is easy to check that the private $f$-boundary of $v$ is empty if and only if the private $f$-border
of $v$ is empty, so that Proposition~\ref{prop:Erwin-PB-nonempty} is still valid in our setting.} 
by Mynhardt and Roux in~\cite{MR16}. 

\begin{proposition}[Erwin~\cite{Erw04}, restated in~\cite{MR16}]\label{prop:Erwin-PB-nonempty}  
A dominating broadcast $f$ is a minimal dominating broadcast if and only if $PB_f(v)\neq \emptyset$ for each $v\in V^+_f$.
\end{proposition}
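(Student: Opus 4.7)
The plan is to prove both directions of the equivalence by contradiction, exploiting the fact that if $PB_f(v)=\emptyset$ for some $v\in V^+_f$ one can locally decrease $f$ at $v$ without losing domination, while if $PB_f(v)\neq\emptyset$ the bordering private $f$-neighbor(s) obstruct any such decrease.

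For the forward direction, I would argue contrapositively: assume some $v\in V^+_f$ satisfies $PB_f(v)=\emptyset$, and construct a dominating broadcast $g\neq f$ with $g\le f$ pointwise. The construction splits along the two branches of the definition of $PB_f$. If $f(v)\ge 2$, set $g(v)=f(v)-1$ and $g(w)=f(w)$ otherwise; the verification to do is that every vertex $u$ with $H_f(u)=\{v\}$ satisfies $d_G(u,v)\le f(v)-1$ (since $PB_f(v)=\emptyset$ forbids $d_G(u,v)=f(v)$), so $u$ is still dominated by $v$ under $g$, and all other vertices are dominated unchanged. If $f(v)=1$, then $PB_f(v)=\emptyset$ forces $PN_f(v)=\emptyset$ (the degenerate clause $PB_f(v)=\{v\}$ is precisely the case $PN_f(v)=\{v\}$, which is excluded here); then set $g(v)=0$, and observe that every vertex in $N_f(v)$ is heard by some broadcast vertex other than $v$, so domination persists.

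For the reverse direction, assume $PB_f(v)\neq\emptyset$ for every $v\in V^+_f$, and suppose for contradiction that a dominating broadcast $g\neq f$ satisfies $g\le f$ pointwise. Pick a vertex $v$ with $g(v)<f(v)$; necessarily $v\in V^+_f$, so by hypothesis there exists $u\in PB_f(v)$. In the generic case $u\neq v$, the vertex $u$ sits at distance exactly $f(v)$ from $v$ and at distance strictly greater than $f(w)$ from every other $w\in V^+_f$ (by $u\in PN_f(v)$). Since $g(v)<f(v)$ and $g(w)\le f(w)$ for all $w$, and since $V^+_g\subseteq V^+_f$, no $g$-broadcast vertex dominates $u$, contradicting that $g$ is dominating. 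In the degenerate case $PB_f(v)=\{v\}$ (so $f(v)=1$ and $PN_f(v)=\{v\}$), we get $g(v)=0$, and since $v$ was heard only by itself under $f$, no vertex of $V^+_g$ can $g$-dominate $v$, again a contradiction.

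The main obstacle, though modest, is keeping the boundary case $f(v)=1$ straight: the two-branch definition of $PB_f$ must be unpacked so that "$PB_f(v)=\emptyset$ with $f(v)=1$" is correctly identified with "$PN_f(v)=\emptyset$", rather than silently permitting $v\in PN_f(v)$. Once this is isolated, both implications reduce to a clean case analysis on the $f$-value at the vertex where $f$ and $g$ differ, together with the elementary fact that $g\le f$ implies every $f$-broadcast vertex outside $\{v\}$ has at most its original $f$-reach under $g$.
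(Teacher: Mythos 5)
Your proof is correct. Note that the paper itself gives no proof of this proposition (it is quoted from Erwin and from Mynhardt--Roux), so there is nothing to compare against; your argument is the standard one, and you handle the delicate point correctly, namely that for $f(v)=1$ the condition $PB_f(v)=\emptyset$ really does force $PN_f(v)=\emptyset$ (since $PN_f(v)=\{v\}$ would put you in the first branch of the definition, and any private neighbor other than $v$ would lie at distance exactly $1=f(v)$ and hence in $PB_f(v)$). Both directions then go through exactly as you describe: the local decrease $g(v)=f(v)-1$ (or $g(v)=0$) preserves domination when the private border is empty, and a bordering private neighbor $u$ with $d_G(u,v)=f(v)$ (or $u=v$ in the degenerate case) cannot be dominated by any $g\le f$ with $g(v)<f(v)$, using $V^+_g\subseteq V^+_f$.
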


Dunbar {\it et al.} proved in~\cite{Dun} the following bound on the upper broadcast domination number 
of graphs.

\begin{theorem}[Dunbar {\it et al.}~\cite{Dun}]\label{th:Dunbar-Gamma-m}
For every graph $G$ with size $m$, $\Gamma_b(G) \leq m$.
Moreover, $\Gamma_b(G)=m$  if and only if $G$ is a nontrivial star or path. 
\end{theorem}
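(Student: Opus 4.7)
My plan is to establish the upper bound $\Gamma_b(G)\le m$ by an edge-charging argument and then handle both directions of the equality characterization. For the bound, fix a minimal dominating broadcast $f$ on $G$; by Proposition~\ref{prop:Erwin-PB-nonempty}, every $v\in V^+_f$ has a nonempty private $f$-border. I will assign to each $v\in V^+_f$ a set $E_v\subseteq E(G)$ of exactly $f(v)$ edges, and then prove the family $\{E_v\}_{v\in V^+_f}$ is pairwise edge-disjoint, which will yield
\[
\sigma(f)=\sum_{v\in V^+_f}f(v)=\sum_{v\in V^+_f}|E_v|\le m.
\]
The set $E_v$ is defined in two cases. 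If $PB_f(v)$ contains a vertex $u_v\neq v$ (equivalently $f(v)\ge 2$, or $f(v)=1$ with $PN_f(v)\neq\{v\}$), then $d_G(v,u_v)=f(v)\ge 1$ and I set $E_v$ to be the edges of a fixed $(v,u_v)$-geodesic $P_v$. In the remaining degenerate case $f(v)=1$ and $PN_f(v)=\{v\}$, I pick any neighbor $w_v$ of $v$ (which exists since $e_G(v)\ge 1$) and set $E_v=\{vw_v\}$.

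The technical heart is pairwise disjointness. For two geodesics $P_v,P_w$ sharing an edge $xy$, write $a=d_G(v,x)$ and set $b=d_G(w,x)$ or $b=d_G(w,y)$ according as the two geodesics traverse $xy$ in the same or opposite sense. In the same-direction subcase, the triangle inequality gives $d_G(w,u_v)\le b-a+f(v)$ and $d_G(v,u_w)\le a-b+f(w)$, so
\[
d_G(v,u_w)+d_G(w,u_v)\le f(v)+f(w),
\]
whereas the privacy of $u_v$ and $u_w$ forces $d_G(v,u_w)>f(v)$ and $d_G(w,u_v)>f(w)$, a contradiction; the opposite-direction subcase sharpens the bound to $f(v)+f(w)-2$ and yields the same contradiction. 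For a degenerate edge $\{vw_v\}$ the condition $PN_f(v)=\{v\}$ means $v\notin N_f(u)$ for every $u\in V^+_f\setminus\{v\}$, so $v$ cannot be a vertex of any $P_u$, which rules out $vw_v$ being one of its edges; and two degenerate vertices cannot share an edge, because if $v,v'$ were adjacent with $f(v)=f(v')=1$ then each would dominate the other, violating $PN_f(v)=\{v\}$.

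For the equality, the direction $(\Leftarrow)$ is by construction: on $P_n$ with $n\ge 2$, setting $f(v_1)=n-1$ at an endpoint yields a minimal dominating broadcast of cost $n-1=m$ (any strict decrease leaves the opposite endpoint undominated); on $K_{1,n-1}$ with $n\ge 3$, setting $f(\ell)=1$ on every leaf yields a minimal dominating broadcast of cost $n-1=m$, each leaf being its own private border while the center is dominated by every leaf. The direction $(\Rightarrow)$ is the main obstacle. Assuming $\sigma(f)=m$, the charging becomes an equality and the $E_v$ partition $E(G)$, so every edge of $G$ must be absorbed into exactly one geodesic $P_v$ or one degenerate edge $vw_v$. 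The plan is to show by contradiction that this rigidity forces $G$ to be a nontrivial star or path: if $G$ contains a cycle, then a shortest cycle provides an alternative path that prevents some of its edges from simultaneously lying on a geodesic of the stipulated length and respecting the privacy of its endpoint; and if $G$ is a tree that is neither a star nor a path, one locates a vertex of degree at least three whose distance to some leaf is at least two and argues that a branch there cannot be absorbed into any $E_v$ without either forcing the broadcast not to be minimal or violating the private-border condition, yielding $\sigma(f)<m$.
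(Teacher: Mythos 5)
This theorem is imported from Dunbar {\it et al.}~\cite{Dun}; the paper states it with a citation and gives no proof of its own, so there is nothing in-paper to compare against and your attempt has to be judged on its own terms. Your edge-charging proof of the inequality $\Gamma_b(G)\le m$ is correct and is essentially the standard argument: Proposition~\ref{prop:Erwin-PB-nonempty} makes the assignment of a geodesic $P_v$ from each broadcast vertex to a bordering private neighbour well defined (with the degenerate case $f(v)=1$, $PN_f(v)=\{v\}$ absorbed by an arbitrary incident edge), and playing the privacy inequalities $d_G(w,u_v)>f(w)$ and $d_G(v,u_w)>f(v)$ against the triangle-inequality bound $d_G(w,u_v)+d_G(v,u_w)\le f(v)+f(w)$ does force edge-disjointness; the degenerate cases are also handled correctly, since $H_f(v)=\{v\}$ prevents $v$ from lying on any other vertex's geodesic. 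The two constructions for the backward implication (cost $n-1$ broadcast from an endpoint of $P_n$, all-ones on the leaves of a star) are likewise fine, and together with the bound they are all that Theorem~\ref{th:Gamma_IR_paths} actually needs from this statement.

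The genuine gap is the forward implication of the equality: you announce a \emph{plan} to derive a contradiction from the rigidity of the edge partition when $G$ contains a cycle, or is a tree with a vertex of degree at least three, but no argument is actually carried out. This direction is not a formality. In particular, the cycle case cannot be dismissed by saying that a shortest cycle ``prevents some of its edges from lying on a geodesic of the stipulated length'': edges of cycles lie on geodesics of every admissible length (in $C_6$, for instance, every path of length $3$ is a geodesic), so the obstruction must come from a careful interaction between the partition $\{E_v\}$ and the privacy constraints, typically via a case analysis on where the broadcast vertices and their private borders sit relative to the cycle; the tree case similarly requires an explicit analysis at a branch vertex showing that one of the incident branches cannot be absorbed without killing minimality. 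As written, only $\Gamma_b(G)\le m$ and the ``if'' half of the characterization are proved; the ``only if'' half remains an unverified sketch.
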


This upper bound was later improved in~\cite{BF16}.

\begin{theorem}[Bouchemakh and Fergani~\cite{BF16}] \label{th:Bouchemakh-Fergani}
If $G$ is a graph of order $n$ with minimum degree $\delta(G)$, then $\Gamma_b (G) \leq n-\delta(G)$, and this bound is sharp.
\end{theorem}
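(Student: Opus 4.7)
The plan is to fix an extremal witness $f$, that is, a minimal dominating broadcast with $\sigma(f)=\Gamma_b(G)$, and show $\sigma(f)\le n-\delta(G)$. The backbone of the argument is the elementary eccentricity inequality
$$e_G(v) \le n - \deg_G(v) \quad \text{for every vertex } v\in V,$$
which I would prove first: any $v$-geodesic of length $e_G(v)$ uses $e_G(v)+1$ distinct vertices, and the $\deg_G(v)-1$ neighbors of $v$ not lying on that geodesic are additional distinct vertices, so $n\ge e_G(v)+\deg_G(v)$. In particular $e_G(v)\le n-\delta(G)$ for all $v\in V$.

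The easy case is $|V_f^+|=1$: if $V_f^+=\{v\}$ then $\sigma(f)=f(v)\le e_G(v)\le n-\deg_G(v)\le n-\delta(G)$, as required.

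The hard case is $|V_f^+|\ge 2$, and this is where I would use Proposition~\ref{prop:Erwin-PB-nonempty} to extract, for each $v\in V_f^+$, a witness $u_v\in PB_f(v)$. Two structural facts fall out immediately: (i) the $u_v$'s are pairwise distinct, since a private neighbor is dominated by a single broadcast vertex; and (ii) if $u_v=v'$ for some $v'\in V_f^+\setminus\{v\}$ then $v'$ would be dominated by both $v$ and $v'$, contradicting $u_v\in PN_f(v)$. Writing $S:=\{v\in V_f^+:u_v=v\}$ (which forces $f(v)=1$) and $T:=V_f^+\setminus S$, we thus get $|V_f^+|+|T|\le n$ and
$$
\sigma(f) \;=\; |S| \;+\; \sum_{v\in T} f(v) \;=\; |S| \;+\; \sum_{v\in T} d_G(v,u_v),
$$
with $\{v,u_v : v\in T\}$ a set of $2|T|$ distinct vertices all lying in $V_f^+\cup (V\setminus V_f^+)$ and with only the $V_f^+$-endpoints in $V_f^+$.

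The main obstacle, and the step I expect to be the most delicate, is combining these geodesic data with the minimum-degree information to squeeze out the extra $\delta$ in the bound (rather than just $n-1$ from Theorem~\ref{th:Dunbar-Gamma-m}). My plan is to fix a vertex $v^\ast$ with $\deg_G(v^\ast)=\delta(G)$ and, by a rerouting argument on the $v$-to-$u_v$ geodesics $\Pi_v$ and/or by re-selecting $u_v$ within $PB_f(v)$, identify $\delta-1$ neighbors of $v^\ast$ that are \emph{not consumed} by any $\Pi_v$; together with the vertices already accounted for on the geodesics and their endpoints, this should yield $\sigma(f)+\delta\le n$. Sharpness of the inequality is then witnessed by the path $P_n$, for which $\delta=1$ and $\Gamma_b(P_n)=n-1$ (the single broadcast $f(v_1)=n-1$ is minimal dominating with $PB_f(v_1)=\{v_n\}$), establishing the ``sharp'' half of the theorem.
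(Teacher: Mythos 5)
The paper itself gives no proof of this statement; it is imported verbatim from Bouchemakh and Fergani~\cite{BF16}, so the only question is whether your argument stands on its own. The preliminaries do: the bound $e_G(v)\le n-\deg_G(v)$, the case $|V_f^+|=1$, the distinctness of the witnesses $u_v$, the fact that $u_v\notin V_f^+\setminus\{v\}$, and the sharpness example $P_n$ are all correct. But the two steps that actually produce the bound are left as intentions, and one of them is aimed at the wrong vertex. First, your counting only ever uses the endpoints $v$ and $u_v$, which gives $|V_f^+|+|T|\le n$ and nothing close to $\sigma(f)\le n-\delta$; to make the geodesics pay you must show that the vertex sets $V(\Pi_v)$, $v\in T$, are pairwise disjoint and disjoint from $S$, so that $n\ge |S|+\sum_{v\in T}\big(f(v)+1\big)=\sigma(f)+|T|$. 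This disjointness is true but is exactly the kind of claim that needs proof in a general graph (it follows from a triangle-inequality computation: a common vertex $c$ of $\Pi_v$ and $\Pi_{v'}$ with $d_G(v,c)=i$ and $d_G(v',c)=j$ would give $d_G(v',u_v)\le j+f(v)-i$ and $d_G(v,u_{v'})\le i+f(v')-j$, and since $u_v$ does not hear $v'$ and $u_{v'}$ does not hear $v$, summing the two resulting strict inequalities yields $f(v)+f(v')>f(v)+f(v')$). You never address it.

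Second, the step you yourself flag as delicate is not an argument at all (``should yield''), and the anchor you propose, a minimum-degree vertex $v^\ast$, is the wrong one: if $v^\ast$ is an interior vertex of some $\Pi_v$, two of its $\delta$ neighbors already lie on that geodesic, so you cannot hope for $\delta-1$ unconsumed neighbors, and rerouting is not always available (if $v^\ast$ is a cut vertex separating $v$ from $u_v$, every $(v,u_v)$-geodesic passes through it). What works is to anchor at a broadcast vertex $v\in T$ with $f(v)$ maximum: any neighbor $x$ of $v$ other than its successor on $\Pi_v$ hears $v$, hence $x\notin S$, and if $x\in V(\Pi_{v'})$ for some $v'\ne v$ then $f(v)<d_G(v,u_{v'})\le 1+f(v')-d_G(v',x)$ forces $d_G(v',x)<1+f(v')-f(v)\le 1$, i.e.\ $x=v'$; so at least $\deg_G(v)-|T|\ge\delta-|T|$ neighbors of $v$ are vertices not yet counted, and together with the $\sigma(f)+|T|$ vertices from the disjoint geodesics this gives $\sigma(f)+\delta\le n$ whenever $T\ne\emptyset$ (if $T=\emptyset$, then $V_f^+$ is an independent set and $\sigma(f)=|V_f^+|\le n-\delta$ directly). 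As submitted, your proposal establishes neither of these two points, so it is not yet a proof.
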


From Proposition~\ref{prop:Erwin-PB-nonempty} and the  definition of a maximal irredundant broadcast, 
one gets the following result.

\begin{corollary}[Ahmadi {\it et al.}~\cite{AFSHL15}] \label{cor:Ahmadi-MinDom-is-MaxIR}
Every minimal dominating broadcast is a maximal irredundant broadcast.
\end{corollary}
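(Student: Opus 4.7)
The plan is to establish the corollary in two sub-steps: first, that every minimal dominating broadcast $f$ is irredundant, and second, that $f$ is maximal among irredundant broadcasts. The first sub-step is immediate from Proposition~\ref{prop:Erwin-PB-nonempty}, which gives $PB_f(v) \neq \emptyset$ for every $v \in V^+_f$, i.e., the defining condition of irredundance.

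For the second sub-step, I would argue by contradiction. Assume that some irredundant broadcast $g$ satisfies $g \geq f$ pointwise and $g \neq f$, and pick a vertex $u$ with $g(u) > f(u)$. Since $g(u) \geq 1$ we have $u \in V^+_g$, so by irredundance of $g$ there exists some $w \in PB_g(u)$. The driving observation is that $f$, being dominating, must $f$-dominate $w$ through some $v \in V^+_f$; since $V^+_f \subseteq V^+_g$ and $f(v) \leq g(v)$, the same vertex $v$ also $g$-dominates $w$. Hence $v \in H_g(w) = \{u\}$, forcing $v = u$.

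Contradicting this equality then splits into two cases. If $f(u) = 0$, then $u \notin V^+_f$ while $v \in V^+_f$, an immediate contradiction. If $f(u) \geq 1$, then $g(u) \geq 2$, so by the definition of $PB_g(u)$ the vertex $w$ lies at distance exactly $g(u)$ from $u$; combined with $v = u$ this yields $g(u) = d_G(w,u) \leq f(u) < g(u)$, a contradiction. A final subtlety concerns the degenerate branch $w = u$ in the definition of $PB_g(u)$, which only arises when $g(u) = 1$ and $PN_g(u) = \{u\}$; this branch can be excluded upfront, since one must be in the case $f(u) = 0$, and then the $f$-broadcast vertex dominating $u$ also $g$-dominates $u$, so $H_g(u) \supsetneq \{u\}$, precluding $PN_g(u) = \{u\}$.

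The proof is more bookkeeping than insight, and I expect the main obstacle to be navigating the two-branch definition of $PB_g$ cleanly so that every configuration of $g(u)$ and $f(u)$ is covered by the same underlying argument: a bordering private $g$-neighbor of $u$ must be $f$-dominated, and any such $f$-domination lifts to $g$-domination, contradicting $H_g(w) = \{u\}$.
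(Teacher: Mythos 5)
Your proof is correct. The paper states this corollary without an explicit proof, deriving it from Proposition~\ref{prop:Erwin-PB-nonempty} together with ``the definition of a maximal irredundant broadcast''; your argument supplies precisely the maximality step that is left implicit there --- namely that any irredundant $g\neq f$ with $g\geq f$ would require a bordering private $g$-neighbor $w$ of a vertex $u$ with $g(u)>f(u)$, while the fact that $f$ already dominates $w$ (and $f$-domination persists under $g$) forces the contradiction --- and it correctly disposes of the degenerate branch of $PB_g(u)$.
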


Since the characteristic function of a minimal dominating set in a graph is a minimal dominating broadcast, Corollary~\ref{cor:Ahmadi-MinDom-is-MaxIR} implies the following chain of inequalities.

\begin{corollary}[Ahmadi {\it et al.}~\cite{AFSHL15}] \label{cor:Ahmadi-inequalities}
For every graph $G$, 
$$ir_b(G)\leq \gamma_b(G) \leq \gamma(G) \leq \Gamma(G) \leq \Gamma_b(G) \leq I\!R_b(G).$$
\end{corollary}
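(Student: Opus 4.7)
The plan is to verify the five inequalities in the chain one by one, relying on Corollary~\ref{cor:Ahmadi-MinDom-is-MaxIR} for the two outermost inequalities and on a standard set-versus-broadcast correspondence for the three middle ones.

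For the outermost inequalities $ir_b(G)\le\gamma_b(G)$ and $\Gamma_b(G)\le I\!R_b(G)$, I would take an optimal broadcast $f$ achieving $\gamma_b(G)$ (respectively $\Gamma_b(G)$) and argue that it is a minimal dominating broadcast. In the $\gamma_b$ case, minimality is immediate: if some $g\neq f$ with $g\le f$ pointwise were still dominating, then $\sigma(g)<\sigma(f)=\gamma_b(G)$, contradicting optimality. In the $\Gamma_b$ case, $f$ is minimal by definition of $\Gamma_b$. In both cases, Corollary~\ref{cor:Ahmadi-MinDom-is-MaxIR} then promotes $f$ to a maximal irredundant broadcast, so $ir_b(G)\le\sigma(f)=\gamma_b(G)$ and $\Gamma_b(G)=\sigma(f)\le I\!R_b(G)$.

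For the three middle inequalities, I would exploit the fact, already noted in the text just before the corollary, that the characteristic function $\chi_D$ of a dominating set $D$ (value $1$ on $D$ and $0$ elsewhere) is a dominating broadcast of cost $|D|$. Starting from a minimum dominating set $D$ with $|D|=\gamma(G)$, this gives $\gamma_b(G)\le\sigma(\chi_D)=\gamma(G)$. The inequality $\gamma(G)\le\Gamma(G)$ is immediate since both parameters are extrema of the cardinality function over minimal dominating sets. Finally, starting from a maximum minimal dominating set $D'$ with $|D'|=\Gamma(G)$, one checks that $\chi_{D'}$ is not merely a dominating broadcast but actually a minimal one: any broadcast $g\le\chi_{D'}$, $g\neq\chi_{D'}$, must vanish on some $v\in D'$, so $V^+_g\subseteq D'\setminus\{v\}$; by minimality of $D'$ as a dominating set, $D'\setminus\{v\}$ fails to dominate $G$, hence so does $g$. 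Thus $\chi_{D'}$ is a minimal dominating broadcast of cost $\Gamma(G)$, giving $\Gamma(G)\le\Gamma_b(G)$.

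There is no real obstacle here; the only point requiring a brief check is the minimality of $\chi_{D'}$ as a broadcast (as opposed to mere minimality of $D'$ as a dominating set), and this reduces cleanly to the set-theoretic minimality as sketched above. The whole argument is essentially a reorganization of Corollary~\ref{cor:Ahmadi-MinDom-is-MaxIR} together with the embedding of (minimal) dominating sets into (minimal) dominating broadcasts via their characteristic functions.
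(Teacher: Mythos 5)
Your proposal is correct and follows essentially the same route as the paper, which justifies the chain in one line by observing that the characteristic function of a minimal dominating set is a minimal dominating broadcast and then invoking Corollary~\ref{cor:Ahmadi-MinDom-is-MaxIR}; you have simply spelled out the five inequalities (including the easy checks that a $\gamma_b$-optimal broadcast is minimal and that $\chi_{D'}$ is a minimal dominating broadcast) that the paper leaves implicit.
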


Moreover, Dunbar {\it et al.}~\cite{Dun} proved the following.

\begin{proposition}[Dunbar {\it et al.}~\cite{Dun}]\label{prop:Dunbar-inequalities}
For every graph $G$, 
$$\gamma_b(G) \leq i_b(G) \leq \beta_b(G) \geq i(G) \geq \gamma(G) \geq \gamma_b(G).$$
However, $\beta_b(G)$ and $\Gamma_b(G)$ are in general incomparable.
\end{proposition}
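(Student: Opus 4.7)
The plan is to verify each of the five inequalities in the chain separately, each being a direct consequence of Observation~\ref{obs:definitions} or of the fact that the characteristic function of a set of the appropriate type is a broadcast of the same type; and then to settle the incomparability by producing two witness graphs.

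For the chain, I would argue as follows. The inequality $\gamma_b(G)\leq i_b(G)$ is immediate from Observation~\ref{obs:definitions}(1), since every maximal independent broadcast is in particular a dominating broadcast, so the infimum of $\sigma(f)$ over that sub-class is at least $\gamma_b(G)$. For $i_b(G)\leq \beta_b(G)$, note that any independent broadcast of maximum cost is automatically maximal: otherwise one could strictly enlarge $f(v)$ at some vertex without destroying independence, contradicting maximality of $\sigma(f)$; hence $\beta_b(G)$ is realized by a maximal independent broadcast and in particular $\beta_b(G)\geq i_b(G)$. The inequality $\beta_b(G)\geq i(G)$ holds because the characteristic function of a minimum maximal independent set $S$ is an independent broadcast (no two vertices of $S$ are within distance~$1$ of each other) of cost $|S|=i(G)$. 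The inequality $i(G)\geq \gamma(G)$ is the classical fact that every maximal independent set is a dominating set, since any vertex outside it must have a neighbor inside (else it could be added, contradicting maximality). Finally, $\gamma(G)\geq \gamma_b(G)$ follows since the characteristic function of a minimum dominating set is a dominating broadcast of cost $\gamma(G)$.

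The remaining assertion, that $\beta_b(G)$ and $\Gamma_b(G)$ are in general incomparable, requires exhibiting graphs witnessing strict inequality in each direction. For $\beta_b(G)>\Gamma_b(G)$, the cleanest examples are the paths $P_n$ with $n\geq 4$: Theorem~\ref{th:Dunbar-Gamma-m} yields $\Gamma_b(P_n)=n-1$, while by Erwin~\cite{Erw01} we have $\beta_b(P_n)=2n-4$, so the gap is strict as soon as $n\geq 4$. The harder and main step is to construct a graph with $\Gamma_b(G)>\beta_b(G)$: here one needs a graph in which some minimal dominating broadcast accumulates a large cost through broadcast radii greater than $1$, whereas the independence constraint tightly caps the cost of any independent broadcast. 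I would follow the construction of Dunbar \emph{et al.} in~\cite{Dun}, which supplies such a graph; the verification — checking that an appropriate minimal dominating broadcast is indeed minimal while every independent broadcast on the same graph is strictly cheaper — is where I expect the main obstacle to lie.
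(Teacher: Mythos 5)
The paper does not prove this proposition at all: it is quoted verbatim from Dunbar \emph{et al.}~\cite{Dun}, so there is no in-paper argument to compare against. Judged on its own, your treatment of the five-term chain is correct and complete. Each step is the right one: Observation~\ref{obs:definitions}(1) gives $\gamma_b\le i_b$; your observation that a maximum-cost independent broadcast is automatically maximal (otherwise some $f$-value could be increased, raising the cost) correctly yields $i_b\le\beta_b$; the characteristic function of a minimum maximal independent set is an independent broadcast of cost $i(G)$, giving $\beta_b\ge i$; every maximal independent set dominates, giving $i\ge\gamma$; and the characteristic function of a minimum dominating set is a dominating broadcast, giving $\gamma\ge\gamma_b$. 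The witness for $\beta_b(G)>\Gamma_b(G)$ is also fine: $\beta_b(P_n)=2n-4$ versus $\Gamma_b(P_n)=n-1$ separates strictly for all $n\ge 4$.

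The one genuine gap is the direction $\Gamma_b(G)>\beta_b(G)$, where you defer entirely to an unspecified construction of Dunbar \emph{et al.} and explicitly flag the verification as the part you have not done; as written, the incomparability claim is therefore only half established. Since this is a cited result, leaning on the reference is defensible, but if you want to close the gap yourself, a clean route is to restrict to graphs of diameter~$2$: there, any independent broadcast either has a single broadcast vertex of value at most~$2$, or has all values equal to~$1$ on an independent set, so $\beta_b(G)=\max\{2,\beta_0(G)\}$; meanwhile $\Gamma_b(G)\ge\Gamma(G)$ by Corollary~\ref{cor:Ahmadi-inequalities}. Hence any diameter-$2$ graph with $\Gamma(G)>\beta_0(G)$ (such graphs are known to exist) witnesses $\Gamma_b(G)>\beta_b(G)$. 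Supplying one explicit such graph would make your proof self-contained.
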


It is worth pointing out that the difference $I\!R_b(G)-\Gamma_b(G)$ can be arbitrarily large. 
Indeed, Mynhardt and Roux~\cite{MR16} constructed a family of graphs $\{G_r\}_{r\geq 3}$, 
where each $G_r$ is obtained by joining two copies of $K_{r+1}$ by $r$ independent edges, and proved the relation $\Gamma_b(G_r) = 3 \leq I\!R_b(G_r) = r$ for every $r \geq 3$. 
Nevertheless, we may have $I\!R_b(G)=\Gamma_b(G)$, as we will prove in Subsection~\ref{subsec:IR-Gamma} 
when $G$ is a path or a cycle.

Dunbar {\it et al.} observed in~\cite{Dun} that, for any graph $G$, neither $P(G)$ nor $p(G)$ is comparable with $p_b(G)$,
while we have $p(G)\leq P(G) \leq P_b(G)$ and   $p_b(G)\leq {\mbox{ rad}}(G)\leq {\mbox{ diam}}(G) \leq P_b(G) \leq \beta_b(G)$. 
For paths and cycles, we will prove in Section~\ref{sec:paths-cycles} that the lower bound $\diam(G)$ for $P_b(G)$ is achieved, while the difference between $\rad(G)$ and $p_b(G)$ can be arbitrarily large.

\section{Broadcast numbers of paths and cycles}\label{sec:paths-cycles}

As mentioned by Ahmadi {\it et al.} in~\cite{AFSHL15}, it is quite surprising that the values of several broadcast parameters have not been determined yet for paths or cycles. Moreover, in the same paper, they conjecture the values of these parameters. 
In this section, we will determine the exact values of these parameters, which in some cases, but not all, correspond to their
conjecture.

Throughout this section, we will denote by
$P_n=x_1x_2\dots x_n$, $n\geq 2$, the path of order $n$,
and by $C_n=x_0x_1\dots x_{n-1}$, $n\geq 3$, the cycle of order $n$.
Moreover, we assume throughout this section that
subscripts of vertices of $C_n$ are taken modulo $n$,
and that the vertices $x_1,\dots,x_n$ of $P_n$ are ``ordered''
from left to right, so that by the {\it leftmost} (resp. the {\it rightmost})
vertex in $P_n$ satisfying any property, we mean the vertex with minimum (resp. maximum)
subscript satisfying this property.

\subsection{Upper broadcast domination number and upper broadcast irredundance number}\label{subsec:IR-Gamma}

We first consider the case of paths.

\begin{theorem}\label{th:Gamma_IR_paths}
For every integer $n\geq 2$, $\Gamma_b(P_n)= I\!R_b(P_n)= \diam(P_n)=n-1$.
\end{theorem}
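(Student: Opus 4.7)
The plan is to combine the general bounds established in Section~\ref{sec:preliminary} with a direct edge-charging argument on $P_n$. Since $P_n$ is a nontrivial path, Theorem~\ref{th:Dunbar-Gamma-m} gives $\Gamma_b(P_n) = n-1$, and Corollary~\ref{cor:Ahmadi-inequalities} gives $\Gamma_b(P_n) \leq I\!R_b(P_n)$. Thus $I\!R_b(P_n) \geq n-1 = \diam(P_n)$, and it remains to prove the matching upper bound $I\!R_b(P_n) \leq n-1$; all three equalities of the statement then follow.

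For the upper bound I would fix an arbitrary irredundant broadcast $f$ on $P_n$ with broadcast vertices $v_1,\dots,v_k$ indexed left to right so that $v_i = x_{a_i}$ with $a_1 < a_2 < \cdots < a_k$, set $r_i = f(v_i) \geq 1$, and use irredundance to pick a bordering private $f$-neighbor $p_i \in PB_f(v_i)$ for each $i$. The aim is to assign to each $v_i$ a set $E_i$ of exactly $r_i$ edges of $P_n$ so that $E_1,\ldots,E_k$ are pairwise disjoint; this yields
\[
\sigma(f) \;=\; \sum_{i=1}^k r_i \;=\; \sum_{i=1}^k |E_i| \;\leq\; n-1.
\]

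When $p_i \neq v_i$, I would take $E_i$ to be the edge set of the unique $(v_i,p_i)$-subpath of $P_n$, which has length exactly $r_i$. To prove that two such ``path-type'' sets $E_i$ and $E_j$ are disjoint (with $i < j$), I would run a case analysis on the two sides $d_i,d_j \in \{L,R\}$ on which $p_i,p_j$ sit with respect to $v_i,v_j$. In each of the four combinations, the assumption $E_i \cap E_j \neq \emptyset$ would force two inequalities---one expressing that $v_i$ does not dominate $p_j$, the other that $v_j$ does not dominate $p_i$---that together give both $a_j - a_i > r_i + r_j$ and $a_j - a_i < r_i + r_j$ (or a similar immediate contradiction), thereby contradicting the privacy of $p_i$ or $p_j$. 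The most delicate combination is $d_i = R$, $d_j = L$, which requires splitting further according to whether $p_i \geq a_j$ and $p_j \leq a_i$.

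The main obstacle is the exceptional scenario $r_i = 1$ with $p_i = v_i$ (call such a $v_i$ \emph{isolated}), in which the $(v_i,p_i)$-subpath carries zero edges while $r_i = 1$ still contributes $1$ to the cost. The key observation is that an isolated $v_i$ is not dominated by any other broadcast vertex, which forces $a_i$ to lie outside every edge interval of a path-type vertex $v_j$; consequently, every edge of $P_n$ incident to $v_i$ is free of such an $E_j$. I would therefore assign $E_i$ to be one such incident edge (say, the one to the left of $v_i$ when it exists, otherwise the one to the right). Since two distinct isolated vertices must, by the same privacy argument, be at distance at least two in $P_n$, their chosen edges cannot coincide, and carrying out these choices simultaneously completes the proof.
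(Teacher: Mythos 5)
Your proposal is correct and follows essentially the same route as the paper: both proofs get $\Gamma_b(P_n)=n-1$ from Theorem~\ref{th:Dunbar-Gamma-m}, the lower bound on $I\!R_b(P_n)$ from Corollary~\ref{cor:Ahmadi-inequalities}, and the upper bound by packing, for each broadcast vertex, the geodesic to a bordering private $f$-neighbor disjointly into $P_n$, with the self-private vertices ($f$-value $1$, $PB_f(v)=\{v\}$) treated as a special case. The only difference is bookkeeping: the paper counts the vertices of these geodesics (obtaining $I\!R_b(P_n)\le n-t+t'$ and then disposing separately of the case where every broadcast vertex is its own bordering private neighbor), whereas you count edges and assign one incident edge to each self-private vertex, which yields $\sigma(f)\le n-1$ in one step.
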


\begin{proof} 
Theorem~\ref{th:Dunbar-Gamma-m} directly gives $\Gamma_b(P_n)= n-1$. 
By Corollary~\ref{cor:Ahmadi-inequalities}, 
we have  $n-1=\Gamma_b(P_n)\leq I\!R_b(P_n)$. 
We now prove the opposite inequality. Let $f$ be an irredundant broadcast on $P_n$,  and let 
$V^+_f= \{x_{i_1},\dots,x_{i_t}\}$, $i_1<\cdots < i_t$, $t\geq 1$.
From the definition of an irredundant broadcast, we get that for every vertex $x_{i_j}\in V^+_f$ with $f(x_{i_j})\geq 2$, there exists a vertex $x_{i_j}^p$ such that $H_f(x_{i_j}^p) = \{x_{i_j}\}$ and $d_{P_n}(x_{i_j},x_{i_j}^p) = f(x_{i_j})$ and, for every vertex $x_{i_j}\in V^+_f$ with $f(x_{i_j})=1$, there exists a vertex $x_{i_j}^p \in N_{P_n}[x_{i_j}]$ such that $H_f(x_{i_j}^p) = \{x_{i_j}\}$. 

Let $t'$, $0\le t'\le t$, denote the number of $f$-broadcast vertices $x_{i_j}$ that are their own bordering private $f$-neighbor, 
that is, such that $x_{i_j}^p=x_{i_j}$ (which implies $f(x_{i_j})=1$), and suppose that these vertices are $\{x_{i_1},\dots,x_{i_{t'}}\}$.
We thus have
$$|V(P_n)| \geq t' + \sum_{j=t'+1}^t \big(d_{P_n}(x_{i_j},x_{i_j}^p)+1\big) 
\geq \sum_{j=1}^{t'} f(x_{i_j}) + \sum_{j=t'+1}^t \big(f(x_{i_j})+1\big) = I\!R_b(P_n) + t-t',$$  
which gives $I\!R_b(P_n)\leq n-t+t'$. 
If $t'<t$, then $I\!R_b(P_n)\leq n-1$ and we are done.
Otherwise, every $f$-broadcast vertex is its own bordering private $f$-neighbor, which implies that $V^+_f$ is
either the set of all vertices with even subscript, or the set of all vertices with odd subscript.
In both cases, we get 
$\sigma(f)=I\!R_b(P_n) \le \left\lceil\frac{n}{2}\right\rceil\le n-1$, as required.
\end{proof}

We now consider the case of cycles. For that, we will use the following lemma.

\begin{lemma}\label{lem:IR-pas-deux-H-vides}
Let $f$ be an $I\!R_b$-broadcast on $C_n$. 
If $H_f(x_{i})= \emptyset$ for some vertex $x_i$, then $H_f(x_{i-1})\neq \emptyset$ and $H_f(x_{i+1})\neq \emptyset$.
\end{lemma}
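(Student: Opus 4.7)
The plan is to argue by contradiction: suppose some vertex $x_i$ satisfies $H_f(x_i)=\emptyset$ and, say, $H_f(x_{i-1})=\emptyset$ as well, and then construct an irredundant broadcast $g$ on $C_n$ with $\sigma(g)>\sigma(f)$, contradicting the fact that $f$ attains $I\!R_b(C_n)$.

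The natural first move is to plant a value-$1$ broadcast at $x_i$: let $g(x_i):=1$ and $g(v):=f(v)$ otherwise. Since $e_{C_n}(x_i)\ge 1$, this is a valid broadcast of cost $\sigma(f)+1$. The new broadcaster $x_i$ has $x_{i-1}$ as a bordering private $g$-neighbor, because $H_f(x_{i-1})=\emptyset$ forces $H_g(x_{i-1})=\{x_i\}$. For any $v\in V^+_f$, pick a $u\in PB_f(v)$: as long as $u\notin N_{C_n}[x_i]=\{x_{i-1},x_i,x_{i+1}\}$, the vertex $u$ still hears only $v$ in $g$ and therefore lies in $PB_g(v)$. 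Since $x_{i-1}$ and $x_i$ are undominated they never belong to any $PB_f(v)$, so the sole obstruction is the possible existence of a broadcaster $v_0$ with $PB_f(v_0)=\{x_{i+1}\}$; if no such $v_0$ exists then $g$ is irredundant and we are done.

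Otherwise, the fact that $x_{i+1}$ can be a private $f$-neighbor of at most one broadcaster makes $v_0$ unique, and a short positional analysis forces $v_0=x_{i+1+k}$ with $k:=f(v_0)\ge 1$ and $k<e_{C_n}(v_0)=\lfloor n/2\rfloor$ (else $v_0$ would dominate the whole cycle, in particular $x_i$). I then repair the construction by simultaneously setting $g(v_0):=k+1$, a still-legal value, giving $\sigma(g)=\sigma(f)+2$. The only new candidate bordering private $g$-neighbor for $v_0$ is the vertex $x_{i+2+2k}$, at distance $k+1$ from $v_0$ on the side opposite $x_i$; it does the job precisely when $x_{i+2+2k}$ is undominated in $f$, in which case the argument closes.

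The hard part will be the residual case where $x_{i+2+2k}$ is dominated in $f$ by another broadcaster. The plan there is to exploit that $x_{i+1+2k}\notin PB_f(v_0)=\{x_{i+1}\}$ forces a further broadcaster $v_2$ whose range already reaches $x_{i+1+2k}$ (and typically $x_{i+2+2k}$), and then to iterate the same ``extend by one'' move along the resulting chain of broadcasters. Since each iteration strictly progresses around the cycle and the chain cannot loop back onto the undominated pair $\{x_{i-1},x_i\}$, it must terminate at a step whose next candidate border vertex is undominated; at that stage the enlarged $g$ is irredundant with $\sigma(g)>\sigma(f)$, which is the desired contradiction. The small-$n$ boundary configurations, in which some extended range wraps around to hit $x_{i-1}$, are handled by a short separate check.
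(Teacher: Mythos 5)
Your overall strategy (contradict the optimality of $f$ by building a costlier irredundant broadcast) is the same as the paper's, and your easy cases are handled correctly: planting $g(x_i)=1$ does give $x_{i-1}\in PB_g(x_i)$, and the only possible casualty is a broadcaster $v_0$ with $PB_f(v_0)=\{x_{i+1}\}$. But the ``hard part'' you defer is a genuine gap, not a routine iteration. After you set $g(v_0)=k+1$, the only two vertices at distance $k+1$ from $v_0$ are $x_i$ and $x_{i+2+2k}$; the former now hears your planted broadcaster at $x_i$, so if $x_{i+2+2k}$ is dominated in $f$ by some other broadcaster, then $v_0$ has \emph{no} bordering private $g$-neighbor at all. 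Extending broadcasters further along the chain (downstream of $v_0$) can never restore a private border for $v_0$, since adding power to other vertices only adds hearers; and extending $v_0$ itself to $k+2$ makes it reach $x_{i-1}$ and $x_i$, destroying the private border of the broadcaster you planted at $x_i$. Moreover, each symmetric power increase covers a new vertex on the far side, which may itself be the unique border vertex of yet another broadcaster, so the cascade is not controlled. As written, the iteration has no mechanism to terminate in an irredundant broadcast.

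The paper sidesteps all of this with a one-sided move. Since $x_{i-1}$ is undominated, $f(x_{i-2})=0$; either no broadcaster dominates $x_{i-2}$, in which case setting $f(x_{i-1})=1$ works because its entire footprint $\{x_{i-2},x_{i-1},x_i\}$ is undominated and nothing else is disturbed, or the broadcaster $x_j$ dominating $x_{i-2}$ satisfies $d_{C_n}(x_j,x_{i-2})=f(x_j)$ exactly, and one replaces it by $x_{j+1}$ with value $f(x_j)+1$. The new footprint equals the old one plus the two undominated vertices $x_{i-1}$ and $x_i$, so no other private border is touched, the backward reach is unchanged, and $x_i$ becomes the new bordering private neighbor: cost $+1$ in a single step, with no cascade. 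If you want to salvage your construction you need a similarly asymmetric repair (shift-and-extend toward the undominated gap); the symmetric power increase is precisely what creates the uncontrollable chain.
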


\begin{proof} 
Assume to the contrary that we have $H_f(x_{i-1})=\emptyset$, so that $x_{i-1}$ is not $f$-dominated, which implies
in particular $f(x_{i-2})=0$.
Therefore, there exists an $f$-broadcast vertex $x_j$, $j<i-2$, which $f$-dominates $x_{i-2}$, for otherwise 
we could set $f(x_{i-1})=1$, contradicting the optimality of~$f$.

We then necessarily have $d_{C_n}(x_j,x_{i-2})=f(x_j)$. 
But, in that case, the function $g$ obtained from $f$ by setting 
$g(x_j)=0$ and $g(x_{j+1})=f(x_j)+1$ would be
an irredundant broadcast on $C_n$ with cost $\sigma(g)=\sigma(f)+1 > \sigma(f)$, again a contradiction. 

It follows that we have $H_f(x_{i-1})\neq \emptyset$ and, by symmetry, that we also have $H_f(x_{i+1})\neq \emptyset$. 
\end{proof}

\begin{figure}
\begin{center}
\begin{tikzpicture}
\FLECHE{-1}{0}
\LIGNE{0}{0}{9}{0}
\POINTILLE{-1}{0}{0}{0}
\POINTILLE{9}{0}{10}{0}
   \dSOM{0}{0}{0}{}
   \dSOM{1}{0}{0}{}
   \dSOM{2}{0}{0}{}
   \bSOM{3}{0}{3}{}
   \bSOM{4}{0}{3}{{\bf 0}}
   \dSOM{5}{0}{0}{{\bf 3}}
   \dSOM{6}{0}{0}{}
   \dSOM{7}{0}{0}{}
   \SOM{8}{0}{0}{}
	\dSOM{9}{0}{0}{}
\end{tikzpicture}

\caption{\label{fig:th-IRCn} Irredundant broadcast for the proof of Theorem~\ref{th:IR-egal-Gamma-Cn}.}
\end{center}
\end{figure}
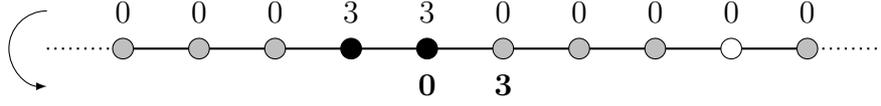

We can now prove the following result.

\begin{theorem}\label{th:IR-egal-Gamma-Cn}
For every integer $n\geq 3$, $I\!R_b(C_n)=\Gamma_b(C_n)$.
\end{theorem}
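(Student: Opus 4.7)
By Corollary~\ref{cor:Ahmadi-inequalities}, $\Gamma_b(C_n)\leq I\!R_b(C_n)$, so only the reverse inequality requires proof. My plan is to show that every $I\!R_b$-broadcast $f$ on $C_n$ is a minimal dominating broadcast, which by the definition of $\Gamma_b$ immediately yields $\sigma(f)\leq\Gamma_b(C_n)$. Since $f$ is irredundant, $PB_f(v)\neq\emptyset$ for every $v\in V^+_f$, so by Proposition~\ref{prop:Erwin-PB-nonempty} it suffices to prove that $f$ is also a dominating broadcast.

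Suppose toward a contradiction that some vertex $x_i$ satisfies $H_f(x_i)=\emptyset$. By Lemma~\ref{lem:IR-pas-deux-H-vides}, both $x_{i-1}$ and $x_{i+1}$ are $f$-dominated. Any broadcast vertex dominating $x_{i+1}$ must lie on the clockwise side of $x_i$, for otherwise the short arc realizing its reach to $x_{i+1}$ would pass through $x_i$ and force $x_i$ to be dominated as well. I would then pick such a clockwise broadcast vertex $x_j$ with $d_{C_n}(x_j,x_{i+1})=f(x_j)$; this gives $d_{C_n}(x_j,x_i)=f(x_j)+1\leq\lfloor n/2\rfloor=e_{C_n}(x_j)$, the last inequality because otherwise $x_j$ alone would dominate all of $C_n$.

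Next I would try to derive a contradiction by producing an irredundant broadcast $g$ on $C_n$ with $\sigma(g)>\sigma(f)$. The natural candidate is to increment the value of $x_j$: set $g(x_j)=f(x_j)+1$ and $g\equiv f$ elsewhere. Then $g$ is a valid broadcast with $\sigma(g)=\sigma(f)+1$, and $x_i$ becomes a bordering private neighbor of $x_j$ in $g$ (since $x_i$ was unreached by any $f$-broadcast vertex), so $PB_g(x_j)\neq\emptyset$. The same construction applies symmetrically on the counterclockwise side, using the broadcast vertex $x_k$ responsible for dominating $x_{i-1}$.

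The hard part will be to verify that every other broadcast vertex retains a bordering private neighbor in $g$. The widened reach of $x_j$ might cover the new clockwise boundary vertex $x_p:=x_{j+f(x_j)+1}$, which could have been the unique bordering private neighbor of some broadcast vertex located further clockwise. When both the clockwise and the counterclockwise extensions fail for such a reason — the restrictive configuration illustrated in Figure~\ref{fig:th-IRCn}, where two adjacent broadcast vertices of equal value frame the arc containing the undominated vertex — I would resort to a cost-preserving local shift: replace the broadcast vertex at $x_j$ by its neighbor closer to $x_i$, keeping the same value, to obtain an irredundant broadcast of the same cost that dominates strictly more vertices. Iterating these modifications would produce in every case a dominating irredundant broadcast of cost at least $\sigma(f)$, contradicting the optimality of $f$ unless $f$ was already dominating.
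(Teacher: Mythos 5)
Your reduction to $I\!R_b(C_n)\le\Gamma_b(C_n)$ and your appeal to Proposition~\ref{prop:Erwin-PB-nonempty} are fine, but the central premise of your argument --- that every $I\!R_b$-broadcast on $C_n$ must already be dominating --- is false, so the proof by contradiction cannot close. On $C_9$, the broadcast with $f(x_3)=f(x_4)=3$ and $f=0$ elsewhere has cost $6=2\big(\lfloor 9/2\rfloor-1\big)=I\!R_b(C_9)$, is irredundant ($x_0$ is the unique bordering private neighbor of $x_3$, and $x_7$ that of $x_4$), yet leaves $x_8$ undominated. This is exactly the configuration of Figure~\ref{fig:th-IRCn}, and it shows that your primary move (incrementing $f(x_j)$ to obtain a strictly costlier irredundant broadcast) genuinely fails on optimal broadcasts: the widened reach of $x_3$ swallows $x_7$, killing the irredundance of $x_4$. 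So the ``hard part'' you flag is not a verification to be carried out but an obstruction that actually occurs; moreover your claim that it arises only when two \emph{adjacent} broadcast vertices of \emph{equal} value frame the undominated arc is neither proved nor needed.

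What saves the argument is the cost-preserving shift you relegate to a fallback: that shift \emph{is} the proof. The correct structure is not a contradiction but a construction: from a non-dominating $I\!R_b$-broadcast one builds a dominating irredundant broadcast of the \emph{same} cost, which by Proposition~\ref{prop:Erwin-PB-nonempty} is a minimal dominating broadcast, whence $\Gamma_b(C_n)\ge I\!R_b(C_n)$ --- a broadcast of cost exactly $\sigma(f)$ contradicts nothing, and the conclusion ``hence $f$ was already dominating'' is simply false. Two verifications are missing from your sketch, and they are where optimality is actually used. First, the broadcast vertex $x_j$ to be shifted satisfies $|PB_f(x_j)|=1$: if it had a second bordering private neighbor on the far side, one could add a new broadcast vertex $x_{j+1}$ with value $f(x_j)$ and obtain an irredundant broadcast of strictly larger cost (this, not the increment of $f(x_j)$, is the cost-increasing perturbation that works). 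Second, because $|PB_f(x_j)|=1$, the far-side boundary vertex uncovered by the shift already hears another broadcast vertex, so every previously dominated vertex stays dominated while $x_i$ becomes dominated; this strict decrease in the number of undominated vertices is what guarantees the iteration terminates. Without these two points your fallback is an unsupported assertion rather than a proof.
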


\begin{proof} 
By Corollary~\ref{cor:Ahmadi-inequalities}, we only need to prove the inequality $I\!R_b(C_n) \leq \Gamma_b(C_n)$. 
For this, it is enough to construct, from any non-dominating  $I\!R_b$-broadcast on $C_n$, 
a dominating irredundant broadcast
(which is then a minimal dominating broadcast, by Observation~\ref{obs:definitions}(4)) with the same cost $I\!R_b(C_n)$. 

Let $f$ be a non-dominating  $I\!R_b$-broadcast on $C_n$. 
Then, there exists $i\in \{0,\dots,n-1\}$ such that $H_f(x_i)=\emptyset$. 
By Lemma~\ref{lem:IR-pas-deux-H-vides}, we have $H_f(x_{i-1})\neq \emptyset$ and $H_f(x_{i+1})\neq \emptyset$. 
Since $x_i$ is not $f$-dominated, we know that  $x_{i-1}$ is $f$-dominated by a unique $f$-broadcast vertex, say $x_j$, 
$j<i-1$, such that $f(x_j)=d_{C_n}(x_j,x_{i-1})$, which implies $|PB_f(x_j)| \geq 1$.
We claim that we have $|PB_f(x_j)|=1$. 
Indeed, if $|PB_f(x_j)|=2$ , then 
we could set $f(x_{j+1})=f(x_j)$, contradicting the optimality of $f$.

%
Now, observe that 
the function $g$ obtained from $f$ by setting $g(x_j)=0$ and $g(x_{j+1})=f(x_j)$
is an irredundant broadcast with $\sigma(g) = \sigma(f)$,
such that $x_i$ is $g$-dominated,  and all vertices that were $f$-dominated remain $g$-dominated (see Figure~\ref{fig:th-IRCn})
.
Repeating the same transformation for each vertex which is not dominated, we eventually produce a minimal dominating broadcast on $C_n$ with cost $I\!R_b(C_n)$.
\end{proof}

We obviously have $\Gamma_b(C_3)=1$. 
For $n\geq 4$, the value of $\Gamma_b(C_n)$ is given by the following result.

\begin{theorem}\label{th:Gamma-Cn}
For every integer $n\geq 4$,
$\Gamma_b(C_n)= 2\big(\left\lfloor\frac{n}{2}\right\rfloor-1\big)$.
\end{theorem}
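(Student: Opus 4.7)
The plan is to prove the equality in two directions: a lower bound via explicit construction, and an upper bound via case analysis that refines Theorem~\ref{th:Bouchemakh-Fergani} in the odd case.

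For the lower bound, I would propose two constructions depending on parity. When $n = 2k$, set $f(x_0) = f(x_1) = k-1$ and $0$ elsewhere; each of $x_0, x_1$ covers $2k-1$ vertices, the two covered sets are complementary on the one vertex missed, and $x_{k+1}$ (resp.\ $x_k$) is a bordering private $f$-neighbor of $x_0$ (resp.\ $x_1$) because it lies at distance $k-1$ from its broadcast vertex and at distance at least $k$ from the other. When $n = 2k+1$, the analogous construction $f(x_0) = f(x_2) = k-1$ is dominating and has $x_{k+2}$ and $x_{k+1}$ as bordering private $f$-neighbors of $x_0$ and $x_2$ respectively. By Proposition~\ref{prop:Erwin-PB-nonempty}, $f$ is a minimal dominating broadcast of cost $2(\lfloor n/2 \rfloor - 1)$.

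For the upper bound when $n$ is even, Theorem~\ref{th:Bouchemakh-Fergani} immediately yields $\Gamma_b(C_n) \leq n - \delta(C_n) = n-2 = 2(n/2 - 1)$. The real work is the odd case $n = 2k+1$, where Theorem~\ref{th:Bouchemakh-Fergani} only gives $\Gamma_b \leq 2k-1$, so I must exclude $\sigma(f) = 2k - 1$ for every minimal dominating broadcast $f$. Using Theorem~\ref{th:IR-egal-Gamma-Cn}, fix such an $f$ attaining $\Gamma_b(C_{2k+1})$, write $V^+_f = \{v_1, \ldots, v_t\}$ in cyclic order, and let $d_j = d(v_j, v_{j+1})$. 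Domination forces $d_j \leq f(v_j) + f(v_{j+1}) + 1$; the private-border condition at $v_j$ forces either that one of the adjacent arcs $d_{j-1}, d_j$ is tight (equal to the corresponding sum-plus-one) or that $v_j$ is its own bordering private $f$-neighbor, in which case $f(v_j) = 1$ and every other broadcast vertex $u$ satisfies $d(u, v_j) > f(u)$. Summing $\sum d_j = n$ using these constraints gives $\sigma(f) \leq n - (t - t')$, where $t'$ is the number of self-private broadcast vertices, so assuming $\sigma(f) = 2k-1$ forces $t - t' \leq 2$.

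It remains to handle the few possible values of $(t, t')$. The cases $t = 1$ and $t = 2$ are direct: $t=1$ gives $\sigma(f) \leq k$, and $t = 2$ with $\sigma(f) = 2k-1$ forces (by eccentricity bounds and tightness) that $\{f(v_1), f(v_2)\} = \{k, k-1\}$ with $v_1, v_2$ adjacent, after which a calculation shows that the broadcast vertex with $f$-value $k-1$ has its two distance-$(k-1)$ candidates both dominated by the broadcast vertex of $f$-value $k$, contradicting minimality. For $t \geq 3$, I would use two key observations: any $u \in V^+_f$ with $f(u) = k$ dominates the whole cycle, precluding the existence of any self-private vertex; and the constraint $d(u,v_j) > f(u)$ forces the arc from any non-self-private broadcast vertex $u$ to any adjacent self-private vertex to have length at least $f(u) + 1$. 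Plugging these bounds into $\sum d_j = 2k+1$ in the sub-cases $t - t' \in \{0,1,2\}$ yields $\sum d_j$ strictly larger than $2k+1$, except in the single borderline configuration with $t - t' = 2$, $v_1, v_2$ adjacent, $t = 3$, $f(v_1) = f(v_2) = k-1$, and one self-private vertex antipodal to the pair. The main obstacle is this borderline case: a direct verification shows that here neither $v_1$ nor $v_2$ admits a bordering private $f$-neighbor (its two distance-$(k-1)$ candidates are dominated respectively by the other non-self-private vertex and by the self-private vertex), and so this last case is also ruled out, completing the proof.
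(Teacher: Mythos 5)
Your construction for the lower bound and your treatment of the even case coincide with the paper's; the real divergence is in the odd case, where you take a genuinely different route. The paper assigns to each broadcast vertex $v$ the $f(v)$ edges of a geodesic to a bordering private neighbour, shows these edge sets are pairwise disjoint and that each is followed by one further unused edge, and so gets the clean bound $\sigma(g)\le n-t$ for any minimal dominating broadcast with $t$ broadcast vertices; this reduces the odd case to $t\in\{1,2,3\}$, with $t=3$ excluded by a short structural argument and $t=2$ forced to cost exactly $n-3$. Your arc-length bookkeeping only yields the weaker $\sigma(f)\le n-(t-t')$ with $t'$ the number of self-private vertices, so you inherit a heavier case analysis over the pairs $(t,t')$ with $t-t'\le 2$; what you gain is that each surviving configuration is pinned down by elementary distance arithmetic — in particular your observation that an $f$-value equal to the eccentricity $k$ dominates the whole cycle disposes of $t=2$ more directly than the paper's parity argument, and I confirmed that your borderline configuration ($t=3$, $t'=1$, two adjacent vertices of value $k-1$ and one self-private vertex at distance $k$ from both) is the unique survivor and is correctly killed because both value-$(k-1)$ vertices then have empty private borders. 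Two points need care in a write-up. First, $d_j\le f(v_j)+f(v_{j+1})+1$ is \emph{not} forced by domination alone: a third broadcast vertex could a priori cover part of the arc by reaching over $v_j$; you must first note that this would give $N_f(v_j)\subseteq N_f(v_l)$ and hence $PB_f(v_j)=\emptyset$, contradicting minimality — the same observation is what guarantees that each bordering private neighbour lies inside an adjacent arc, without which your summation to $\sigma(f)\le n-(t-t')$ is not legitimate. Second, the appeal to Theorem~\ref{th:IR-egal-Gamma-Cn} is unnecessary here, since a $\Gamma_b$-broadcast is by definition a minimal dominating broadcast of maximum cost.
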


\begin{proof} 
From Theorem~\ref{th:Bouchemakh-Fergani}, we directly get $\Gamma_b(C_n)\leq n-\delta(C_n)=n-2$.
Let now $f$ be the function defined by
\[f(x_i) =\left\{\begin{array}{cl}
                 \left\lfloor\frac{n}{2}\right\rfloor -1 & \text{if } i\in \big\{\left\lfloor\frac{n}{2}\right\rfloor, \left\lceil\frac{n}{2}\right\rceil +1\big\},\\[1ex]
                 0& \text{otherwise}.
                    \end{array}
              \right.\]
Clearly, $f$ is a minimal dominating broadcast on $C_n$ with cost
\[\sigma(f)=
\left\{
\begin{array}{ll}
      n-2 & \text{if } n\text{ is even,}\\
      n-3 & \text{if } n\text{ is odd.}
\end{array}
\right. \]
Therefore, $\sigma(f) \leq \Gamma_b(C_n)$. Combining this inequality with the previous one, we already infer that we have
$\Gamma_b(C_n)= n-2$ if $n$ is even, and $n-3\leq \Gamma_b(C_n)\leq n-2$ if $n$ is odd. 

It remains to discuss the case $n$ odd.
For $n=5$, it is not difficult to check that we have $\Gamma_b(C_5)=2$. 
Suppose $n\ge 7$ and let $g$ be any  $\Gamma_b$-broadcast on $C_n$ such that 
$V^+_g=\{x_{i_1},\dots,x_{i_t}\}$, $i_1<\cdots < i_t$, $t\ge 1$.
We first prove the following claim.


\begin{claim}\label{cl:t=2}
$t=2$.
\end{claim}

\begin{proof} 
If $t=1$, then there is a unique $g$-broadcast vertex $x_i\in V^+_g$, and thus $\Gamma_b(C_n)=g(x_i) =e_{C_n}(x_i)=\frac{n-1}{2}<n-3$, a contradiction. 
Hence, $t\ge 2$.
We know by Proposition~\ref{prop:Erwin-PB-nonempty} that each $g$-broadcast vertex $x_i\in V^+_g$ has a bordering private $g$-neighbor, say $x_i^p$, with possibly $x_i^p=x_i$ (and, in that case, $g(x_i)=1$).
Let $Q_{x_i}$ be the set of edges defined as follows:
\begin{itemize}
\item if $g(x_i)\geq 2$, then $Q_{x_i}$ is the set of edges of the unique $(x_i,x_i^p)$-geodesic,
\item if $g(x_i)=1$ and $x_i^p\in\{x_{i-1},x_{i+1}\}$ is a bordering private $g$-neighbor of $x_i$, then $Q_{x_i}$ is the singleton $\{x_ix_{i}^p\}$,
\item if $g(x_i)=1$ and $x_i$ is its own bordering private $g$-neighbor, then $Q_{x_i}$ is the singleton $\{x_{i}x_{i+1}\}$.
\end{itemize}
Clearly, $g(x_i)=|Q_{x_i}|$ for every $x_i\in V^+_g$, and $Q_{x_i}\cap Q_{x_j}=\emptyset$ for every $x_i,x_j\in V^+_g$.

We now claim that for every $Q_{x_i}=\{x_ix_{i+1},\dots,x_{i+t-1}x_{i+t}\}$
(resp. $Q_{x_i}=\{x_{i-t}x_{i-t+1},\dots,x_{i-1}x_{i}\}$),
$t\ge 1$, the edge $x_{i+t}x_{i+t+1}$ (resp. $x_{i}x_{i+1}$) that ``follows'' $Q_{x_i}$
does not belong to any $Q_{x_{i'}}$, with $x_{i'}\in V^+_g$.
Indeed, this directly follows from the following observations.
\begin{itemize}
\item[$(a)$] Every $g$-broadcast vertex $x_i$ belongs to exactly one such path, namely $Q_{x_i}$.
\item[$(b)$] Every bordering private $g$-neighbor belongs to at most one such path.
\item[$(c)$] An end-vertex $x_j$ of such a path $Q$ is neither a $g$-broadcast vertex
nor a bordering private $g$-neighbor if and only if $Q=Q_{x_{j-1}}$, $x_{j-1}$ is a $g$-broadcast vertex,
$g(x_{j-1})=1$ and $x_{j-1}$ is its own bordering private $g$-neighbor.
\end{itemize}

Hence, we have
$$\Gamma_b(C_n)= \sum_{x_i\in V^+_g}g(x_i) = \sum_{x_i\in V^+_g}|Q_{x_i}| = |\cup_{x_i\in V^+_g}Q_{x_i}| \leq n-|V^+_g| = n-t.$$  

If  $t>3$, then  $\Gamma_b(C_n)< n-3$, a contradiction. 
Assume now that $t=3$ and let $V^+_g=\{x_a,x_b,x_c\}$, with $a<b<c$. 
Since $\Gamma_b(C_n) = n-3$, any two of the paths $Q_{x_a}$, $Q_{x_b}$ and $Q_{x_c}$ are separated by exactly one edge.

Suppose first that one of these $g$-broadcast vertices, say $x_c$, is such that
$g(x_c)=1$ and $x_c$ is its own private $g$-neighbor, so that $Q_{x_c} = x_cx_{c+1}$.
In that case, $x_{c-1}$ is neither a $g$-broadcast vertex nor a bordering private $g$-neighbor, which implies,
by observation $(c)$ above, that $x_b=x_{c-2}$, $g(x_b)=1$ and $x_b$ is its own bordering private $g$-neighbor.
Using the same argument, we get that $x_a=x_{b-2}$, $g(x_a)=1$, $x_a$ is its own bordering private $g$-neighbor,
and $x_c=x_{a-2}$,
leading to $n=6$, a contradiction since we assumed $n\ge 7$.

Hence, each end-vertex of any of the paths $Q_{x_a}$, $Q_{x_b}$ and $Q_{x_c}$ is either
a $g$-broadcast vertex or a bordering private $g$-neighbor of the $g$-broadcast vertex belonging to the same path.
But since we have three paths, one of $x_a$, $x_b$ or $x_c$ must be adjacent to a bordering private $g$-neighbor
of another $g$-broadcast vertex, a contradiction.
\end{proof}

By Claim~\ref{cl:t=2}, we can assume $V^+_g=\{x_a,x_b\}$, which implies $|PB_f(x_a)|=|PB_f(x_b)|$.
If $|PB_f(x_a)|=|PB_f(x_b)|=2$, then $n=2f(x_a)+1+2f(x_b)+1$, a contradiction since $n$ is odd.
We thus have $|PB_f(x_a)|=|PB_f(x_b)|=1$, and the bordering private $g$-neighbors $x_a^p$ of $x_a$, and $x_b^p$ of $x_b$
are adjacent. 
Moreover, if we assume, without loss of generality, that $x_{a}x_{a+1}\dots x_b$
is a $(x_a,x_b)$-geodesic, then $b-a\leq 2$ for otherwise the function 
$h$ obtained from $g$ by setting
$h(x_a)=0$, $h(x_b)=0$, $h(x_{a+1})=g(x_a)+1$ and $h(x_{b-1})=g(x_b)+1$,
would be a minimal dominating broadcast with cost $\Gamma_b(h)=\Gamma_b(g)+2$,
contradicting the optimality of $g$. 
Hence, we have $d_{C_n}(x_a,x_b)\leq 2$. 
If $x_a$ and $x_b$ are joined by an edge, we necessarily have $g(x_a)=g(x_b)$, implying that $n$ is even, 
contrary to our assumption.
We thus have $d_{C_n}(x_a,x_b)=2$, and thus $g(x_a)=g(x_b)=\frac{n-3}{2}$, which gives $\sigma(g)=n-3$.
\end{proof}

\subsection{Broadcast independence number and lower broadcast independence number}

Dunbar {\it et al.}~\cite{Dun} noted that the upper broadcast domination number $\Gamma_b(G)$ 
and the broadcast independence number $\beta_b(G)$ of a graph $G$ are in general incomparable.  
Erwin gave in~\cite{Erw01} the exact value of the broadcast independence number of paths. 
He proved that  for every integer $n\geq 3$, $\beta_b(P_n) = 2(n - 2)$,
so that, by Theorem~\ref{th:Gamma_IR_paths}, $\beta_b(P_n) > \Gamma_b(P_n)$ for every $n>3$.

Our next result proves that the equality 
$\beta_b(C_n)=\Gamma_b(C_n)$ holds for every cycle $C_n$, $n\ge 3$ (recall Theorem~\ref{th:Gamma-Cn}).

\begin{theorem}\label{th:beta-cycles}
For every integer $n\geq 3$,
$\beta_b(C_n)=2\big(\left\lfloor\frac{n}{2}\right\rfloor-1\big)$.
\end{theorem}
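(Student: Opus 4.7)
The plan is to bracket $\beta_b(C_n)$ between matching bounds. For the lower bound, I would exhibit an independent broadcast of the claimed cost: with $m := \lfloor n/2 \rfloor$, set $f(x_0) = f(x_m) = m-1$ and $f(x_j) = 0$ otherwise. Since $d_{C_n}(x_0, x_m) = m > m-1$, neither broadcast vertex is $f$-dominated by the other, so $f$ is independent of cost $2(m-1) = 2(\lfloor n/2 \rfloor - 1)$.

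For the upper bound, I would fix an arbitrary independent broadcast $f$, list $V^+_f = \{v_1, \ldots, v_t\}$ cyclically, set $r_j := f(v_j)$, and split on $t$. If $t \leq 1$, then $\sigma(f) \leq e_{C_n}(v_1) = m \leq 2(m-1)$, valid once $m \geq 2$. If $t = 2$, independence applied at each of $v_1, v_2$ forces $r_1, r_2 < d_{C_n}(v_1, v_2) \leq m$, hence $\sigma(f) \leq 2(m-1)$ immediately.

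The substantive case is $t \geq 3$. I would let $d_j$ denote the length of the arc from $v_j$ to $v_{j+1}$ (indices mod $t$) that contains no other broadcast vertex, so $\sum_{j=1}^t d_j = n$. Since one of the two $(v_j, v_{j+1})$-arcs of $C_n$ has length exactly $d_j$, we get $d_{C_n}(v_j, v_{j+1}) \leq d_j$ and similarly $d_{C_n}(v_j, v_{j-1}) \leq d_{j-1}$, so independence at $v_j$ gives $r_j \leq \min(d_{j-1}, d_j) - 1$. Using $\min(d_{j-1}, d_j) \leq \tfrac{1}{2}(d_{j-1} + d_j)$ and summing cyclically,
\[ \sigma(f) \leq \sum_{j=1}^t \min(d_{j-1}, d_j) - t \leq \frac{1}{2}\sum_{j=1}^t (d_{j-1} + d_j) - t = n - t \leq n - 3, \]
and one checks $n - 3 \leq 2(\lfloor n/2 \rfloor - 1)$ for every $n \geq 3$, with equality exactly when $n$ is odd.

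The main obstacle is organizing the $t \geq 3$ case: one must resist identifying $d_{C_n}(v_j, v_{j+1})$ with $d_j$, since the graph distance in a cycle is only the minimum of the two arc lengths, so the useful inequality is $d_{C_n}(v_j, v_{j+1}) \leq d_j$ rather than equality. Once this is in place, the telescoping arithmetic-mean trick extracts the clean bound $\sigma(f) \leq n - t$, and the three cases collapse to the target $2(\lfloor n/2 \rfloor - 1)$, matching the explicit construction.
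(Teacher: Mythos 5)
Your proof is correct and follows essentially the same route as the paper: the same two-vertex construction for the lower bound, and the same case split on $|V^+_f|\in\{1,2,\ge 3\}$ with the counting bound $\sigma(f)\le n-t$ when $t\ge 3$. Your write-up is in fact slightly more careful than the paper's in two places --- in the $t=2$ case you bound each value by $\lfloor n/2\rfloor-1$ directly instead of asserting that the two broadcast vertices are antipodal, and in the $t\ge 3$ case you correctly distinguish arc lengths from distances where the paper writes $\sum_j d_{C_n}(x_{i_j},x_{i_{j+1}})=n$ when only ``$\le$'' holds --- though note that both arguments (and the theorem as stated) misbehave at $n=3$, where the formula gives $0$ but $\beta_b(C_3)=1$.
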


\begin{proof} 
It is easy to check that we have $\beta_b(C_3)=1$ and $\beta_b(C_4)=2$.
Assume thus $n\ge 5$. 
Clearly, the function $f$ defined by
\[f(x_i) =\left\{\begin{array}{cl}
                 \left\lfloor\frac{n}{2}\right\rfloor-1 & \text{if } i\in \big\{0,\left\lfloor\frac{n}{2}\right\rfloor\big\},\\[1ex]
                 0& \text{otherwise},
                    \end{array}
              \right.\]
is an independent broadcast on $C_n$ with cost $\sigma(f)= 2(\lfloor\frac{n}{2}\rfloor-1)$, which implies 
$\beta_b(C_n)\geq 2(\lfloor\frac{n}{2}\rfloor-1)$.

We now prove the opposite inequality. 
For this, let $g$ be any  $\beta_b$-broadcast on $C_n$.
If $|V^+_g| = 1$, say $V^+_g=\{x_i\}$,  then 
$$\sigma(g) = g(x_i) = e_{C_n}(x_i) = \left\lfloor\frac{n}{2}\right\rfloor \leq 2\Big(\left\lfloor\frac{n}{2}\right\rfloor-1\Big),$$ 
and, if $|V^+_g| = 2$, then the two vertices of $V^+_g$  are antipodal, which gives $\sigma(g)=2(\lfloor\frac{n}{2}\rfloor-1)$.  

Assume now that we have $|V^+_g|\geq 3$, and let $V^+_g = \{x_{i_0},\dots,x_{i_{k-1}}\}$, $k\ge 3$.
Since $g$ is an independent broadcast, we have $d_{C_n}(x_{i_j},x_{i_{j+1}}) \geq g(x_{i_j})+1$ for every $j$, $0\le j\le k-1$
(subscripts of $i$ are taken modulo $k$). We thus get
$$\beta_b(C_n)=\sigma(g) = \sum_{x_{i_j}\in V^+_g}g(x_{i_j}) \leq \sum_{x_{i_j}\in V^+_g}(d_{C_n}(x_{i_j},x_{i_{j+1}})-1) = n-|V^+_g| \leq n-3 \leq 2\Big(\left\lfloor\frac{n}{2}\right\rfloor-1\Big),$$
as required.
\end{proof}

We now determine the value of the lower broadcast independence number of paths and cycles.
For that, we will use the following lemma.

\begin{lemma}\label{lem:obs-i}
If $f$ is an  $i_b$-broadcast on $P_n$, $n\ge 3$,
with $V^+_f=\{x_{i_1},\dots,x_{i_t}\}$, $i_1<\cdots < i_t$, $t\ge 2$, then we have
\begin{enumerate}
\item[{\rm 1.}] $f(x_{i_1}) \geq f(x_{i_2})$ and $f(x_{i_t}) \geq f(x_{i_{t-1}})$,
\item[{\rm 2.}] $d_{P_n}(x_1,x_{i_1})\le f(x_{i_1})$ and $d_{P_n}(x_{i_t},x_n)\le f(x_{i_t})$, 
\item[{\rm 3.}] for every $j$, 
$1\le j\le t-1$, $\max\{f(x_{i_j}),f(x_{i_{j+1}})\} + 1 \leq d_{P_n}(x_{i_j},x_{i_{j+1}}) \leq f(x_{i_j}) + f(x_{i_{j+1}}) + 1$, 
\item[{\rm 4.}] $d_{P_n}(x_{i_1},x_{i_2})=f(x_{i_1})+1$ and $d_{P_n}(x_{i_{t-1}},x_{i_t}) = f(x_{i_t})+1$.
\end{enumerate}
\end{lemma}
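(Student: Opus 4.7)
The plan is to handle items~(2)--(4) directly and then read off item~(1) as an immediate consequence of items~(3) and~(4). The only outside fact I will invoke is Observation~\ref{obs:definitions}(1), which guarantees that the maximal independent broadcast $f$ is also a dominating broadcast; interestingly, the cost-minimality of $f$ is not needed at any point, only its maximality.

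First, for item~(2), I would use the fact that $f$ is dominating to pick a broadcast vertex $x_{i_k}$ that $f$-dominates $x_1$, and then argue that $k=1$: if $k\ge 2$, the vertex $x_{i_1}$ lies strictly between $x_1$ and $x_{i_k}$ on $P_n$, so $d_{P_n}(x_{i_1},x_{i_k})<d_{P_n}(x_1,x_{i_k})\le f(x_{i_k})$, meaning $x_{i_k}$ would also $f$-dominate $x_{i_1}$, contradicting independence. The right-end inequality is obtained by an identical argument starting from $x_n$.

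Next, for item~(3), the lower bound is just the independence of $x_{i_j}$ and $x_{i_{j+1}}$. For the upper bound I would introduce the vertex $y$ sitting on the $(x_{i_j},x_{i_{j+1}})$-subpath at distance $f(x_{i_j})+1$ from $x_{i_j}$ (this vertex exists thanks to the lower bound, and is permitted to equal $x_{i_{j+1}}$ in the boundary case). Since $y$ lies outside the $f$-reach of $x_{i_j}$, and any broadcast vertex $x_{i_k}$ with $k\notin\{j,j+1\}$ would, by independence with $x_{i_j}$ or $x_{i_{j+1}}$, be too far away to reach $y$, the domination of $y$ must come from $x_{i_{j+1}}$; this rearranges immediately to the claimed upper bound.

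For item~(4), I would argue by contradiction: assuming $d:=d_{P_n}(x_{i_1},x_{i_2})\ge f(x_{i_1})+2$, I would define $f^+$ by bumping $f(x_{i_1})$ up by one and leaving all other values unchanged, then verify that $f^+$ is still an independent broadcast. Validity as a broadcast reduces to $f^+(x_{i_1})\le e_{P_n}(x_{i_1})$, and independence amounts to noting that $f^+(x_{i_1})\le d-1<d$ keeps $x_{i_1}$ from $f^+$-reaching $x_{i_2}$ (hence any $x_{i_k}$ with $k\ge 2$), while the remaining independence conditions do not involve $f(x_{i_1})$ and so are inherited from $f$. The existence of such an $f^+$ with $f^+\ge f$ and $f^+\ne f$ contradicts the maximality of $f$; the symmetric argument then yields $d_{P_n}(x_{i_{t-1}},x_{i_t})=f(x_{i_t})+1$. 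Item~(1) drops out in one line: item~(4) pins down $f(x_{i_1})=d_{P_n}(x_{i_1},x_{i_2})-1$, and the lower bound of item~(3) at $j=1$ gives $f(x_{i_2})\le d_{P_n}(x_{i_1},x_{i_2})-1=f(x_{i_1})$. I expect the only slightly delicate point to be the eccentricity check in item~(4), which should boil down to the elementary chain $d_{P_n}(x_{i_1},x_{i_2})\le n-i_1\le e_{P_n}(x_{i_1})$; with that observation in place, the rest of the argument is essentially bookkeeping.
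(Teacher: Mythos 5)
Your proposal is correct. It runs on the same engine as the paper's proof --- local augmentations of $f$ contradicting its maximality, with cost-minimality never needed --- but it is organised differently in two respects. First, where the paper handles item~(2) and the upper bound of item~(3) by exhibiting an explicit augmentation (if $x_1$, or the vertex $x_{i_j+f(x_{i_j})+1}$, were undominated one could assign it the value~$1$), you route both through Observation~\ref{obs:definitions}(1): $f$ is dominating, so the vertex just outside the reach of $x_{i_j}$ must be heard by $x_{i_{j+1}}$, which is exactly the upper bound. This is a legitimate repackaging, since Observation~\ref{obs:definitions}(1) is itself proved by the same value-$1$ augmentation; it makes the write-up slightly shorter but adds an external dependency. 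Second, the logical order of the items is reversed: the paper proves item~(1) directly (by increasing $f(x_{i_1})$ when $f(x_{i_1})<f(x_{i_2})$) and then uses items~(1) and~(3) to get the lower bound needed in item~(4), whereas you prove item~(4) using only the $\max$ lower bound of item~(3) (which already gives $d_{P_n}(x_{i_1},x_{i_2})\ge f(x_{i_1})+1$ without item~(1)) and then read item~(1) off in one line from items~(3) and~(4). Your ordering is marginally more economical and makes the dependency structure cleaner; your explicit eccentricity check $f(x_{i_1})+1\le d_{P_n}(x_{i_1},x_{i_2})-1<n-i_1\le e_{P_n}(x_{i_1})$ is a detail the paper leaves implicit, and it is correct.
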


\begin{proof}
If Item~1 is not satisfied, then we can increase by~1 the value
of $f(x_1)$ or $f(x_n)$, contradicting the maximality of $f$.

If Item~2 is not satisfied, then we can set $f(x_1)=1$, or $f(x_n)=1$, contradicting the maximality of $f$.

For Item~3, the inequality $\max\{f(x_{i_j}),f(x_{i_{j+1}})\} + 1 \leq d_{P_n}(x_{i_j},x_{i_{j+1}})$ directly follows
from the definition of an independent broadcast.
Finally, if $d_{P_n}(x_{i_j},x_{i_{j+1}}) > f(x_{i_j}) + f(x_{i_{j+1}}) + 1$,
then we can set $f(x_{i_j+f(x_{i_j})+1})=1$, contradicting the maximality of $f$.

Consider now Item~4. By items 1 and~3, we have $f(x_{i_1})+1\leq d_{P_n}(x_{i_1},x_{i_2})$
and $f(x_{i_t})+1\leq d_{P_n}(x_{i_{t-1}},x_{i_t})$.
If any of these inequalities is strict, then we can increase by~1 the $f$-value of the involved
$f$-broadcast vertex, again contradicting the maximality of $f$.
\end{proof}

In the rest of the paper, for convenience, we will often define a broadcast function $f$ on the path $P_n$ 
(resp. on the cycle $C_n$) by
the word $f(x_1)\dots f(x_n)$ (resp. $f(x_0)\dots f(x_{n-1})$), using standard notation from Formal Language Theory.
In particular, recall that when we write $(a_1\dots a_k)^q$ for some integer $q\ge 0$, we mean that the sequence $a_1\dots a_k$ is repeated exactly
$q$ times (in particular, if $q=0$, then $(a_1\dots a_k)^q$ is the empty word~$\varepsilon$).

\begin{theorem}\label{th:i-path}
For every integer $n\ge 2$, $i_b(P_n) = \left\lceil\frac{2n}{5}\right\rceil.$
\end{theorem}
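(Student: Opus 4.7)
The plan is to establish matching upper and lower bounds.

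For the upper bound, I write $n = 5k + r$ with $r \in \{0,1,2,3,4\}$ and construct a maximal independent broadcast of cost $\lceil 2n/5 \rceil$ by specifying its $f$-value sequence $f(x_1)\dots f(x_n)$. The sequence is $(01010)^k$ (so broadcasts of value $1$ are placed at positions $5i-3$ and $5i-1$ for $i=1,\ldots,k$), followed by the tail $T_r$, where $T_0 = \varepsilon$, $T_1 = 1$, $T_2 = 10$, $T_3 = 020$, and $T_4 = 0101$. The cost is $\lceil 2n/5 \rceil$ in each case. The case $k=0$, $r=3$ (i.e., $n=3$) is the known exception $i_b(P_3)=1$, for which the prescribed value $2$ exceeds $e_{P_3}(x_2)=1$. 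For all other cases a direct verification shows that the construction is independent, dominating (so no new broadcast vertex can be added), and tight (no existing $f$-value can be increased), hence a maximal independent broadcast.

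For the lower bound, let $f$ be any maximal independent broadcast on $P_n$ with $V^+_f = \{x_{i_1}, \ldots, x_{i_t}\}$ and values $a_j = f(x_{i_j})$. If $t = 1$, maximality forces $a_1 = e_{P_n}(x_{i_1}) \geq \lceil (n-1)/2 \rceil$, and one checks that $\lceil (n-1)/2 \rceil \geq \lceil 2n/5 \rceil$ for every $n \geq 2$ with $n \neq 3$. If $t = 2$, Item~4 of Lemma~\ref{lem:obs-i} applied at both ends gives $a_1 = a_2$, and Items~2 and~4 yield $n - 1 \leq 3 a_1 + 1$, so $\sigma(f) = 2 a_1 \geq 2\lceil (n-2)/3 \rceil \geq \lceil 2n/5 \rceil$ for $n \geq 4$. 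For $t \geq 3$, the following strengthening of Lemma~\ref{lem:obs-i} is crucial: for every internal index $j$ with $2 \leq j \leq t-1$, either $d_{j-1} = a_j + 1$ or $d_j = a_j + 1$; otherwise, increasing $a_j$ by one would preserve independence, contradicting the maximality of $f$.

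With this extra constraint, I first handle the canonical sub-case where $a_j = 1$ for every $j$. Then Item~4 gives $d_1 = d_{t-1} = 2$, each remaining $d_j$ (for $2 \leq j \leq t-2$) lies in $\{2,3\}$, and the tightness condition forbids two consecutive copies of $3$ among $d_2, \ldots, d_{t-2}$. Maximizing $\sum_{j=0}^t d_j = n - 1$ subject to these constraints yields $n - 1 \leq 2t + \lceil (t-3)/2 \rceil$, which rearranges to $\sigma(f) = t \geq \lceil 2n/5 \rceil$. The main obstacle is the sub-case where some $a_j \geq 2$; intuitively, such a broadcast is locally less efficient than two broadcasts of value $1$, since one extra unit of $f$-value buys at most two extra units of gap length whereas the target ratio is $5/2$ units of length per unit of cost. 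The cleanest way to make this rigorous is a discharging argument: assign each vertex of $P_n$ a charge of $2/5$ and redistribute so that each broadcast vertex $x_{i_j}$ collects at most $a_j$ units, using Items~1--4 together with the internal tightness to bound each broadcast's ``territory''. Summing yields $\sigma(f) \geq 2n/5$, and since $\sigma(f)$ is an integer, $\sigma(f) \geq \lceil 2n/5 \rceil$.
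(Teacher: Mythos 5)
Your upper-bound construction and the cases $t=1$, $t=2$, and ``all broadcast values equal to $1$'' of the lower bound are correct, and your route through the latter case is genuinely different from (and more compact than) the paper's: you add the useful observation that maximality forces, for every internal broadcast vertex $x_{i_j}$, one of its two adjacent gaps to equal $f(x_{i_j})+1$, and then a direct count (no two consecutive gaps of length $3$) gives $n-1\le 2t+\lceil (t-3)/2\rceil$ and hence $t\ge\lceil 2n/5\rceil$. This replaces the paper's word-combinatorial analysis of the factor structure of $f(x_1)\dots f(x_n)$ (its Claim~C). You also correctly flag the $n=3$ exception, which the theorem as printed glosses over.

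The genuine gap is the case where some $f(x_{i_j})\ge 2$, which you yourself identify as ``the main obstacle'' and then dispose of with an unspecified discharging argument. This is precisely where the bulk of the work lies: the paper spends its Claims~A and~B performing explicit local rewritings of an \emph{optimal} $i_b$-broadcast (replacing segments around a large-valued vertex by alternating $01$-patterns) and uses the optimality of $f$ to force the inequalities $\sigma(f')-\sigma(f)\le 0$ into equalities, thereby reducing to the all-ones case without changing the cost. Your strategy is different in kind --- you try to lower-bound \emph{every} maximal independent broadcast directly, so you cannot lean on optimality --- and the discharging as sketched does not obviously close. Concretely, if each vertex carries charge $2/5$ and each broadcast vertex $x_{i_j}$ may collect at most $a_j$ units, then the ``territory'' of $x_{i_j}$ must have size at most $\frac{5}{2}a_j$; but a value-$1$ internal vertex can sit next to a gap of length up to $a_j+a_{j+1}+1$ shared with a high-valued neighbour (e.g.\ gaps $2$ and $7$ around a value-$1$ vertex adjacent to a value-$5$ vertex), so the natural half-gap split gives it territory $4.5>\frac{5}{2}$, and even ``take what you dominate'' rules need the large neighbour to absorb the excess. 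Some non-uniform rule presumably works (the statement being proved is true), but the rules and their local verification are exactly the missing content, so as written the proof is incomplete at its hardest point.
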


\begin{proof} 
Let $n=5q+r$, with $q\ge 0$,  $0\leq r \leq 4$ and $5q+r\ge 2$.
It is easy to check that the functions $10$, $010$ and $0101$ are  $i_b$-broadcasts on $P_n$ 
with cost $\left\lceil\frac{2n}{5}\right\rceil$ when $n=2$, $3$ and $4$, respectively.

Assume now $n\geq 5$. 
According to the value of $r$,  we define the broadcasts $f_r$ on $P_n$ for each $r$, $0\le r\le 4$, as follows:
$$f_0(P_n) = (01010)^q,\ f_1(P_n) = (01010)^q1,\ f_2(P_n) = (01010)^q10,$$
$$f_3(P_n) = (01010)^q101,\ \mbox{and}\ f_4(P_n) = (01010)^q0101.$$

It is not difficult to check that each $f_r$, $0\le r\le 4$, is a maximal independent broadcast 
on $C_{5q+r}$, for each $q\ge 1$, with cost $\sigma(f_r)=\left\lceil\frac{2n}{5}\right\rceil$, which gives
$i_b(P_n)\leq \lceil\frac{2n}{5}\rceil$. 

\medskip

Let us now consider the opposite inequality. 
Let $f$ be an  $i_b$-broadcast on~$P_n$. 
If $|V^+_f|=1$, then we have $i_b(P_n)=rad(P_n) = \left\lfloor\frac{n}{2}\right\rfloor$,
which implies $n\in S=\{1,\ldots,9,11,13\}$, since otherwise we would have 
$i_b(P_n)=\left\lfloor\frac{n}{2}\right\rfloor > \left\lceil\frac{2n}{5}\right\rceil$,
contradicting the inequality we have established before. 
Observe also that we have $\left\lfloor\frac{n}{2}\right\rfloor = \left\lceil\frac{2n}{5}\right\rceil$
for every $n\in S$, so that we are done.

The remaining case we have to consider is thus $n\notin S$, which implies $|V^+_f|\ge 2$.
Let $V^+_f=\{x_{i_1},\dots,x_{i_t}\}$, $i_1<\cdots < i_t$, with $t\ge 2$.
The two following claims will prove that we can always choose $f$ such that $f(x_{i_j})=1$ for every
$f$-broadcast vertex $v_{i_j}\in V^+_f$.

\begin{claim}\label{cl:que-des-uns-bords}
There exists an  $i_b$-broadcast $g$ on $P_n$ such that
$g(x_{i_1})=g(x_{i_2})=g(x_{i_{t-1}})=g(x_{i_t})=1$.
\end{claim}

\begin{proof}
We first prove that there exists an  $i_b$-broadcast $g_0$ on $P_n$ such that
$g_0(x_{i_1})=g_0(x_{i_2})=1$.
If $f(x_{i_1})=f(x_{i_2})=1$, then we set $g_0:=f$ and we are done.
So, suppose that we have $f(x_{i_1}) + f(x_{i_2}) \geq 3$.

%

If $|V^+_f|=2$, then, by Lemma~\ref{lem:obs-i}(4), we have $d_{P_n}(x_{i_1},x_{i_2})=f(x_{i_1})+1=f(x_{i_2})+1$.
Using Lemma~\ref{lem:obs-i}(2) and~(3), we then get
$$n = d_{P_n}(x_1,x_{i_1}) + d_{P_n}(x_{i_1},x_{i_2}) + d_{P_n}(x_{i_2},x_n) + 1 \leq 3f(x_{i_1}) + 2,$$
and thus
$\sigma(f)=2f(x_{i_1})\geq \frac{2(n-2)}{3}$.
Now, recall that the above defined maximal independent broadcast $f_r$, with $r=n\mod 5$,
is such that $\sigma(f_r)=\left\lceil\frac{2n}{5}\right\rceil$.
Since $n\ge 10$, this contradicts the optimality of $f$.
%
%

We thus have $|V^+_f|\ge 3$.
We consider two cases, depending on the value of $d_{P_n}(x_{i_2},x_{i_3})$.

\begin{figure}
\begin{center}
\begin{tikzpicture}
\FLECHE{-1}{0}
\LIGNE{0}{0}{8}{0}
\POINTILLE{8}{0}{9}{0}
   \dSOM{0}{0}{0}{{\bf 0}}
   \dSOM{1}{0}{0}{{\bf 1}}
   \dSOM{2}{0}{0}{{\bf 0}}
   \bSOM{3}{0}{3}{{\bf 1}}
   \dSOM{4}{0}{0}{{\bf 0}}
   \dSOM{5}{0}{0}{{\bf 1}}
   \dSOM{6}{0}{0}{{\bf 0}}
   \bSOM{7}{0}{1}{{\bf 1}}
   \dSOM{8}{0}{0}{}
\end{tikzpicture}

(a) $i_2$ is even ($8$ in this example)

\vskip 0.8cm

\begin{tikzpicture}
\FLECHE{-1}{0}
\LIGNE{0}{0}{7}{0}
\POINTILLE{7}{0}{8}{0}
   \dSOM{0}{0}{0}{{\bf 1}}
   \dSOM{1}{0}{0}{{\bf 0}}
   \bSOM{2}{0}{3}{{\bf 1}}
   \dSOM{3}{0}{0}{{\bf 0}}
   \dSOM{4}{0}{0}{{\bf 1}}
   \dSOM{5}{0}{0}{{\bf 0}}
   \bSOM{6}{0}{1}{{\bf 1}}
   \dSOM{7}{0}{0}{}
\end{tikzpicture}

(b) $i_2$ is odd ($7$ in this example)

\caption{\label{fig:th-ibPn-b} Maximal independent broadcast for the proof of Claim~\ref{cl:que-des-uns-bords}, 
Case 1.}
\end{center}
\end{figure}

\begin{enumerate}
\item $f(x_{i_3})+1 \le d_{P_n}(x_{i_2},x_{i_3}) \le f(x_{i_3})+2$.\\
Let $g$ be the mapping obtained from $f$ by replacing the $f$-values $0^{i_1-1}f(x_{i_1})0^{i_2-i_1-1}f(x_{i_2})$
of $x_1\dots x_{i_1}\dots x_{i_2}$ by $(01)^{\frac{i_2}{2}}$ if $i_2$ is even, or by $1(01)^{\frac{i_2-1}{2}}$ if $i_2$ is odd 
(see Figure~\ref{fig:th-ibPn-b}).
In both cases, we have $g(x_{i_2})=1$, which implies that $g$ is a maximal independent broadcast on $P_n$.
We then have
$$\sigma(g) - \sigma(f) = \left\lceil\frac{i_2}{2}\right\rceil - f(x_{i_1}) - f(x_{i_2}).$$
By Lemma~\ref{lem:obs-i}(2) and~(4), we have
$$i_2 = d_{P_n}(x_1,x_{i_1}) + d_{P_n}(x_{i_1},x_{i_2}) + 1 \leq 2f(x_{i_1}) + 2,$$
and thus
$$\sigma(g) - \sigma(f) \leq f(x_{i_1}) + 1 - f(x_{i_1}) - f(x_{i_2}) = 1 - f(x_{i_2}).$$
The optimality of $f$ then implies $f(x_{i_2})=1$, so that we have $\sigma(g) = \sigma(f)$
and we can set $g_0:=g$.

\begin{figure}
\begin{center}
\begin{tikzpicture}
\FLECHE{-1}{0}
\LIGNE{0}{0}{12}{0}
\POINTILLE{12}{0}{13}{0}
   \dSOM{0}{0}{0}{{\bf 0}}
   \dSOM{1}{0}{0}{{\bf 1}}
   \bSOM{2}{0}{3}{{\bf 0}}
   \dSOM{3}{0}{0}{{\bf 1}}
   \dSOM{4}{0}{0}{{\bf 0}}
   \dSOM{5}{0}{0}{{\bf 1}}
   \bSOM{6}{0}{3}{{\bf 0}}
   \dSOM{7}{0}{0}{{\bf 1}}
   \dSOM{8}{0}{0}{}
	\dSOM{9}{0}{0}{}
	\dSOM{10}{0}{0}{}
	\bSOM{11}{0}{2}{}
	\dSOM{12}{0}{0}{}
\end{tikzpicture}

(a) $i_3-f(x_{i_3})-2$ is even ($12-2-2=8$ in this example)

\vskip 0.8cm

\begin{tikzpicture}
\FLECHE{0}{0}
\LIGNE{1}{0}{12}{0}
\POINTILLE{12}{0}{13}{0}
   \dSOM{1}{0}{0}{{\bf 1}}
   \bSOM{2}{0}{3}{{\bf 0}}
   \dSOM{3}{0}{0}{{\bf 1}}
   \dSOM{4}{0}{0}{{\bf 0}}
   \dSOM{5}{0}{0}{{\bf 1}}
   \bSOM{6}{0}{3}{{\bf 0}}
   \dSOM{7}{0}{0}{{\bf 1}}
   \dSOM{8}{0}{0}{}
	\dSOM{9}{0}{0}{}
	\dSOM{10}{0}{0}{}
	\bSOM{11}{0}{2}{}
	\dSOM{12}{0}{0}{}
\end{tikzpicture}

(b) $i_3-f(x_{i_3})-2$ is odd ($11-2-2=7$ in this example)

\caption{\label{fig:th-ibPn-b-1} Maximal independent broadcast for the proof of Claim~\ref{cl:que-des-uns-bords}, Case 2.}
\end{center}
\end{figure}

\item $d_{P_n}(x_{i_2},x_{i_3}) \ge f(x_{i_3})+3$.\\
%
%
Let $g$ be the mapping obtained from $f$ by replacing the $f$-values 
$$0^{i_1-1}f(x_{i_1})0^{i_2-i_1-1}f(x_{i_2}) 0^{i_3-i_2-f(x_{i_3})-2}$$
of $x_1\dots x_{i_1}\dots x_{i_2} \dots x_{i_3-f(x_{i_3})-2}$ 
by $(01)^{\frac{i_3-f(x_{i_3})-2}{2}}$ if $i_3-f(x_{i_3})-2$ is even, or by $1(01)^{\frac{i_3-f(x_{i_3})-3}{2}}$ if $i_3-f(x_{i_3})-2$ is odd (see Figure~\ref{fig:th-ibPn-b-1}).

In both cases, all vertices are $g$-dominated and 
$g$ is clearly a maximal independent broadcast on $P_n$.
We then have
$$\sigma(g) - \sigma(f) = \left\lceil\frac{i_3-f(x_{i_3})-2}{2}\right\rceil - f(x_{i_1}) - f(x_{i_2}).$$
By Lemma~\ref{lem:obs-i}(2) and~(4), we have
$$i_2 = d_{P_n}(x_1,x_{i_1}) + d_{P_n}(x_{i_1},x_{i_2}) + 1 \leq 2f(x_{i_1}) + 2,$$
and thus $i_3-f(x_{i_3})-2 \leq 2f(x_{i_1})+f(x_{i_2})+1$.
We then get  
$$\sigma(g) - \sigma(f) \leq f(x_{i_1}) + \left\lceil \frac{f(x_{i_2})+1}{2}\right\rceil - f(x_{i_1}) - f(x_{i_2}) = \left\lceil \frac{1 - f(x_{i_2})}{2}\right\rceil.$$
The optimality of $f$ then implies $f(x_{i_2})\le 2$, so that we have $\sigma(g) = \sigma(f)$
and we can set $g_0:=g$.
\end{enumerate}

\medskip

Observe now that in both of the above cases we have $g_0(x_{i_j}) = f(x_{i_j})$
for every $j$, $3\le j\le t$. Therefore, using symmetry and starting from $g_0$ instead of $f$, we
can similarly construct an  $i_b$-broadcast $g$ on $P_n$ with
$g(x_{i_1})=g(x_{i_2})=g(x_{i_{t-1}})=g(x_{i_t})=1$, as required.
\end{proof}

\begin{claim}\label{cl:que-des-uns}
There exists an  $i_b$-broadcast $g$ on $P_n$ such that
$g(x_i)=1$ for every vertex $x_i\in V_g^+$.
\end{claim}

\begin{proof}
By Claim~\ref{cl:que-des-uns-bords}, we can suppose that 
$f(x_{i_1})=f(x_{i_2})=f(x_{i_{t-1}})=f(x_{i_t})=1$.
If $f(x_i)=1$ for every vertex $x_i\in V^+_f$, there is nothing to prove.
Suppose thus that this is not the case, which implies $t\ge 5$, and let
$x_{i_j}$, $3\le j\le t-2$, be the leftmost $f$-broadcast vertex for which $f(x_{i_j})\geq 2$.

We will prove that we can always construct an  $i_b$-broadcast
$f'$ on $P_n$ such that the number of broadcast vertices with $f'$-value at least~2 is strictly
less than the number of broadcast vertices with $f$-value at least~2.


We consider two cases, depending on the value of $d_{P_n}(x_{i_{j+1}},x_{i_{j+2}})$.

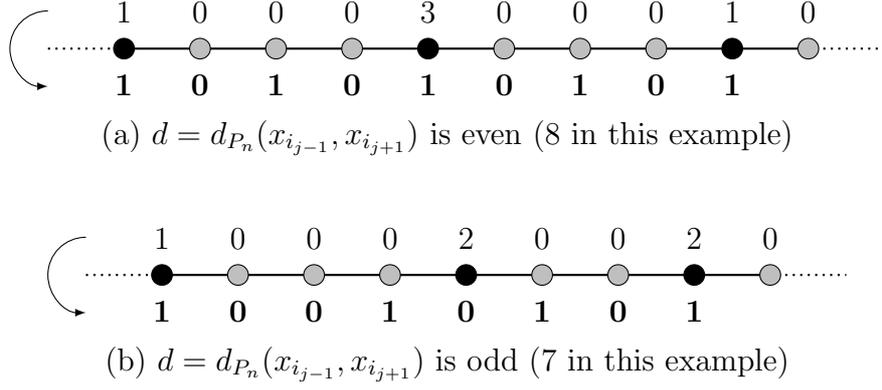
\begin{figure}
\begin{center}
\begin{tikzpicture}
\FLECHE{-1}{0}
\LIGNE{0}{0}{9}{0}
\POINTILLE{-1}{0}{0}{0}
\POINTILLE{9}{0}{10}{0}
   \bSOM{0}{0}{1}{{\bf 1}}
   \dSOM{1}{0}{0}{{\bf 0}}
   \dSOM{2}{0}{0}{{\bf 1}}
   \dSOM{3}{0}{0}{{\bf 0}}
   \bSOM{4}{0}{3}{{\bf 1}}
   \dSOM{5}{0}{0}{{\bf 0}}
   \dSOM{6}{0}{0}{{\bf 1}}
   \dSOM{7}{0}{0}{{\bf 0}}
   \bSOM{8}{0}{1}{{\bf 1}}
   \dSOM{9}{0}{0}{}
\end{tikzpicture}

(a) $d=d_{P_n}(x_{i_{j-1}},x_{i_{j+1}})$ is even ($8$ in this example)

\vskip 0.8cm

\begin{tikzpicture}
\FLECHE{-1}{0}
\LIGNE{0}{0}{8}{0}
\POINTILLE{-1}{0}{0}{0}
\POINTILLE{8}{0}{9}{0}
   \bSOM{0}{0}{1}{{\bf 1}}
   \dSOM{1}{0}{0}{{\bf 0}}
   \dSOM{2}{0}{0}{{\bf 0}}
   \dSOM{3}{0}{0}{{\bf 1}}
   \bSOM{4}{0}{2}{{\bf 0}}
   \dSOM{5}{0}{0}{{\bf 1}}
   \dSOM{6}{0}{0}{{\bf 0}}
   \bSOM{7}{0}{2}{{\bf 1}}
   \dSOM{8}{0}{0}{}
	
\end{tikzpicture}

(b) $d = d_{P_n}(x_{i_{j-1}},x_{i_{j+1}})$ is odd ($7$ in this example)

\caption{\label{fig:th-ibPn-c} Maximal independent broadcast for the proof of Claim~\ref{cl:que-des-uns}, Case 1.}
\end{center}
\end{figure}

\begin{enumerate}
\item $f(x_{i_{j+2}})+1 \le d_{P_n}(x_{i_{j+1}},x_{i_{j+2}}) \le f(x_{i_{j+2}})+2$.\\
Let
$d = i_{j+1}-i_{j-1} = d_{P_n}(x_{i_{j-1}},x_{i_{j+1}}) = d_{P_n}(x_{i_{j-1}},x_{i_{j}}) + d_{P_n}(x_{i_{j}},x_{i_{j+1}})$,
and $f'$ be the mapping obtained from $f$ by replacing the $f$-values 
$$f(x_{i_{j-1}})0^{i_j-i_{j-1}-1}f(x_{i_j})0^{i_{j+1}-i_j-1}f(x_{i_{j+1}})$$
of $x_{i_{j-1}}\dots x_{i_j}\dots x_{i_{j+1}}$ by 
$1(01)^{\frac{d}{2}}$
if $d$ is even, or by 
$10(01)^{\frac{d-1}{2}}$
if $d$ is odd 
(see Figure~\ref{fig:th-ibPn-c}).


%
Observe that we have $f'(x_{i_{j-1}})=1$ and $f'(x_{i_{j+1}})=1$
in both cases, which implies that $f'$ is a maximal independent broadcast on $P_n$.
Moreover, in both cases, we have
$$\sigma(f')-\sigma(f) = \left\lfloor\frac{d}{2}\right\rfloor - f(x_{i_{j}}) - f(x_{i_{j+1}}).$$

Since $f(x_{i_{j-1}})=1 < f(x_{i_j})$, 
Lemma~\ref{lem:obs-i}(3) gives
$$f(x_{i_j}) + 1 \leq d_{P_n}(x_{i_{j-1}},x_{i_{j}}) \leq f(x_{i_j}) + 2.$$

We thus consider two subcases, depending on the value of $d_{P_n}(x_{i_{j-1}},x_{i_{j}})$.

\begin{enumerate}
\item $d_{P_n}(x_{i_{j-1}},x_{i_{j}}) = f(x_{i_j}) + 1$.\\
By Lemma~\ref{lem:obs-i}(3), we have
$$d = d_{P_n}(x_{i_{j-1}},x_{i_{j}}) + d_{P_n}(x_{i_{j}},x_{i_{j+1}}) \leq  f(x_{i_{j}}) + 1 + f(x_{i_{j}}) + f(x_{i_{j+1}}) + 1
= 2f(x_{i_{j}}) + f(x_{i_{j+1}}) + 2,$$
and thus
$$\sigma(f')-\sigma(f) \leq  f(x_{i_{j}}) + \left\lfloor\frac{f(x_{i_{j+1}})}{2}\right\rfloor + 1 - f(x_{i_{j}}) - f(x_{i_{j+1}})
= \left\lfloor\frac{2-f(x_{i_{j+1}})}{2}\right\rfloor \leq 0.$$
The optimality of $f$ then implies the optimality of $f'$, 
and the number of broadcast vertices with $f'$-value at least~2 is strictly
less than the number of broadcast vertices with $f$-value at least~2, as required.  

\item $d_{P_n}(x_{i_{j-1}},x_{i_{j}}) = f(x_{i_j}) + 2$.\\
Since $f$ is maximal, we necessarily have $d_{P_n}(x_{i_{j}},x_{i_{j+1}}) = f(x_{i_j}) + 1$,
since otherwise we could increase $f(x_{i_j})$ by~1.
This gives
$$d = d_{P_n}(x_{i_{j-1}},x_{i_{j}}) + d_{P_n}(x_{i_{j}},x_{i_{j+1}}) = 2f(x_{i_j}) + 3,$$
and thus
$$\sigma(f')-\sigma(f) = f(x_{i_{j}}) + 1 - f(x_{i_{j}}) - f(x_{i_{j+1}})
= 1 - f(x_{i_{j+1}}) \leq 0.$$
Again, the optimality of $f$ then implies the optimality of $f'$, 
and the number of broadcast vertices with $f'$-value at least~2 is strictly
less than the number of broadcast vertices with $f$-value at least~2, as required.  
\end{enumerate}

\begin{figure}
\begin{center}

\begin{tikzpicture}
\FLECHE{-1}{0}
\LIGNE{0}{0}{16}{0}
\POINTILLE{-1}{0}{0}{0}
\POINTILLE{16}{0}{17}{0}
   \bSOM{0}{0}{1}{{\bf 1}}
   \dSOM{1}{0}{0}{{\bf 0}}
   \dSOM{2}{0}{0}{{\bf 0}}
   \bSOM{3}{0}{2}{{\bf 1}}
   \dSOM{4}{0}{0}{{\bf 0}}
   \dSOM{5}{0}{0}{{\bf 1}}
   \dSOM{6}{0}{0}{{\bf 0}}
   \bSOM{7}{0}{3}{{\bf 1}}
   \dSOM{8}{0}{0}{{\bf 0}}
	\dSOM{9}{0}{0}{\bf{1}}
	\dSOM{10}{0}{0}{}
	\dSOM{11}{0}{0}{}
	\dSOM{12}{0}{0}{}
	\bSOM{13}{0}{2}{}
\dSOM{14}{0}{0}{}
\dSOM{15}{0}{0}{}
\bSOM{16}{0}{}{}
\end{tikzpicture}

(a) $d'=i_{j+2}-i_{j-1}-f(x_{i_{j+2}})-3$ is even ($13-2-3=8$ in this example)

\vskip 0.8cm

\begin{tikzpicture}
\FLECHE{-1}{0}
\LIGNE{0}{0}{15}{0}
\POINTILLE{-1}{0}{0}{0}
\POINTILLE{15}{0}{16}{0}
   \bSOM{0}{0}{1}{{\bf 1}}
   \dSOM{1}{0}{0}{{\bf 0}}
   \dSOM{2}{0}{0}{{\bf 1}}
   \bSOM{3}{0}{2}{{\bf 0}}
   \dSOM{4}{0}{0}{{\bf 1}}
   \dSOM{5}{0}{0}{{\bf 0}}
   \dSOM{6}{0}{0}{{\bf 1}}
   \bSOM{7}{0}{3}{{\bf 0}}
   \dSOM{8}{0}{0}{1}
	\dSOM{9}{0}{0}{}
	\dSOM{10}{0}{0}{}
	\dSOM{11}{0}{0}{}
	\bSOM{12}{0}{2}{}
	\dSOM{13}{0}{0}{}
\dSOM{14}{0}{0}{}
\bSOM{15}{0}{}{}
\end{tikzpicture}

(b) $d'=i_{j+2}-i_{j-1}-f(x_{i_{j+2}})-3$ is odd ($12-2-3=7$ in this example)

\caption{\label{fig:th-ibPn-c-1} Maximal independent broadcast for the proof of Claim~\ref{cl:que-des-uns}, Case 2.}
\end{center}
\end{figure}

\item $d_{P_n}(x_{i_{j+1}},x_{i_{j+2}}) \geq f(x_{i_{j+2}})+3$.\\
Let $d'= d_{P_n}( x_{i_{j+2}-f( x_{i_{j+2}} )-2}, x_{i_{j-1}+1}) =
i_{j+2}-f(x_{i_{j+2}})-2-i_{j-1}-1=i_{j+2}-i_{j-1}-f(x_{i_{j+2}})-3$,
and $f'$ be the mapping obtained from $f$ by replacing the $f$-values 
$$f(x_{i_{j-1}})0^{i_j-i_{j-1}-1}f(x_{i_j})0^{i_{j+1}-i_j-1}f(x_{i_{j+1}}) 0^{i_{j+2}-f(x_{i_{j+2}})-2 - i_{j+1} }$$
of $x_{i_{j-1}}\dots x_{i_j}\dots x_{i_{j+1}} \dots x_{i_{j+2}-f(x_{i_{j+2}})-2} $ by 
$10(01)^{\frac{d'}{2}}$
if $d'$ is even, or by 
$1(01)^{\frac{d'+1}{2}}$
if $d'$ is odd (see Figure~\ref{fig:th-ibPn-c-1}).


Since $i_{j+2}-i_{j+1} \leq f(x_{i_{j+1}})+f(x_{i_{j+2}})+1$
and $i_{j+1}-i_{j-1} \leq 2f(x_{i_j})+f(x_{i_{j+1}})+2$,
we get 
$i_{j+2}-i_{j-1} \leq 2\ f(x_{i_j})+2f(x_{i_{j+1}}) +f(x_{i_{j+2}})+3$ and thus 
$d'\leq 2f(x_{i_j})+2f(x_{i_{j+1}})$.

Hence, the mapping $f'$ is a maximal independent broadcast on $P_n$ and
we have
$$\sigma(f')-\sigma(f) = \left\lceil\frac{d'}{2}\right\rceil - f(x_{i_{j}}) - f(x_{i_{j+1}}) 
\le \left\lceil\frac{2f(x_{i_j})+2f(x_{i_{j+1}})}{2}\right\rceil - f(x_{i_{j}}) - f(x_{i_{j+1}}) = 0.$$
The optimality of $f$ then implies the optimality of $f'$, 
and the number of broadcast vertices with $f'$-value at least~2 is strictly
less than the number of broadcast vertices with $f$-value at least~2, as required.  
\end{enumerate}

In each case, we were able to construct an  $i_b$-broadcast $f'$
such that the number of broadcast vertices with $f'$-value at least~2 is strictly
less than the number of broadcast vertices with $f$-value at least~2, as required. 
Repeating this construction until no such broadcast vertex exists, we
eventually get an  $i_b$-broadcast $g$
such that $g(x_{i_j})=1$ for every $g$-broadcast vertex $x_{i_j}$.
This completes the proof of Claim~\ref{cl:que-des-uns}.
\end{proof}

By Claim~\ref{cl:que-des-uns}, we can thus now assume that the 
 $i_b$-broadcast $f$ is such that $f(x_{i_j})=1$ for every
$f$-broadcast vertex $x_{i_j}$.
It remains to prove that, for every $n\ge 5$, $\sigma(f)=\left\lceil\frac{2n}{5}\right\rceil$.

For that, we first prove the following claim. Let $w_f$ denote the word on the alphabet $\{0,1\}$
defined by $w_f = f(x_1)\dots f(x_n)$.

\begin{claim}\label{cl:2n-sur-5}
There exists an
 $i_b$-broadcast $f$  on $P_n$, $n\geq 5$, such that $f(x_{i_j})=1$ for every
$f$-broadcast vertex $x_{i_j}$, that satisfies the following properties:
\begin{enumerate}
\item[{\rm (1)}] $w_f$ does not contain the factor $000$, and
\item[{\rm (2)}] $w_f$ does not contain the factor $1001001$.
\item[{\rm (3)}] $0101$ is a prefix of $w_f$,
\item[{\rm (4)}] either $101$ or $1010$ is a suffix of $w_f$.
\end{enumerate}
\end{claim}

\begin{proof}
If $f(x_{i-1})f(x_{i})f(x_{i+1})=000$, then we can set $f(x_{i})=1$, 
contradicting the maximality of $f$, which proves Item~(1).
Similarly, if $f(x_{i-3})\dots f(x_{i})\dots f(x_{i+3})=1001001$, then we can set $f(x_{i})=2$, 
again contradicting the maximality of $f$, which proves Item~(2).


Now, observe that we can have neither
$f(x_1)f(x_2) = 00$, since otherwise we could set $f(x_1)=1$, neither 
$f(x_1)\dots f(x_4) = 0100$, since otherwise we could set $f(x_2)=2$, nor 
$f(x_1)\dots f(x_4) = 1001$, since otherwise we could set $f(x_1)=2$,  
contradicting in each case the maximality of $f$.
Therefore, either $0101$ or $1010$ is a prefix of $w_f$.
In the former case we are done, so let us assume that $1010$ is a prefix of $w_f$.
Suppose that $w_f$ contains the factor $100$ and consider its first occurrence, that is,
suppose that $(10)^k100$ is a prefix of $w_f$ for some $k\ge 1$.
In that case, we can replace the $f$-values $(10)^k100$ of $x_1\dots x_{2k+3}$ by $0(10)^k10$
and we are done.
If $w_f$ does not contain the factor $100$, then we necessarily have either
$w_f=(10)^{\frac{n}{2}}$ or $w_f=(10)^{\frac{n-1}{2}}1$, which gives 
$\sigma(f)=\left\lceil\frac{n}{2}\right\rceil > \left\lceil\frac{2n}{5}\right\rceil$, 
a contradiction.
This proves Item~(3).

By symmetry, using the same argument as for the previous item,
we get that none of $00$, $0010$, or $1001$ can be a suffix of $w_f$.
Hence, either $1010$ or $0101$ is a suffix of $w_f$,
which proves Item~(4).
\end{proof}

By Item~(3) of Claim~\ref{cl:2n-sur-5}, we let $w'_f$ be the
word defined by $w_f=0101w'_f$.
We will now ``split'' $w'_f$ in factors (or blocks) $w_1,\dots, w_q$, $q\ge 0$
($q=0$ meaning that no such block appeared),
each of length 2 or~5,
inductively defined as follows.
\begin{itemize}
\item If $01$ is a prefix of $w'_f$, then $w_1=01$,
\item If $00101$ is a prefix of $w'_f$, then $w_1=00101$,
\item If $w'_f=w_1\dots w_{k-1}w''$ and $01$ is a prefix of $w''$, then $w_k=01$,
\item If $w'_f=w_1\dots w_{k-1}w''$ and $00101$ is a prefix of $w''$, then $w_k=00101$.
\end{itemize}
We then have either $w_f=0101w''$ and $q=0$, or $w_f=0101w_1\dots w_qw''$, for some $q\ge 1$, 
with $w''$ being either empty or 0
(observe that $w''$ cannot start with a 1, and recall that 00 cannot be a suffix of $w_f$),
and, if $q>0$, then $w_i\in\{01,00101\}$ for every $i$, $1\le i\le q$.

Observe that if $w_i=01$ and $w_{i+1}=00101$ for some $i$, $1\le i\le q-1$,
then the mapping $g$ defined by $w_g=0101w_1\dots w_{i-1}.00101.01.w_{i+2}\dots w''$
is still an  $i_b$-broadcast on $P_n$.
Therefore, $f$ can be chosen in such a way that there exists some $q_0\le q$
such that $w_i=00101$ if and only if $i\le q_0$.

We now claim that $f$ can be chosen in such a way that we have at most two blocks equal to 01, 
that is, $q-q_0\le 2$.
Indeed, if we add at least three such blocks, then either $1010101$ or $10101010$
is a suffix of $w_f$. In the former case, $1010101$ could be replaced by $1001010$,
contradicting the optimality of $f$.
In the latter case, we may replace $10101010$ by $10010101$, so that $q-q_0=2$.

Finally, we get that the structure of $w_f$ (recall that $n\ge 5$) is either 
$$01010,\ 010101,\ 0101010,\ 01010101,\ \text{or}\ 0101w_1\dots w_{q_0}w',$$
with ${q_0}\ge 1$, $w_i=00101$ for every $i$, $1\le i\le {q_0}$,
and $w'\in\{\varepsilon,0,01,010,0101,01010\}$.
It is now routine to check that $\sigma(f)=\left\lceil\frac{2n}{5}\right\rceil$ in each case,
which gives $i_b(P_n) = \left\lceil\frac{2n}{5}\right\rceil$ for every $n\ge 3$.

This completes the proof.
\end{proof}

Using Theorem~\ref{th:i-path}, we can also prove a similar result for cycles.


\begin{theorem}\label{th:i-cycle}
For every integer $n\ge 3$, $i_b(C_n)= \left\lceil\frac{2n}{5}\right\rceil.$
\end{theorem}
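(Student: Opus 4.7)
The plan is to follow the structure of the proof of Theorem~\ref{th:i-path}: establish the upper bound by an explicit construction and the lower bound by reducing to the case where every broadcast value equals~$1$.

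For the upper bound $i_b(C_n)\le\lceil 2n/5\rceil$, I would exhibit maximal independent broadcasts on $C_n$ of the desired cost, parametrised by $n\bmod 5$. When $n=5q$, reading the cyclic word $(01010)^q$ around $C_n$ gives $2q$ broadcast vertices of value~$1$, whose consecutive gaps alternate $2,3,2,3,\dots$; independence is immediate, and maximality is routine (no value-$1$ broadcast fits in a $3$-gap without dominating an existing broadcast, and every broadcast vertex has a cyclic neighbour-broadcast at distance exactly $2$, so no value can be raised to~$2$). For the residues $r\in\{1,2,3,4\}$, I would modify the basic pattern by inserting a single short block of length~$r$ so that the cyclic gap sequence stays in $\{2,3\}$ and no two consecutive $3$-gaps appear; a routine case check gives the announced cost $\lceil 2n/5\rceil$.

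For the lower bound, let $f$ be an $i_b$-broadcast on $C_n$ and put $k=|V_f^+|$. The case $n=3$ is handled separately by direct inspection. For $n\ge 4$, if $k=1$ then maximality forces $f(v)=e_{C_n}(v)=\lfloor n/2\rfloor$, and $\lfloor n/2\rfloor\ge\lceil 2n/5\rceil$. If $k\ge 2$, list the broadcast vertices cyclically as $v_1,\dots,v_k$ with values $c_i=f(v_i)$ and gaps $g_i=d_{C_n}(v_i,v_{i+1})$. Arguing exactly as in Lemma~\ref{lem:obs-i}, one obtains the cyclic analogue
\begin{itemize}
\item[(a)] $g_i\ge\max(c_i,c_{i+1})+1$ (independence);
\item[(b)] $g_i\le c_i+c_{i+1}+1$ (else a value-$1$ broadcast could be added in the gap);
\item[(c)] $\min(g_{i-1},g_i)\le c_i+1$ for every $i$ (else $c_i$ could be raised).
\end{itemize}

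The heart of the argument, and the main obstacle, is a cyclic analogue of Claim~\ref{cl:que-des-uns}: some $i_b$-broadcast on $C_n$ satisfies $f(v)=1$ for every $v\in V_f^+$. The local replacement used in the path proof, which rewrites the broadcast on a short arc between three or four consecutive broadcast vertices surrounding a $v_{i_j}$ with $c_{i_j}\ge 2$ by an alternating $01$-pattern of the same total weight, transports verbatim to the cyclic setting provided the chosen arc does not wrap around the whole cycle; this is automatic whenever $k\ge 4$, while the small cases $k\in\{2,3\}$ can be disposed of by a direct enumeration of the admissible $(c_i,g_i)$-configurations using (a)--(c). Once every $c_i=1$, (a)--(b) force $g_i\in\{2,3\}$ and (c) forbids two consecutive $3$-gaps cyclically; writing $a,b$ for the numbers of $2$- and $3$-gaps, one has $2a+3b=n$, $a+b=k$ and $a\ge b$, which yields $5k\ge 2n$ and hence $\sigma(f)=k\ge\lceil 2n/5\rceil$, completing the proof.
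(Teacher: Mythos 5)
Your upper-bound construction and your final counting (once every broadcast value is $1$: gaps in $\{2,3\}$, no two cyclically consecutive $3$-gaps, hence $5k-2n=a-b\ge 0$) are correct, and (a)--(c) are the right cyclic analogues of Lemma~\ref{lem:obs-i}. The difficulty is precisely the step you call the heart of the argument. The assertion that the replacement of Claim~\ref{cl:que-des-uns} ``transports verbatim'' once $k\ge 4$ is not accurate: on the path, the vertex $x_{i_j}$ being repaired is chosen as the \emph{leftmost} broadcast vertex of value at least~$2$, and Claim~\ref{cl:que-des-uns-bords} guarantees that the first two broadcast vertices have value~$1$; together these force $f(x_{i_{j-1}})=1$, which is used twice in the path proof --- once to get $d(x_{i_{j-1}},x_{i_j})\le f(x_{i_j})+2$ from Lemma~\ref{lem:obs-i}(3), and once in the cost accounting $\sigma(f')-\sigma(f)=\lfloor d/2\rfloor - f(x_{i_j})-f(x_{i_{j+1}})$, which charges only~$1$ for the old value at $x_{i_{j-1}}$. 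On a cycle there is no leftmost vertex and no analogue of Claim~\ref{cl:que-des-uns-bords}, so if every broadcast vertex has value at least~$2$ your replacement applies to no vertex at all, and even when some value equals~$1$ you must still argue that some value-$\ge 2$ vertex has a value-$1$ cyclic predecessor. This is a genuine gap; it is repairable (for instance, (a)--(c) already show that a broadcast with all values $\ge 2$ has average gap at most $c_i+1+\text{something}$ forcing cost at least roughly $n/2$, which is too expensive), but it is not a verbatim transport, and it is exactly the part of the proof that cannot be waved through.

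For comparison, the paper sidesteps the whole issue with a shorter reduction. For $k\ge 3$ it observes that some gap must be saturated, i.e.\ $d_{C_n}(x_{i_j},x_{i_{j+1}})=f(x_{i_j})+f(x_{i_{j+1}})+1$ for some $j$ (otherwise summing the gaps gives $n\le 2\sigma(f)$, contradicting the upper bound already proved), then cuts the cycle inside that saturated gap and checks that the induced broadcast on $P_n$ is still a \emph{maximal} independent broadcast of the same cost; Theorem~\ref{th:i-path} then gives $i_b(C_n)\ge i_b(P_n)=\lceil\frac{2n}{5}\rceil$ directly, with no need to normalise the values to~$1$ on the cycle. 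The cases $k=1$ and $k=2$ are handled essentially as you do. One last remark: your ``direct inspection'' of $C_3$ will in fact reveal $i_b(C_3)=1\ne\lceil\frac{6}{5}\rceil$ (only a single broadcast vertex of value~$1$ is possible on $C_3$), consistent with the exclusion $n\ne 3$ in Table~\ref{tab:paths-cycles} rather than with the theorem as literally stated.
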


\begin{proof}
Observe first that 010 and 0101 are   $i_b$-broadcasts on $C_n$
with cost $\left\lceil\frac{2n}{5}\right\rceil$, when $n=3,4$, respectively,
and that the five functions $f_0,\dots,f_4$, defined in the proof of Theorem~\ref{th:i-path},
are also   $i_b$-broadcasts on $C_n$, $n\ge 5$, with cost $\left\lceil\frac{2n}{5}\right\rceil$.
We thus have $i_b(C_n) = \left\lceil\frac{2n}{5}\right\rceil$ for $n=3,4$, and
$i_b(C_n) \leq \left\lceil\frac{2n}{5}\right\rceil$ for every $n\geq 5$.

We now prove the opposite inequality when $n\ge 5$.
For that, let $f$ be any   $i_b$-broadcast on $C_n$, $n\ge 5$.
Suppose first that $|V^+_f|=1$, which implies $i_b(C_n)=\diam(C_n)=\left\lfloor\frac{n}{2}\right\rfloor$.
As observed in the proof of Theorem~\ref{th:i-path}, according to the inequality
we established before, this situation can only happen if $n\in S=\{1,\dots,9,11,13\}$.

Suppose now that $n\notin S$, which implies $|V^+_f| \geq 2$ and, in particular, $n\ge 10$.
We now claim that we necessarily have $|V^+_f| \geq 3$.
Indeed, suppose to the contrary that $|V^+_f| = 2$, and let $V^+_f=\{x_{i_1},x_{i_2}\}$.
Denote by $Q_1=x_{i_1}x_{i_1+1}\dots x_{i_2}$, and $Q_2=x_{i_2}x_{i_2+1}\dots x_{i_1}$
the two paths joining $x_{i_1}$ and $x_{i_2}$.
Since $f$ is maximal, we can assume, without loss of generality,
that $|Q_1|=d_{C_n}(x_{i_1},x_{i_2})=f(x_{i_1})+1=f(x_{i_2})+1$, which gives $f(x_{i_1})=f(x_{i_2})$.
Since Item~3 of Lemma~\ref{lem:obs-i} also holds for cycles, we have 
$|Q_2| \le f(x_{i_1})+f(x_{i_2})+1 = 2f(x_{i_1}) + 1$, which gives
$$n = |Q_1| + |Q_2| \leq f(x_{i_1})+1 + 2f(x_{i_1}) + 1 = 3f(x_{i_1}) + 2.$$
We then get $f(x_{i_1}) \geq \left\lceil\frac{n-2}{3}\right\rceil$, and thus
$$i_b(C_n) = 2f(x_{i_1}) \geq 2\cdot\left\lceil\frac{n-2}{3}\right\rceil,$$
a contradiction with the inequality we established before since 
$2\left\lceil\frac{n-2}{3}\right\rceil > \left\lceil\frac{2n}{5}\right\rceil$
when $n\ge 10$.

We can thus assume $|V^+_f|\geq 3$, and let $V^+_f = \{x_{i_0},\dots,x_{i_{t-1}}\}$, $t\ge 3$.
We first claim that there exists some $j$, $0\le j\le t-1$,
such that $d_{C_n}(x_{i_j},x_{i_{j+1}})=f(x_{i_j})+f(x_{i_{j+1}})+1$ (subscripts are taken modulo~$t$).
Indeed, if this is not the case, we get
$$n = \sum_{0\le j\le t-1}d_{C_n}(x_{i_j},x_{i_{j+1}}) \leq 2\sum_{0\le j\le t-1}f(x_{i_j}) = 2i_b(C_n),$$
which gives $i_b(C_n)\geq \left\lceil\frac{n}{2}\right\rceil$,
in contradiction with the inequality $i_b(C_n)\leq \left\lceil\frac{2n}{5}\right\rceil$ we established before,
since $n\notin S$.

We can thus suppose, without loss of generality, that $d_{C_n}(x_{i_1},x_{i_2})=f(x_{i_1})+f(x_{i_2})+1$,
which implies that $d_{C_n}(x_{i_2},x_{i_3}) = f(x_{i_2}) +1$ (we may have $x_{i_3}=x_{i_0}$) and, similarly,
that $d_{C_n}(x_{i_0},x_{i_1}) = f(x_{i_1})+1$.
%
To avoid confusion, let us denote $P_n=y_0y_1\dots y_{n-1}$, 
with $y_{f(x_{i_2})} = x_{i_2}$,
and let $g$ be the function
defined by $g(y_j)=f(x_{j-f(x_{i_2})+i_2})$ for every $j$, $0\le j\le n-1$
(subscripts are taken modulo~$n$).
Observe that both $y_0$ and $y_{n-1}$ are $g$-dominated,
since all vertices lying between $x_{i_1}$ and $x_{i_2}$ were $f$-dominated in $C_n$.
Moreover, we cannot increase the $g$-value of $y_{f(x_{i_2})}$ (the leftmost $g$-broadcast vertex in $P_n$)
since we had $d_{C_n}(x_{i_2},x_{i_3}) = f(x_{i_2})+1$,
neither the value of $y_{n-f(x_{i_1})-1}$ (the rightmost $g$-broadcast vertex in $P_n$)
since we had $d_{C_n}(x_{i_0},x_{i_1}) = f(x_{i_1})+1$.

Since $f$ was a maximal independent broadcast on $C_n$, we thus get that $g$ is 
a maximal independent broadcast on $P_n$, which gives $i_b(P_n)\leq i_b(C_n)$,
and thus $i_b(C_n)\geq \left\lceil\frac{2n}{5}\right\rceil$, as required. 
\end{proof}

\subsection{Broadcast irredundance number and broadcast domination number}

Erwin proved in~\cite{Erw04} that  $\gamma_b(P_n) = \gamma(P_n)=\lceil n/3\rceil$. 
Knowing the value of $\gamma_b(P_n)$, we can infer the value of $\gamma_b(C_n)$. 
Indeed, Bre\v sar and \v Spacapan proved in~\cite{BS09} that, for every connected graph $G$, there is a spanning tree $T$ of $G$ 
such that $\gamma_b(G)=\gamma_b(T)$. 
Since spanning trees of the cycle $C_n$ are all isomorphic to the path $P_n$, we get the following result.

\begin{proposition}\label{prop:gamma-path-cycle}
For every integer $n\geq 3$, $\gamma_b(C_n)=\gamma_b(P_n)=\lceil\frac{n}{3}\rceil$.
\end{proposition}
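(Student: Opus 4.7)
The plan is to combine two already-cited results and one trivial structural observation, leaving essentially no technical work. First, Erwin's theorem, recalled immediately above the statement, gives $\gamma_b(P_n) = \lceil n/3 \rceil$, so it suffices to prove $\gamma_b(C_n) = \gamma_b(P_n)$. Second, I would invoke the theorem of Bre\v sar and \v Spacapan: every connected graph $G$ admits a spanning tree $T$ with $\gamma_b(G) = \gamma_b(T)$. Finally, the structural observation is that every spanning tree of $C_n$ is obtained by deleting a single edge and is therefore isomorphic to $P_n$; hence the Bre\v sar--\v Spacapan spanning tree furnished for $C_n$ must be a copy of $P_n$, and the equality $\gamma_b(C_n) = \gamma_b(P_n)$ follows immediately.

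If one wished to see the two inequalities separately, the direction $\gamma_b(C_n) \le \gamma_b(P_n)$ is elementary: any dominating broadcast on $P_n$ remains dominating when transferred to $C_n$ via the natural vertex identification, because $d_{C_n}(u,v) \le d_{P_n}(u,v)$ for every pair of vertices $u,v$. The nontrivial direction is $\gamma_b(P_n) \le \gamma_b(C_n)$, and here lies the only conceivable obstacle, namely showing that an optimal dominating broadcast on $C_n$ can be ``unrolled'' into a dominating broadcast of no greater cost on some spanning path. The Bre\v sar--\v Spacapan theorem settles this point in full generality, and since for a cycle every spanning tree is isomorphic to $P_n$ the general statement specializes to exactly what we need, with no further case analysis. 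Consequently, chaining $\gamma_b(C_n)=\gamma_b(P_n)=\lceil n/3 \rceil$ completes the proof.
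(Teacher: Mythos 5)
Your proof is correct and follows exactly the paper's own route: quote Erwin's value $\gamma_b(P_n)=\lceil n/3\rceil$, invoke the Bre\v{s}ar--\v{S}pacapan spanning-tree theorem, and observe that every spanning tree of $C_n$ is a copy of $P_n$. (Your optional aside about transferring a broadcast from $P_n$ to $C_n$ would need the small check that $f(v)\le e_{C_n}(v)=\lfloor n/2\rfloor$ still holds, but this is not needed for the main argument, which is complete as stated.)
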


We now consider the broadcast irredundance number of paths.
For that, we first prove the two following lemmas.

\begin{lemma}\label{lem:MaxIrr-H-vide}
Let $f$ be a maximal irredundant broadcast on $P_n$. If $H_f(x_{i})= \emptyset$ for some vertex $x_i$, then $N_{P_n}(x_i)\cap N_f(V^+_f)\neq \emptyset$.
\end{lemma}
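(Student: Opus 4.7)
The plan is to proceed by contradiction: suppose $f$ is a maximal irredundant broadcast on $P_n$ with $H_f(x_i)=\emptyset$ and, additionally, that no neighbor of $x_i$ in $P_n$ lies in $N_f(V^+_f)$; that is, neither $x_{i-1}$ nor $x_{i+1}$ (whichever exists) is $f$-dominated. The goal is then to modify $f$ into a strictly larger irredundant broadcast, which is forbidden by maximality. The obvious candidate is the function $g$ obtained from $f$ by setting $g(x_i):=1$ and $g(y):=f(y)$ for $y\neq x_i$; this is a legal broadcast as soon as $n\ge 2$ (so that $e_{P_n}(x_i)\ge 1$), and it satisfies $g\ge f$ pointwise with $g\neq f$.

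Most of the work then consists in verifying that $g$ is still irredundant, i.e.\ that $PB_g(v)\neq\emptyset$ for every $v\in V^+_g=V^+_f\cup\{x_i\}$. The guiding observation is that the only vertices whose set of heard broadcasters changes when passing from $f$ to $g$ are those of $N_g[x_i]\subseteq\{x_{i-1},x_i,x_{i+1}\}$, and by hypothesis none of these was $f$-dominated. Hence, for any original broadcast vertex $v\in V^+_f$, the set $PN_f(v)$ (which is contained in $N_f(V^+_f)$) is disjoint from $N_g[x_i]$; consequently no private $f$-neighbor of $v$ is ``stolen'' by $x_i$, and we obtain $PN_g(v)=PN_f(v)$. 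Since distances are unchanged, this gives $PB_g(v)=PB_f(v)\neq\emptyset$, using the irredundance of $f$.

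For the new broadcast vertex $x_i$ itself, pick any neighbor $y\in N_{P_n}(x_i)$, which exists because $n\ge 2$: by hypothesis $y$ is not $f$-dominated, so $H_g(y)=\{x_i\}$, and since $d_{P_n}(y,x_i)=1=g(x_i)$ we get $y\in PB_g(x_i)$. Combining the two checks, $g$ is an irredundant broadcast with $g\ge f$ and $g\neq f$, contradicting the maximality of $f$.

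The main subtlety, and the only real candidate for an obstacle, is the bookkeeping around the definition of $PB_g(v)$ when $g(v)=1$, since the definition splits according to whether $PN_g(v)=\{v\}$ or not. Here, however, the existence of a non-$f$-dominated neighbor $y$ of $x_i$ puts us squarely in the ``otherwise'' case and furnishes the required bordering private neighbor of $x_i$ explicitly, while for the existing broadcast vertices the equality $PN_g(v)=PN_f(v)$ shows that none of them slides into the $PN_g(v)=\{v\}$ regime either. The endpoint cases $i=1$ and $i=n$ are handled by the same argument applied to the unique available neighbor of $x_i$.
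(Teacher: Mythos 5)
Your proof is correct and follows exactly the paper's argument: the paper's proof is the one-line observation that setting $f(x_i)=1$ would contradict maximality, and you have simply spelled out the verification that the resulting function remains an irredundant broadcast. The bookkeeping you supply (that $PN_g(v)=PN_f(v)$ for the old broadcast vertices because none of $x_{i-1},x_i,x_{i+1}$ was $f$-dominated, and that a non-dominated neighbor of $x_i$ becomes its bordering private neighbor) is precisely what the paper leaves implicit.
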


\begin{proof} 
Assume to the contrary that we have $N_{P_n}(x_i)\cap N_f(V^+_f) = \emptyset$.
In that case, we could set $f(x_i)=1$, contradicting the maximality of~$f$.
\end{proof}

\begin{lemma}\label{lem:MaxIrrPn-et-irrPn}
For every integer $n\geq 3$, the following statements hold.
\begin{enumerate}
\item If $f$ is a maximal irredundant broadcast on $P_n$, then $H_f(x_2)\neq \emptyset$ and $H_f(x_{n-1})\neq \emptyset$.
\item There exists an  $ir_b$-broadcast $f$ on $P_n$ such that $H_f(x_1)\neq \emptyset$ and $H_f(x_{n})\neq \emptyset$.
\end{enumerate}
\end{lemma}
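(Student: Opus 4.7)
For Item~1, I proceed by contradiction. Assume $H_f(x_2)=\emptyset$; then $H_f(x_1)=\emptyset$ as well, since any broadcast vertex $x_k$ with $k\geq 2$ that reaches $x_1$ automatically reaches $x_2$ (on a path, $d_{P_n}(x_k,x_2)<d_{P_n}(x_k,x_1)$). Define $g$ by $g(x_1)=1$ and $g(v)=f(v)$ for $v\neq x_1$: then $x_2\in PB_g(x_1)$ because $H_g(x_2)=\{x_1\}$ and $d_{P_n}(x_1,x_2)=1=g(x_1)$, while the private borders of the other broadcast vertices are unaffected since neither $x_1$ nor $x_2$ belonged to any $PN_f(v)$. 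Thus $g$ is an irredundant broadcast with $\sigma(g)>\sigma(f)$, contradicting the maximality of~$f$. The claim $H_f(x_{n-1})\neq\emptyset$ follows by the mirror symmetry of~$P_n$.

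For Item~2, I start with an arbitrary $ir_b$-broadcast $f$ and, whenever $H_f(x_1)=\emptyset$, modify it into an $ir_b$-broadcast $g$ with $H_g(x_1)\neq\emptyset$; the right endpoint is then handled by applying a symmetric modification to~$g$. Let $x_j$ be the leftmost $f$-broadcast vertex. The assumption yields $f(x_j)\leq j-2$, and Item~1 guarantees that some broadcast vertex dominates~$x_2$. A short argument using the irredundance of $f$ at~$x_j$ shows this vertex must be $x_j$ itself: any broadcast vertex $x_k$ with $k>j$ and $f(x_k)\geq k-2$ would also dominate both bordering candidates of~$x_j$, forcing $PB_f(x_j)=\emptyset$. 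Hence $f(x_j)=j-2\geq 1$, $j\geq 3$, and $x_2\in PB_f(x_j)$.

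The key structural step is that $x_{2j-2}\notin PB_f(x_j)$: either $x_{2j-2}$ does not exist or it is dominated by some $f$-broadcast vertex distinct from~$x_j$. Otherwise, defining $f'$ by $f'(x_1)=1$ and $f'(v)=f(v)$ for $v\neq x_1$ would yield an irredundant broadcast with $\sigma(f')>\sigma(f)$: $x_1$ becomes its own bordering private $f'$-neighbor via the special case, $x_{2j-2}$ remains bordering-private to $x_j$ (its dominator-set is unchanged by the new broadcaster at~$x_1$), and all other private borders are preserved, contradicting the maximality of~$f$. With this in hand, I define the shift $g$ by $g(x_{j-1})=j-2$, $g(x_j)=0$, and $g(v)=f(v)$ elsewhere. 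Verifying irredundance of~$g$ is routine: $x_1$ serves as a bordering private $g$-neighbor of~$x_{j-1}$, and for any $f$-broadcast vertex $x_m$ with $m>j$, one checks that $PB_f(x_m)\subseteq\{x_{2j-1},\dots,x_n\}$, so $PB_g(x_m)\supseteq PB_f(x_m)$. Thus $\sigma(g)=\sigma(f)=ir_b(P_n)$ and $H_g(x_1)\neq\emptyset$.

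The main obstacle is proving that $g$ is maximal. I enumerate the possible extensions. Raising $g(x_{j-1})$ to $j-1$ is blocked because the lone candidate border $x_{2j-2}$ is either absent or already dominated by another broadcast vertex. The choice $g'(x_j)=1$ fails since $x_{j-1}$ and the reintroduced $x_j$ jointly dominate $\{x_{j-1},x_j\}$, while $x_{j+1}$ is covered either by $x_{j-1}$ (when $j\geq 4$) or, when $j=3$, by the broadcast vertex that dominates $x_{2j-2}=x_4$ in Case~1. Finally, for any extension of the form $g'(v)=g(v)+1$ at some $v\notin\{x_{j-1},x_j\}$, I compare candidate borders under $g'$ with those of the parallel extension $f'(v)=f(v)+1$: candidates with index in $\{1,\dots,2j-3\}$ are doubly dominated thanks to~$x_{j-1}$, the vertex $x_{2j-2}$ is doubly dominated by its alternative dominator together with~$v$ (a short calculation shows the alternative dominator cannot coincide with~$v$ in Case~1), and candidates with index $\geq 2j-1$ have identical dominator-sets under $g'$ and~$f'$, so the failure of~$f'$ to be irredundant (forced by the maximality of~$f$) transfers verbatim to~$g'$. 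Hence $g$ is an $ir_b$-broadcast with $H_g(x_1)\neq\emptyset$, and the symmetric construction applied to the rightmost broadcast vertex of~$g$ completes the proof.
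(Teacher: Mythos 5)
Your argument follows essentially the same route as the paper's: Item~1 is the same augment-at-$x_1$ contradiction (the paper routes it through Lemma~\ref{lem:MaxIrr-H-vide}), and for Item~2 you identify the broadcaster $x_j$ of $x_2$, deduce $f(x_j)=j-2$ and $PB_f(x_j)=\{x_2\}$, and shift that broadcast one step to the left, which is exactly the paper's construction of the function~$h$. The only blemishes are cosmetic --- you twice invoke an undefined ``Case~1'' (presumably the case where $x_{2j-2}$ exists and is dominated by another broadcast vertex), and your maximality check only enumerates unit increments at a single vertex --- but the paper asserts the maximality of the shifted broadcast with no verification at all, so your write-up is, if anything, more detailed on that point.
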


\begin{proof} 
We prove the two statements separately.
\begin{enumerate}
\item 
If $H_f(x_2)= \emptyset$, then $x_2$ is not  $f$-dominated, and consequently $x_1$ is also not  $f$-dominated. 
We then have $N_{P_n}(x_1)\cap N_f(V^+_f)= \emptyset$, in contradiction with Lemma~\ref{lem:MaxIrr-H-vide}.
The case $H_f(x_{n-1})= \emptyset$ is similar.

\item Let $g$ be an  $ir_b$-broadcast on $P_n$. 
If $H_g(x_1)\neq \emptyset$ and $H_g(x_{n})\neq \emptyset$, then we let $f:=g$ and we are done.

Suppose that we have $H_g(x_1)=\emptyset$.
By the previous item, we know that $x_2$ is $g$-dominated by some vertex $x_i$, $i>2$,
such that $f(x_i)=d_{P_n}(x_2,x_i)$. 
We then necessarily have $|PB_g(x_i)|=1$ if $g(x_i)\geq 2$, and $x_i\notin PN_g(x_i)$ if $g(x_i)=1$,
since otherwise we could set $g(x_1)=1$, contradicting the optimality of $g$.

Now, observe that the function $h$ obtained from $g$ by setting $h(x_i)=0$ and $h(x_{i-1})=g(x_i)$
is a maximal irredundant broadcast on $P_n$, 
with cost $\sigma(h)=\sigma(g)=ir_b(P_n)$, that satisfies $H_h(x_1)\neq \emptyset$.

If $H_h(x_{n})\neq \emptyset$, then we let $f:=h$ and we are done.
Otherwise, using the same reasoning (by symmetry), we claim that can produce an  $ir_b$-broadcast $f$ on $P_n$
such that $H_f(x_1)\neq \emptyset$ and $H_f(x_{n})\neq \emptyset$.
Observe first that we cannot have 
$|V_g^+|=2$ if $H_g(x_1)= \emptyset$ and $H_g(x_n)= \emptyset$, since in that case we could
increase by~1 the $g$-values of $x_{i_1}$ and $x_{i_2}$, contradicting the maximality of $g$.
Therefore, 
if $g$ was such that $H_g(x_1)=\emptyset$ and $H_g(x_n)=\emptyset$,
then the optimality of $g$ implies $|V_g^+| \geq 3$, so that the modification of $h$
does not affect $h(x_i)$, and thus $x_1$ is still $f$-dominated.
\end{enumerate}
This concludes the proof.
\end{proof}


We are now able to determine the value of the broadcast irredundance number 
and of the broadcast domination number of paths.

\begin{figure}
\begin{center}
\begin{tikzpicture}
\FLECHE{0}{0}
\LIGNE{1}{0}{8}{0}
\POINTILLE{0}{0}{1}{0}
\POINTILLE{8}{0}{9}{0}
   \dSOM{1}{0}{0}{}
   \bSOM{2}{0}{1}{}
   \dSOM{3}{0}{0}{}
   \bSOM{4}{0}{2}{{\bf 0}}
   \dSOM{5}{0}{0}{{\bf 2}}
   \dSOM{6}{0}{0}{}
   \SOM{7}{0}{0}{}
   \dSOM{8}{0}{0}{}
	
\end{tikzpicture}

(a) $x_{j'}=x_{j'^p+1}$ and $x_{i+1}$ is $f$-dominated
\vskip 0.8cm

\begin{tikzpicture}
\FLECHE{0}{0}
\LIGNE{1}{0}{8}{0}
\POINTILLE{0}{0}{1}{0}
\POINTILLE{8}{0}{9}{0}
   \dSOM{1}{0}{0}{}
   \bSOM{2}{0}{1}{}
   \dSOM{3}{0}{0}{}
   \bSOM{4}{0}{2}{{\bf 0}}
   \dSOM{5}{0}{0}{}
   \dSOM{6}{0}{0}{{\bf 2}}
   \SOM{7}{0}{0}{}
   \SOM{8}{0}{0}{}
	
\end{tikzpicture}

(b) $x_{j'}=x_{j'^p+1}$ and $x_{i+1}$ is not $f$-dominated
\vskip 0.8cm

\begin{tikzpicture}
\FLECHE{-1}{0}
\LIGNE{0}{0}{7}{0}
\POINTILLE{-1}{0}{0}{0}
\POINTILLE{7}{0}{8}{0}
   \dSOM{0}{0}{0}{}
   \dSOM{1}{0}{0}{{\bf 1}}
   \bSOM{2}{0}{2}{{\bf 0}}
   \bSOM{3}{0}{2}{{\bf 0}}
   \dSOM{4}{0}{0}{{\bf 2}}
   \dSOM{5}{0}{0}{}
   \SOM{6}{0}{0}{}
   \dSOM{7}{0}{0}{}
	
\end{tikzpicture}

(c) $x_{j'} \neq x_{j'^p+1}$ and $x_{i+1}$ is $f$-dominated
\vskip 0.8cm

\begin{tikzpicture}
\FLECHE{-1}{0}
\LIGNE{0}{0}{7}{0}
\POINTILLE{-1}{0}{0}{0}
\POINTILLE{7}{0}{8}{0}
   \dSOM{0}{0}{0}{}
   \dSOM{1}{0}{0}{{\bf 1}}
   \bSOM{2}{0}{2}{{\bf 0}}
   \bSOM{3}{0}{2}{{\bf 0}}
   \dSOM{4}{0}{0}{}
   \dSOM{5}{0}{0}{{\bf 2}}
   \SOM{6}{0}{0}{}
   \SOM{7}{0}{0}{}
	
\end{tikzpicture}

(d) $x_{j'} \neq x_{j'^p+1}$ and $x_{i+1}$ is not $f$-dominated
\caption{\label{fig:th-irPn} Maximal irredundant broadcast for the proof of Theorem~\ref{th:ir-gamma-path}.}
\end{center}
\end{figure}
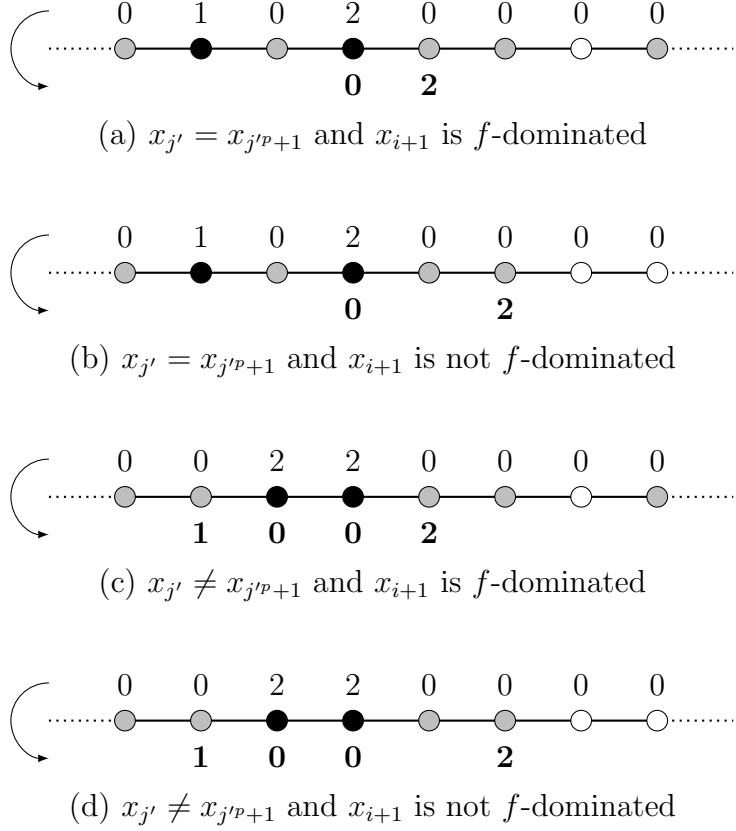

\begin{theorem}\label{th:ir-gamma-path}
For every integer $n\ge 2$, $ir_b(P_n)= \gamma_b(P_n)= \left\lceil\frac{n}{3}\right\rceil.$
\end{theorem}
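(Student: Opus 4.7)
By Corollary~\ref{cor:Ahmadi-inequalities} combined with Erwin's equality $\gamma_b(P_n)=\lceil n/3\rceil$, we immediately get $ir_b(P_n)\le \gamma_b(P_n)=\lceil n/3\rceil$, so only the reverse inequality $ir_b(P_n)\ge \lceil n/3\rceil$ remains to be proved.

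The plan is to show that any $ir_b$-broadcast $f$ on $P_n$ can be transformed, without ever increasing its cost, into a maximal irredundant broadcast that is also dominating. Since every dominating maximal irredundant broadcast is a minimal dominating broadcast by Observation~\ref{obs:definitions}(4), such a transformed broadcast has cost at least $\gamma_b(P_n)=\lceil n/3\rceil$, and hence so does the original $f$.

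To carry out the transformation, I will proceed by induction on the number of $f$-undominated vertices. By Lemma~\ref{lem:MaxIrrPn-et-irrPn}(2), I may assume that $x_1$ and $x_n$ are $f$-dominated, so (using Lemma~\ref{lem:MaxIrrPn-et-irrPn}(1) for $x_2$ and $x_{n-1}$) any undominated vertex $x_i$ satisfies $3\le i\le n-2$. Pick any such $x_i$. By Lemma~\ref{lem:MaxIrr-H-vide} at least one of $x_{i-1},x_{i+1}$ is $f$-dominated; by symmetry, assume it is $x_{i-1}$. Then the rightmost $f$-broadcast vertex to the left of $x_i$, call it $x_j$, must satisfy $f(x_j)=i-1-j$, with $x_{i-1}\in PB_f(x_j)$. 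The core move is to slide $x_j$'s coverage one step to the right by setting $f'(x_j)=0$ and $f'(x_{j+1})=f(x_j)$, which makes $x_i$ dominated. A case analysis, organised by whether $x_{i+1}$ is $f$-dominated and whether the leftmost vertex previously covered by $x_j$ has an $f$-broadcast neighbour immediately to its right (the four configurations in Figure~\ref{fig:th-irPn}), will show that this slide, possibly combined with absorbing an adjacent broadcast vertex on the left into the shifted one, can always be executed so that the resulting $f'$ is a maximal irredundant broadcast with $\sigma(f')\le\sigma(f)$ and strictly fewer undominated vertices than $f$.

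The main obstacle is verifying each of the four shifts: I must ensure that every $f'$-broadcast vertex retains a bordering private $f'$-neighbour (irredundancy), that no previously $f$-dominated vertex becomes $f'$-undominated (progress toward domination), and that $f'$ cannot be locally increased (maximality). The most delicate subcase is when the private border of $x_j$ consists solely of the leftmost vertex it covers; the naive slide would leave that vertex undominated, so one must instead merge $x_j$ with its nearest broadcast neighbour on the left, which is precisely why the argument splits into four cases rather than two.
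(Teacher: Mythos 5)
Your strategy coincides with the paper's: reduce the problem to turning a minimum-cost maximal irredundant broadcast into a dominating one without increasing its cost, conclude via Observation~\ref{obs:definitions}(4), and realise the transformation by sliding the broadcast vertex $x_j$ whose coverage ends at $x_{i-1}$ one or two steps to the right, splitting into the four configurations of Figure~\ref{fig:th-irPn}. Your identification of $x_j$ (the rightmost broadcast vertex to the left of the undominated vertex $x_i$, with $f(x_j)=i-1-j$ and $x_{i-1}\in PB_f(x_j)$) is correct, as is the overall induction on the number of undominated vertices.

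The gap is that the entire technical content lives in the case analysis you defer, and the one concrete claim you make about it points the wrong way. Because $x_i$ is undominated and $f$ is maximal, $PB_f(x_j)=\{x_{i-1}\}$ always: the private border of $x_j$ is its \emph{rightmost} covered vertex, so the subcase you single out as delicate (``the private border of $x_j$ consists solely of the leftmost vertex it covers'') cannot occur. Nor does the naive slide leave the leftmost covered vertex $x_{j-f(x_j)}$ undominated, since the nearest broadcast vertex $x_{j'}$ to the left of $x_j$ lies inside $x_j$'s coverage (its bordering private neighbour $x_{j'^p}$ satisfies $d_{P_n}(x_{j'^p},x_j)=f(x_j)+1$ by maximality) and still covers it. The real work on the left, which your sketch does not contain, is the paper's replacement of $x_{j'}$ by a value-$1$ broadcast at $x_{j'^p+1}$: one must verify that the resulting function is still a maximal irredundant broadcast, compute its cost as $\sigma(f)-f(x_{j'})+1$, and invoke the minimality of $ir_b(P_n)$ to force $f(x_{j'})=1$, whence $\sigma(g)=\sigma(f)$. ``Absorbing an adjacent broadcast vertex on the left into the shifted one'' is not that move, and without an explicit construction together with the three verifications you yourself list (irredundancy, maximality, no newly undominated vertices), the key step of the proof remains unestablished.
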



\begin{proof} 
By Corollary~\ref{cor:Ahmadi-inequalities} and Proposition~\ref{prop:gamma-path-cycle}, we only need to prove that $ \gamma_b(P_n)\leq ir_b(P_n)$. 
For this, it is enough to construct, from any non-dominating   $ir_b$-broadcast, 
a dominating  $ir_b$-broadcast.

Let $f$ be an  $ir_b$-broadcast on $P_n$. By Lemma~\ref{lem:MaxIrrPn-et-irrPn},
we can assume that $x_1$, $x_2$, $x_{n-1}$ and $x_n$ are $f$-dominated.
If $f$ is dominating, then we are done.
Thus suppose that $f$ is non-dominating, and let $x_i$, $3\le i\le n-2$, be the leftmost non-dominated vertex.
We will prove that there exists a maximal irredundant broadcast $g$ on $P_n$,
with $\sigma(g)=\sigma(f)=ir_b(P_n)$,
such that 
the number of vertices that are not $g$-dominated is strictly less than
the number of vertices that are not $f$-dominated.

\medskip


Let $x_{j}$, $j\le i-2$, denote the $f$-broadcast vertex that dominates $x_{i-1}$.
Since $x_i$ is not $f$-dominated, we necessarily have $x_{i-1}\in PB(x_{j})$.
Observe that we have $|PB(x_{j})|=1$, that is, 
the bordering private $f$-neighbor of $x_j$ is $x_{i-1}$, since otherwise we could set $f(x_{i-1})=1$, 
contradicting the maximality of $f$.
Note also that $x_1$ is not $f$-dominated by $x_j$, that is, $x_j$ is not the
leftmost $f$-broadcast vertex, since otherwise we could increase the $f$-value
of $x_j$ by~1, again contradicting the maximality of $f$.
Let then $x_{j'}$, $j'<j$, denote the closest $f$-broadcast vertex to the left of $x_j$,
and $x_{j'^p}$, $j'^p<j'$, denote the bordering private $f$-neighbor of $x_{j'}$.

Since $x_{j'^p}$ is the bordering private $f$-neighbor of $x_{j'}$, we necessarily have 
$d_{P_n}(x_{j'^p},x_j) \geq f(x_j)+1$.
Moreover, we necessarily have $d_{P_n}(x_{j'^p},x_j) = f(x_j)+1$, since otherwise
we could increase the value of $f(x_j)$ by~1 ($x_i$ becoming the bordering private
$f$-neighbor of $x_j$), contradicting the maximality of $f$.
Hence, we have $d_{P_n}(x_{j'^p},x_j) = f(x_j)+1$, and thus 
$$d_{P_n}(x_{j'^p},x_i) = d_{P_n}(x_{j'^p},x_j) + d_{P_n}(x_j,x_i) = f(x_{j})+1 + f(x_{j})+1 = 2f(x_j)+2.$$

Let now $g$ be the function obtained from $f$ by setting
\begin{itemize}
\item $g(x_j)=0$, 
\item $g(x_{j'^p+1})=1$ and $g(x_{j'})=0$ if $x_{j'}\neq x_{j'^p+1}$,  and
\item $g(x_{j+1})=f(x_j)$ if $x_{i+1}$ is $f$-dominated, or $g(x_{j+2})=f(x_j)$ if $x_{i+1}$ is not $f$-dominated 
(see Figure~\ref{fig:th-irPn}).
\end{itemize}
Observe that $x_{j'^p}$ is a bordering private $g$-neighbor of $x_{j'^p+1}$,
and that either $x_i$ is a bordering private $g$-neighbor of $x_{j+1}$ (if $x_{i+1}$ is $f$-dominated),
or $x_{i+1}$ is a bordering private $g$-neighbor of $x_{j+2}$ (if $x_{i+1}$ is not $f$-dominated).
Moreover, all vertices $x_1,\dots,x_i$ are $g$-dominated.
Hence, $g$ is a maximal irredundant broadcast with cost
$$\sigma(g) = \sigma(f) - f(x_j) - f(x_{j'}) + f(x_j) + 1 = \sigma(f) - f(x_{j'}) + 1.$$
The optimality of $f$ then imply $f(x_{j'}) = 1$,
so that $g$ is an  $ir_b$-broadcast on $P_n$ such that
the number of vertices that are not $g$-dominated is strictly less than
the number of vertices that are not $f$-dominated.

By repeating this modification while they remain non-dominated vertices,
we eventually get a dominating  $ir_b$-broadcast on $P_n$, which
concludes the proof.
\end{proof}


Using Theorem~\ref{th:ir-gamma-path}, we can prove
a similar result for cycles.
%


\begin{theorem}\label{th:ir-gamma-cycle}
For every integer $n\geq 3$, $ir_b(C_n)= \gamma_b(C_n)= \left\lceil\frac{n}{3}\right\rceil$.
\end{theorem}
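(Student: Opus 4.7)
The plan is to combine Corollary~\ref{cor:Ahmadi-inequalities} with Proposition~\ref{prop:gamma-path-cycle}, which together give $ir_b(C_n)\leq \gamma_b(C_n)=\lceil n/3\rceil$, and then establish the reverse inequality $\gamma_b(C_n)\leq ir_b(C_n)$ by mimicking the transformation used in the proof of Theorem~\ref{th:ir-gamma-path}. Concretely, from any non-dominating $ir_b$-broadcast $f$ on $C_n$ I will construct a dominating maximal irredundant broadcast with the same cost; by Observation~\ref{obs:definitions}(4) this is a minimal dominating broadcast, so its cost $ir_b(C_n)$ dominates $\gamma_b(C_n)$.

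First I would dispose of the degenerate case $|V_f^+|=1$: if $V_f^+=\{x_j\}$ with $f$ maximal irredundant, then $PB_f(x_j)$ is always nonempty when $f(x_j)\leq \lfloor n/2\rfloor$, so maximality forces $f(x_j)=\lfloor n/2\rfloor$, whence every vertex is $f$-dominated. Hence, a non-dominating $ir_b$-broadcast must satisfy $|V_f^+|\geq 2$. Now let $x_i$ be a non-dominated vertex and, walking one way around $C_n$, let $x_j$ be the first $f$-broadcast vertex encountered; then $x_{i-1}$ is a bordering private $f$-neighbor of $x_j$, and maximality forces $|PB_f(x_j)|=1$ (otherwise $f(x_{i-1})=1$ could be added). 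Let $x_{j'}$ be the next $f$-broadcast vertex on the other side of $x_j$ and let $x_{j'^p}$ be one of its bordering private $f$-neighbors: maximality of $f$ together with irredundance of $x_j$ forces $d_{C_n}(x_{j'^p},x_j)=f(x_j)+1$ (otherwise $f(x_j)$ could be increased).

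Then I would define $g$ exactly as in the path proof: set $g(x_j)=0$ and $g(x_{j+1})=f(x_j)$ (or $g(x_{j+2})=f(x_j)$ if $x_{i+1}$ is itself not $f$-dominated), and if $x_{j'}\neq x_{j'^p+1}$ also set $g(x_{j'})=0$ and $g(x_{j'^p+1})=1$. The four cases illustrated in Figure~\ref{fig:th-irPn} transfer verbatim to the cyclic setting because the arguments are purely local to the arc from $x_{j'^p}$ through $x_j$ to $x_i$. A direct verification shows that $x_{j'^p}$ is a bordering private $g$-neighbor of $x_{j'^p+1}$, that $x_i$ (respectively $x_{i+1}$) becomes a bordering private $g$-neighbor of $x_{j+1}$ (respectively $x_{j+2}$), that no previously dominated vertex loses its domination, and that $g$ is still maximal irredundant. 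The cost change is
\[
\sigma(g)-\sigma(f)=-f(x_j)-f(x_{j'})+f(x_j)+1=1-f(x_{j'})\leq 0,
\]
and the optimality of $f$ forces $f(x_{j'})=1$ and $\sigma(g)=\sigma(f)=ir_b(C_n)$. Iterating this transformation strictly reduces the number of non-dominated vertices at each step while preserving maximal irredundance and the cost, producing the desired dominating maximal irredundant broadcast.

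The main obstacle I anticipate is a careful check of the small or degenerate configurations the cycle can present that the path cannot: namely, when $|V_f^+|=2$ (so $x_{j'}$ and $x_j$ are the only broadcast vertices and the two shifts could interfere), when $x_{j'^p+1}$ might coincide with $x_j$ or $x_{j+1}$, and when consecutive non-dominated vertices occur. In each such case one must verify explicitly that the redistribution preserves the bordering private neighbor of every $g$-broadcast vertex and does not accidentally re-create a non-dominated vertex elsewhere; once this bookkeeping is done, the cyclic analogue of the path argument goes through.
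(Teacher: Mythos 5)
Your upper-bound direction (combining Corollary~\ref{cor:Ahmadi-inequalities} with Proposition~\ref{prop:gamma-path-cycle}) and your disposal of the case $|V_f^+|=1$ are fine, but your route for the reverse inequality diverges from the paper's and leaves the hard part undone. The paper does \emph{not} transplant the redistribution of Theorem~\ref{th:ir-gamma-path} to the cycle. Instead, given a non-dominating $ir_b$-broadcast $f$ on $C_n$ ($n\ge 6$), it chooses a non-dominated vertex $x_0$ whose neighbor $x_1$ is dominated, ``cuts'' the cycle there, and checks that the induced function on $P_n=y_0\dots y_{n-1}$ is still a \emph{maximal} irredundant broadcast --- the only new point being that the extremal broadcast vertices' values cannot be increased on the path, precisely because the cut was made at a non-dominated vertex. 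This yields $ir_b(C_n)=\sigma(f)\ge ir_b(P_n)=\gamma_b(P_n)=\gamma_b(C_n)$ in a few lines, with $n=3,4,5$ handled by explicit broadcasts.

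In your plan, the ``bookkeeping'' you defer is exactly where the content lies, and it is not routine. The path proof leans on the existence of endpoints in several places that have no cyclic analogue: it uses that $x_j$ is not the leftmost broadcast vertex to produce $x_{j'}$ together with a bordering private neighbor $x_{j'^p}$ sitting at distance exactly $f(x_j)+1$ from $x_j$ on the far side, so that the emptied position $x_{j-f(x_j)}=x_{j'^p+1}$ is re-covered by the new radius-one ball; on a cycle with $|V_f^+|=2$ both sides of $x_{j'}$ wrap back toward $x_j$ and toward the non-dominated arc, so $x_{j'^p}$ need not lie where the construction requires, and the two local modifications (emptying $x_{j'}$ into $x_{j'^p+1}$, shifting $x_j$ to $x_{j+1}$ or $x_{j+2}$) can overlap. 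Likewise, replacing the radius-$f(x_{j'})$ ball at $x_{j'}$ by a radius-one ball can un-dominate vertices or destroy the private borders of further broadcast vertices unless the same distance identities are re-derived in the cyclic setting. None of this is obviously fatal, but none of it is verified, so the proof of $\gamma_b(C_n)\le ir_b(C_n)$ is incomplete as written. Since Theorem~\ref{th:ir-gamma-path} is already at your disposal, the unrolling argument is both shorter and sidesteps every one of these degenerate configurations; I would switch to it.
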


\begin{proof}
We already know by Proposition~\ref{prop:gamma-path-cycle} that 
$\gamma_b(C_n)= \left\lceil\frac{n}{3}\right\rceil$, so that we only need
to prove that $ir_b(C_n)=\left\lceil\frac{n}{3}\right\rceil$.
By Corollary~\ref{cor:Ahmadi-inequalities}, we only need to prove that $ ir_b(C_n)\ge \gamma_b(C_n)$ or, by 
Proposition~\ref{prop:gamma-path-cycle} and Theorem~\ref{th:ir-gamma-path},
that $ir_b(C_n) \geq ir_b(P_n)$.

Observe  that $010$, $0200$ and $00200$ are dominating
 $ir_b$-broadcasts for $C_3$, $C_4$ and $C_5$, respectively, with cost 
$\left\lceil\frac{n}{3}\right\rceil$.
It thus remains to consider the case $n\ge 6$.

Let $f$ be an  $ir_b$-broadcast on $C_n$, $n\ge 6$. 
If $f$ is dominating, then we are done.
Thus suppose that $f$ is non-dominating. 
We  claim that $|V^+_f|\ge 2$.
Indeed, if $|V^+_f|=1$, then the maximality of $f$ implies $\sigma(f)=\left\lfloor\frac{n}{2}\right\rfloor$,
which gives $ir_b(C_n) = \left\lfloor\frac{n}{2}\right\rfloor > \left\lceil\frac{n}{3}\right\rceil = \gamma_b(C_n)$,
in contradiction with Corollary~\ref{cor:Ahmadi-inequalities} since $n\ge 6$.

We thus have $|V^+_f|\ge 2$.
Since $f$ is maximal, we cannot have three consecutive vertices that are not $f$-dominated.
Moreover, since $f$ is non-dominating, we can assume, without loss of generality, that $x_0$
is not $f$-dominated, and that $x_1$ is $f$-dominated by some $f$-broadcast vertex $x_{i_1}$, $i_1>1$. 
Note that $x_{n-1}$ may be $f$-dominated or not.

To avoid confusion, let $P_n=y_0y_1\dots y_{n-1}$.
Let then $g$ be the mapping defined by $g(y_i)=f(x_i)$ for every $i$, $0\le i\le n-1$.
As in $C_n$, $x_0$ is not $g$-dominated, $x_1$ is $g$-dominated by $x_{i_1}$, 
and $g(x_{i_1})$ cannot be increased since otherwise $f(x_{i_1})$ could be increased,
contradicting the maximality of $f$.
Similarly, the $g$-value of the rightmost $g$-broadcast vertex in $P_n$
cannot be increased, since otherwise its $f$-value could be increased
(since $x_0$ is not $f$-dominated, while $x_{n-1}$ may be $f$-dominated or not), 
again contradicting the maximality of $f$.
we can apply the same reasoning if $x_{n-1}$ is not $g$-dominated,
which implies that $x_{n-1}$ is not $f$-dominated.
Hence, since $f$ is an  $ir_b$-broadcast on $C_n$,
we get that $g$ is also a maximal irredundant broadcast on $P_n$,
which gives 
$ir_b(C_n)=\sigma(f)=\sigma(g)\ge ir_b(P_n)$
as required.
\end{proof}

\subsection{Packing broadcast number and lower packing broadcast number}

We first determine the broadcast packing number 
of paths and cycles.

\begin{theorem}\label{th:P-path-cycle}
For every $n\ge 2$, $P_b(P_n) = \diam(P_n) = n-1$, and,
for every $n\ge 3$,  $P_b(C_n) = \diam(C_n) = \left\lfloor\frac{n}{2}\right\rfloor$.
\end{theorem}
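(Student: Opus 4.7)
The plan is to prove both equalities by matching a lower-bound construction with a short upper-bound argument, both driven by the following elementary observation: if $f$ is a packing broadcast on a graph $G$ and $v,w$ are two distinct $f$-broadcast vertices, then $d_G(v,w) \geq f(v) + f(w) + 1$. Indeed, otherwise one can pick a vertex on a $(v,w)$-geodesic at distance $f(v)$ from $v$ (and hence at distance $d_G(v,w)-f(v) \leq f(w)$ from $w$), and this vertex hears both $v$ and $w$. For the lower bounds, I would exhibit a single-broadcast-vertex packing broadcast: on $P_n$, set $f(x_1) = n-1 = e_{P_n}(x_1)$; on $C_n$, set $f(x_0) = \lfloor n/2 \rfloor = e_{C_n}(x_0)$. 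Each is a valid broadcast with a unique broadcast vertex, and is therefore trivially a packing broadcast of the claimed cost.

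For the upper bound, let $f$ be an arbitrary packing broadcast with broadcast vertices listed along $P_n$ (resp.\ around $C_n$) as $x_{i_1}, \dots, x_{i_t}$ and set $k_j = f(x_{i_j})$. If $t \leq 1$ the bound is immediate since $\sigma(f) \leq e_G(x_{i_1}) \leq \diam(G)$, matching the lower bound. For $t \geq 2$, I would sum the pairwise constraints $d_G(x_{i_j}, x_{i_{j+1}}) \geq k_j + k_{j+1} + 1$. On $P_n$, together with $1 \leq i_1$ and $i_t \leq n$, telescoping gives
$$n - 1 \;\geq\; i_t - i_1 \;\geq\; (t-1) + k_1 + 2k_2 + \cdots + 2k_{t-1} + k_t \;\geq\; (t-1) + \sigma(f),$$
so $\sigma(f) \leq n-t \leq n-2$. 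On $C_n$, summing cyclically all the way around yields
$$n \;=\; \sum_{j=1}^{t} d_{C_n}(x_{i_j}, x_{i_{j+1}}) \;\geq\; \sum_{j=1}^{t}\bigl(k_j + k_{j+1} + 1\bigr) \;=\; 2\sigma(f) + t,$$
whence $\sigma(f) \leq (n-t)/2 \leq \lfloor n/2\rfloor - 1$ (using that $\sigma(f)$ is an integer when $n$ is odd). In both settings, the multi-vertex scenario is strictly beaten by the single-vertex one.

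There is no genuine obstacle here: the key packing inequality follows directly from the triangle inequality, and the upper bound reduces to a routine telescoping on paths and a cyclic sum on cycles. The only mild point worth flagging is that the extremum is attained only by a single broadcast vertex — placed at an endpoint of $P_n$ or at any vertex of $C_n$ — since splitting the broadcast across $t \geq 2$ vertices always forfeits at least one unit per additional broadcast vertex.
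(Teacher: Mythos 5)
Your proof is correct, and it differs from the paper's mainly in the path case. For $P_n$, the paper gets the upper bound $P_b(P_n)\le n-1$ for free by citing the fact that every packing broadcast is irredundant (Observation~\ref{obs:definitions}(3)) together with $I\!R_b(P_n)=n-1$ (Theorem~\ref{th:Gamma_IR_paths}); your telescoping argument $n-1\ge i_t-i_1\ge (t-1)+\sigma(f)$ is a self-contained, elementary replacement that in fact gives the sharper information $\sigma(f)\le n-t$, showing directly that only a single broadcast vertex at an endpoint attains the maximum. For $C_n$, the two arguments are essentially the same count packaged differently: the paper sums the sizes $2f(x_i)+1$ of the pairwise disjoint balls around broadcast vertices to get $2P_b(C_n)+|V^+_f|\le n$, while you sum the gap inequalities $d(x_{i_j},x_{i_{j+1}})\ge f(x_{i_j})+f(x_{i_{j+1}})+1$ around the cycle to get $2\sigma(f)+t\le n$; these are the same inequality. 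One small imprecision worth fixing in your cycle case: the cyclic sum $\sum_j d_{C_n}(x_{i_j},x_{i_{j+1}})$ of graph distances equals $n$ only when every shortest path between consecutive broadcast vertices is the ``forward'' arc; in general each graph distance is at most the corresponding arc length, so the sum is only $\le n$. Since your chain only needs $n\ge\sum_j d_{C_n}(x_{i_j},x_{i_{j+1}})\ge 2\sigma(f)+t$, the conclusion is unaffected, but the equality sign should be a $\ge$ (or the argument should be phrased with arc lengths rather than distances).
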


\begin{proof} 
%
%
%
We first consider the case of the path $P_n$, $n\ge 2$.
Observe first that the function $f$ defined by $f(x_1)=n-1$ and $f(x_i)=0$ for every $i$, $2\le i\le n$
is a maximal broadcast packing with cost $n-1=\diam(P_n)$, which gives $\diam(P_n)\leq P_b(P_n)$.
The opposite inequality directly follows from Observation~\ref{obs:definitions}(3) and Theorem~\ref{th:Gamma_IR_paths}.

%

\medskip

Let us now consider the case of the cycle $C_n$, $n\geq 3$.
Observe first that the function $f$ defined by $f(x_0)=\left\lfloor\frac{n}{2}\right\rfloor$ and $f(x_i)=0$ for every $i$, $1\le i\le n-1$
is a maximal broadcast packing with cost $\left\lfloor\frac{n}{2}\right\rfloor=\diam(C_n)$, 
which gives $\diam(C_n)\leq P_b(C_n)$.

Again, to establish the opposite inequality, it suffices to prove that, for every  $P_b$-broadcast $f$ on $C_n$, $|V_f^+|=1$.
Similarly as above, if we suppose that $f$ is a  $P_b$-broadcast on $C_n$ with $|V_f^+|\geq 2$,
we get
$$\sum_{x_i\in V_f^+}\big(2f(x_i)+1\big)\leq n,$$
which gives
$$ 2P_b(C_n)+|V_f^+| \leq n,$$
and thus
$$P_b(C_n)\leq \frac{n-|V_f^+|}{2}\leq \frac{n-2}{2}< \left\lfloor\frac{n}{2}\right\rfloor= {\mbox{ diam}}(C_n),$$
again a contradiction.
\end{proof}


In order to determine the values of $p_b(P_n)$, $n\ge 1$, we first prove the following lemma.

\begin{lemma}\label{lem:packing 0-1}
For every integer $n\ge 2$,  there exists a  $p_b$-broadcast $f$ on $P_n$ such that  $f(x_i)=1$ 
for every $f$-broadcast vertex $x_i$.
\end{lemma}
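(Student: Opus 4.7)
The plan is to prove the lemma by a local-modification argument. Among all $p_b$-broadcasts on $P_n$, let $f$ be one that maximizes the number of broadcast vertices $|V^+_f|$. Since each broadcast vertex contributes at least~$1$ to the cost, we have $|V^+_f|\le \sigma(f)=p_b(P_n)$, with equality if and only if $f(v)=1$ for every $v\in V^+_f$. It therefore suffices to show that this maximizing $f$ satisfies $f(v)=1$ for every $v\in V^+_f$; otherwise we will produce a $p_b$-broadcast with strictly more broadcast vertices, contradicting the extremal choice of~$f$.

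Suppose for contradiction that some $x_i\in V^+_f$ has $f(x_i)=k\ge 2$, and set $B_i=\{x_{i-k},\dots,x_{i+k}\}\cap V(P_n)$. By the packing property, $B_i$ contains no other broadcast vertex, and by the maximality of~$f$, on each side of $B_i$ either the boundary of $P_n$ is reached or the nearest external broadcast vertex $x_j$ lies at the minimum packing distance $k+f(x_j)+1$ from $x_i$; moreover, every non-broadcast vertex of $P_n$ is within distance $f(x_j)+1$ of some broadcast vertex~$x_j$. The key construction is to split the broadcast at $x_i$ into two broadcasts of values $v_1,v_2\ge 1$ with $v_1+v_2=k$, placed at positions $p_1<p_2$ satisfying $p_2-p_1=v_1+v_2+1=k+1$. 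A concrete choice is $v_1=1$, $v_2=k-1$, $p_1=x_{i-k}$, $p_2=x_{i+1}$: the resulting broadcast $f'$ has the same cost as~$f$, contains one more broadcast vertex, and its coverage extends by exactly one vertex to the left of~$B_i$ with no extension on the right. A direct distance computation shows that, provided there is at least one unit of slack between $B_i$ and its left neighboring block (or, symmetrically, its right neighboring block, via the mirror construction), $f'$ is still packing, no value of $f'$ can be raised and no new broadcast vertex can be inserted; hence $f'$ is a $p_b$-broadcast with $|V^+_{f'}|>|V^+_f|$, contradicting the choice of~$f$.

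The main obstacle, which I expect to be the bulk of the proof, is the tightly-packed case in which $x_i$ has no slack on either side. I would treat it by propagating along the maximal cluster of tightly-packed broadcast vertices containing~$x_i$. The cluster either must end with slack at a boundary of $P_n$ or must contain some broadcast vertex of value~$1$: indeed, if the cluster consisted entirely of broadcast vertices of value~$\ge 2$ and was tight on both sides up to the boundaries, a span-counting argument (a fully tight cluster of $t$ broadcast vertices of total value $\sigma(f)$ occupies a region of length $2\sigma(f)+t-1$ inside $P_n$) would force $\sigma(f)$ to exceed $p_b(P_n)$, a contradiction. At the boundary or at the value-$1$ neighbour so obtained, a joint modification of the tight cluster together with its endpoint creates the slack required to apply the splitting construction. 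The most delicate subcase is $k=2$, where $v_1=v_2=1$ forces $p_1,p_2$ to lie at distance exactly~$3$, so that the splitting inherently borrows one vertex from outside $B_i$ and the maximality verification must be carried out with care. Once this joint modification is established, iterating the whole construction strictly increases $|V^+_f|$ while preserving $\sigma(f)$ and maximality, and eventually produces the desired $p_b$-broadcast with all broadcast values equal to~$1$.
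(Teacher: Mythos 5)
Your splitting move is sound where it applies, and it is a genuinely different strategy from the paper's (the paper replaces the ball of the leftmost vertex of value $k\ge 2$, together with the adjacent undominated vertices, by a dense pattern $(010)^\alpha$ of value-$1$ vertices at mutual distance~$3$, and uses the cost comparison $\bigl\lfloor\frac{2k+1+p}{3}\bigr\rfloor$ versus $k$ to force $k\in\{2,3\}$ before finishing with explicit words). In the case where $x_{i-k-1}$ does not exist or is not dominated, replacing $f(x_i)=k$ by value $1$ at $x_{i-k}$ and $k-1$ at $x_{i+1}$ does yield a maximal packing broadcast of the same cost with one more broadcast vertex: the two new balls are adjacent and disjoint, the coverage only grows (so no insertion or raise that was blocked before becomes available), and the two new vertices block each other from being raised. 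So your extremal argument on $|V^+_f|$ would close the proof if the tight case were handled.

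The tight case is where the gap is. Your dichotomy --- the cluster ``ends with slack at a boundary or contains a broadcast vertex of value~$1$'' --- does not identify a usable resource in the second branch. A value-$1$ vertex sitting inside a fully tight configuration provides no slack: a tight block consisting of a value-$1$ vertex next to a value-$k$ vertex occupies $2k+4$ consecutive covered positions at cost $k+1$, while any packing arrangement of cost $k+1$ with three or more broadcast vertices spans at least $2(k+1)+3=2k+5$ positions, so no ``joint modification at the value-$1$ neighbour'' can manufacture room out of nothing. What your span-counting must rule out is the configuration in which every pair of consecutive balls is adjacent and the first and last balls reach $x_1$ and $x_n$, \emph{with mixed values allowed}: there $n=2\sigma(f)+t$ and $t\le\sigma(f)-1$ give $\sigma(f)\ge\frac{n+1}{3}>p_b(P_n)$ only for $n$ large enough, and the small cases still need checking. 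In every other configuration there is an undominated vertex or boundary room somewhere, and you must transport it to a vertex of value at least~$2$ by shifting the intervening broadcast vertices; this is where the real work lies and it is not routine. The available slack between consecutive balls is $1$ or $2$, and, for instance, shifting a value-$1$ vertex by one position across a gap of two leaves it with an undominated vertex on each side, so its value can then be raised and the resulting broadcast is no longer maximal. Hence the shifts, their amounts, and the final maximality verification all depend on the exact gap sizes, and none of this is supplied by your sketch.
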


\begin{proof} 
Observe first that $10$, $010$, $1001$ and $10010$ define
$p_b$-broadcasts on $P_n$, $n=2,3,4,5$, respectively, that satisfy the statement of the lemma.
Suppose thus $n \geq 6$ and let $g$ be any   $p_b$-broadcast on $P_n$. 
If $g(x_i)=1$ for every $g$-broadcast vertex $x_i$, then we set $f:=g$ and we are done.

Otherwise, let $V_g^+=\{x_{i_1},\dots,x_{i_t}\}$, $i_1<\cdots < i_t$, $t\geq 1$.
We first claim that we necessarily have $t\ge 2$.
Indeed, if $t=1$, we get 
$$p_b(P_n)=g(x_{i_1})=\rad(P_n)=\left\lfloor\frac{n}{2}\right\rfloor,$$
while the function 
$g'$ defined by $g'(x_i)=1$, $1\le i\le n$, if and only if $i \equiv 1\pmod 3$
is a maximal packing broadcast with cost
$\sigma(g')=\left\lfloor\frac{n}{3}\right\rfloor < \left\lfloor\frac{n}{2}\right\rfloor,$
a contradiction.

We thus have $|V_g^+| \geq 2$. Let $x_{i_j}$, $1\le j\le t$, be a $g$-broadcast vertex with minimum subscript such that
$g(x_{i_j})\geq 2$. We will consider three cases, depending on the value of $i_j$.
In each case, we will prove either that the case cannot occur,
or that we can produce a $p_b$-broadcast $g'$ on $P_n$,
with $\sigma(g')=\sigma(g)$, 
such that the subscript of the leftmost $g'$-broadcast vertex with $g'$-value at least~$2$, if any, is
strictly greater than
the subscript of the leftmost $g$-broadcast vertex with $g$-value at least~$2$.


\begin{enumerate}
\item $i_j=i_1$ or $i_j=i_t$.\\ 
Assume $i_j=i_1$, the case $i_j=i_t$ being similar, by symmetry.
We first claim that we have $g(x_{i_1})\in\{i_1-2, i_1-1\}$.
Indeed, if $g(x_{i_1}) \geq i_1$, then the function
$h$ obtained from $g$ by setting $h(x_{i_1})=0$ and $h(x_{i_1+1})=g(x_{i_1})-1$
is clearly a maximal packing broadcast with cost $\sigma(h)=\sigma(g)-1$,
contradicting the optimality of $g$.
Now, if $g(x_{i_1}) \leq i_1 - 3$, then 
we could set $g(x_{1})=1$, contradicting the maximality of $g$.
We thus have two cases to consider.

\begin{figure}
\begin{center}
\begin{tikzpicture}
\FLECHE{-0.5}{0}
\LIGNE{0}{0}{4}{0}
\POINTILLE{4}{0}{5}{0}
   \dSOM{0}{0}{0}{{\bf 1}}
   \dSOM{1}{0}{0}{{\bf 0}}
   \bSOM{2}{0}{2}{{\bf 0}}
   \dSOM{3}{0}{0}{{\bf 1}}
   \dSOM{4}{0}{0}{{\bf 0}}
\end{tikzpicture}

\vskip 0.4cm

\begin{tikzpicture}
\FLECHE{-0.5}{0}
\LIGNE{0}{0}{6}{0}
\POINTILLE{6}{0}{7}{0}
   \dSOM{0}{0}{0}{{\bf 0}}
   \dSOM{1}{0}{0}{{\bf 0}}
   \dSOM{2}{0}{0}{{\bf 1}}
   \bSOM{3}{0}{3}{{\bf 0}}
   \dSOM{4}{0}{0}{{\bf 0}}
   \dSOM{5}{0}{0}{{\bf 1}}
   \dSOM{6}{0}{0}{{\bf 0}}
\end{tikzpicture}

\vskip 0.4cm

\begin{tikzpicture}
\FLECHE{-0.5}{0}
\LIGNE{0}{0}{8}{0}
\POINTILLE{8}{0}{9}{0}
   \dSOM{0}{0}{0}{{\bf 0}}
   \dSOM{1}{0}{0}{{\bf 1}}
   \dSOM{2}{0}{0}{{\bf 0}}
   \dSOM{3}{0}{0}{{\bf 0}}
   \bSOM{4}{0}{4}{{\bf 1}}
   \dSOM{5}{0}{0}{{\bf 0}}
   \dSOM{6}{0}{0}{{\bf 0}}
   \dSOM{7}{0}{0}{{\bf 1}}
   \dSOM{8}{0}{0}{{\bf 0}}
\end{tikzpicture}

(a) $g(x_{i_1})=i_1-1$

\vskip 0.8cm

\begin{tikzpicture}
\FLECHE{-1.5}{0}
\LIGNE{-1}{0}{4}{0}
\POINTILLE{4}{0}{5}{0}
   \SOM{-1}{0}{0}{{\bf 0}}
   \dSOM{0}{0}{0}{{\bf 1}}
   \dSOM{1}{0}{0}{{\bf 0}}
   \bSOM{2}{0}{2}{{\bf 0}}
   \dSOM{3}{0}{0}{{\bf 1}}
   \dSOM{4}{0}{0}{{\bf 0}}
\end{tikzpicture}

\vskip 0.4cm

\begin{tikzpicture}
\FLECHE{-1.5}{0}
\LIGNE{-1}{0}{6}{0}
\POINTILLE{6}{0}{7}{0}
   \SOM{-1}{0}{0}{{\bf 1}}
   \dSOM{0}{0}{0}{{\bf 0}}
   \dSOM{1}{0}{0}{{\bf 0}}
   \dSOM{2}{0}{0}{{\bf 1}}
   \bSOM{3}{0}{3}{{\bf 0}}
   \dSOM{4}{0}{0}{{\bf 0}}
   \dSOM{5}{0}{0}{{\bf 1}}
   \dSOM{6}{0}{0}{{\bf 0}}
\end{tikzpicture}

\vskip 0.4cm

\begin{tikzpicture}
\FLECHE{-1.5}{0}
\LIGNE{-1}{0}{8}{0}
\POINTILLE{8}{0}{9}{0}
   \SOM{-1}{0}{0}{{\bf 0}}
   \dSOM{0}{0}{0}{{\bf 0}}
   \dSOM{1}{0}{0}{{\bf 1}}
   \dSOM{2}{0}{0}{{\bf 0}}
   \dSOM{3}{0}{0}{{\bf 0}}
   \bSOM{4}{0}{4}{{\bf 1}}
   \dSOM{5}{0}{0}{{\bf 0}}
   \dSOM{6}{0}{0}{{\bf 0}}
   \dSOM{7}{0}{0}{{\bf 1}}
   \dSOM{8}{0}{0}{{\bf 0}}
\end{tikzpicture}

(b)  $g(x_{i_1})=i_1-2$

\caption{\label{fig:lem-packing}Packing broadcasts for the proof of Lemma~\ref{lem:packing 0-1}, Case 1.}
\end{center}
\end{figure}

\begin{enumerate}
\item $g(x_{i_1})= i_1-1$, and thus $i_1\geq 3$. \\ 
Let $g_1$ be the function obtained from $g$ by modifying the $g$-values of
$x_1,\dots, x_{2g(x_{i_1})+1}$ as follows:
\begin{itemize}
\item $g_1(x_1\dots x_{2g(x_{i_1})+1}) = (010)^\alpha$, if  $2g(x_{i_1})+1\equiv 0 \pmod 3$,
\item $g_1(x_1\dots x_{2g(x_{i_1})+1}) = 0(010)^\alpha$, if  $2g(x_{i_1})+1\equiv 1 \pmod 3$,
\item $g_1(x_1\dots x_{2g(x_{i_1})+1}) = 10(010)^\alpha$, if  $2g(x_{i_1})+1\equiv 2 \pmod 3$,
\end{itemize}
where $\alpha=\left\lfloor\frac{2g(x_{i_1})+1}{3}\right\rfloor$ (see Figure~\ref{fig:lem-packing}(a)).
It is then not difficult to check that $g_1$ is a maximal packing broadcast such that
$$\sigma(g_1)-\sigma(g)= \Big(\left\lfloor\frac{2g(x_{i_1})-1}{3}\right\rfloor +1\Big) - g(x_{i_1}) = \left\lfloor\frac{2-g(x_{i_1})}{3}\right\rfloor,$$
which gives 
$\sigma(g_1)-\sigma(g)<0$ whenever $g(x_{i_1})\neq 2$. 
Consequently, $g(x_{i_1})=g(x_{3})=2$ and we can then define $g'$ 
from $g$ by setting $g'(x_1\dots x_5)= 10010$.


\item $g(x_{i_1})= i_1-2$, and thus $i_1\geq 4$. \\ 
Similarly to the previous case, let $g_2$ be the function obtained from $g$ by modifying the $g$-values of
$x_1,\dots, x_{2g(x_{i_1})+2}$ as follows:
\begin{itemize}
\item $g_2(x_1\dots x_{2g(x_{i_1})+2}) = (010)^\alpha$, if  $2g(x_{i_1})+2\equiv 0 \pmod 3$,
\item $g_2(x_1\dots x_{2g(x_{i_1})+2}) = 0(010)^\alpha$, if  $2g(x_{i_1})+2\equiv 1 \pmod 3$,
\item $g_2(x_1\dots x_{2g(x_{i_1})+2}) = 10(010)^\alpha$, if  $2g(x_{i_1})+2\equiv 2 \pmod 3$,
\end{itemize}
where $\alpha=\left\lfloor\frac{2g(x_{i_1})+2}{3}\right\rfloor$ (see Figure~\ref{fig:lem-packing}(b))
.
Again, it is not difficult to check that $g_2$ is a maximal packing broadcast such that
$$\sigma(g_2)-\sigma(g)= \Big(\left\lfloor\frac{2g(x_{i_1})}{3}\right\rfloor +1\Big) - g(x_{i_1}) = \left\lfloor\frac{3-g(x_{i_1})}{3}\right\rfloor,$$
which gives 
$\sigma(g_2)-\sigma(g)<0$ whenever $g(x_{i_1})\neq 2$ and $g(x_{i_1})\neq 3$. 
Consequently, either $g(x_{i_1})=g(x_{4})=2$ and we can then define $g'$ 
from $g$ by setting $g'(x_1\dots x_6)= 010010$,
or $g(x_{i_1})=g(x_{5})=3$ and we can then define $g'$ 
from $g$ by setting $g'(x_1\dots x_8)= 10010010$.
\end{enumerate}

\item $i_j\in \{i_2,\dots,i_{t-1}\}$. \\ 
%
%
\begin{figure}
\begin{center}
\begin{tikzpicture}
\FLECHE{-1}{0}
\LIGNE{0}{0}{8}{0}
\POINTILLE{-1}{0}{0}{0}
\POINTILLE{8}{0}{9}{0}
   \dSOM{0}{0}{0}{{\bf 0}}
   \dSOM{1}{0}{0}{{\bf 1}}
   \dSOM{2}{0}{0}{{\bf 0}}
   \dSOM{3}{0}{0}{{\bf 0}}
   \bSOM{4}{0}{4}{{\bf 1}}
   \dSOM{5}{0}{0}{{\bf 0}}
   \dSOM{6}{0}{0}{{\bf 0}}
   \dSOM{7}{0}{0}{{\bf 1}}
   \dSOM{8}{0}{0}{{\bf 0}}
\end{tikzpicture}

(a) $2g(x_{i_j})+1\equiv 0\pmod 3$ ($9$ in this example)

\vskip 0.8cm

\begin{tikzpicture}
\FLECHE{-3}{0}
\LIGNE{-2}{0}{8}{0}
\POINTILLE{-3}{0}{-2}{0}
\POINTILLE{8}{0}{9}{0}
   \bSOM{-2}{0}{1}{}
   \dSOM{-1}{0}{0}{}
   \SOM{0}{0}{0}{{\bf 0}}
   \SOM{1}{0}{0}{{\bf 1}}
   \dSOM{2}{0}{0}{{\bf 0}}
   \dSOM{3}{0}{0}{{\bf 0}}
   \dSOM{4}{0}{0}{{\bf 1}}
   \bSOM{5}{0}{3}{{\bf 0}}
   \dSOM{6}{0}{0}{{\bf 0}}
   \dSOM{7}{0}{0}{{\bf 1}}
   \dSOM{8}{0}{0}{{\bf 0}}

\FLECHE{-1}{-2}
\LIGNE{0}{-2}{7}{-2}
\POINTILLE{-1}{-2}{0}{-2}
\POINTILLE{7}{-2}{8}{-2}
   \bSOM{0}{-2}{1}{}
   \dSOM{1}{-2}{0}{}
   \SOM{2}{-2}{0}{{\bf 0}}
   \dSOM{3}{-2}{0}{{\bf 1}}
   \dSOM{4}{-2}{0}{{\bf 0}}
   \bSOM{5}{-2}{2}{{\bf 0}}
   \dSOM{6}{-2}{0}{{\bf 1}}
   \dSOM{7}{-2}{0}{{\bf 0}}

\FLECHE{-1-0.5}{-4}
\LIGNE{0-0.5}{-4}{8-0.5}{-4}
\POINTILLE{-1-0.5}{-4}{0-0.5}{-4}
\POINTILLE{8-0.5}{-4}{9-0.5}{-4}
   \bSOM{0-0.5}{-4}{1}{}
   \dSOM{1-0.5}{-4}{0}{}
   \SOM{2-0.5}{-4}{0}{}
   \SOM{3-0.5}{-4}{0}{{\bf 0}}
   \dSOM{4-0.5}{-4}{0}{{\bf 1}}
   \dSOM{5-0.5}{-4}{0}{{\bf 0}}
   \bSOM{6-0.5}{-4}{2}{{\bf 0}}
   \dSOM{7-0.5}{-4}{0}{{\bf 1}}
   \dSOM{8-0.5}{-4}{0}{{\bf 0}}

\end{tikzpicture}

(b) $2g(x_{i_j})+1\equiv 1\pmod 3$ ($7$ in this example) and $p=2$, 

or $2g(x_{i_j})+1\equiv 2\pmod 3$ ($5$ in this example) and $p=1$,

or $2g(x_{i_j})+1\equiv 2\pmod 3$ ($5$ in this example) and $p=2$

\vskip 0.8cm

\begin{tikzpicture}
\FLECHE{0}{0}
\LIGNE{1}{0}{15}{0}
\POINTILLE{0}{0}{1}{0}
\POINTILLE{15}{0}{16}{0}
   \bSOM{1}{0}{1}{}
   \dSOM{2}{0}{0}{}
   \dSOM{3}{0}{0}{{\bf 0}}
   \dSOM{4}{0}{0}{{\bf 1}}
   \dSOM{5}{0}{0}{{\bf 0}}
   \bSOM{6}{0}{3}{{\bf 0}}
   \dSOM{7}{0}{0}{{\bf 1}}
   \dSOM{8}{0}{0}{{\bf 0}}
   \dSOM{9}{0}{0}{{\bf 0}}
   \dSOM{10}{0}{0}{{\bf 0}}
   \bSOM{11}{0}{1}{{\bf 2}}
   \dSOM{12}{0}{0}{}
   \SOM{13}{0}{0}{}
   \dSOM{14}{0}{0}{}
   \bSOM{15}{0}{1}{}
\end{tikzpicture}

\vskip 0.4cm

\begin{tikzpicture}
\FLECHE{0}{0}
\LIGNE{1}{0}{16}{0}
\POINTILLE{0}{0}{1}{0}
\POINTILLE{16}{0}{17}{0}
   \bSOM{1}{0}{1}{}
   \dSOM{2}{0}{0}{}
   \dSOM{3}{0}{0}{{\bf 0}}
   \dSOM{4}{0}{0}{{\bf 1}}
   \dSOM{5}{0}{0}{{\bf 0}}
   \bSOM{6}{0}{3}{{\bf 0}}
   \dSOM{7}{0}{0}{{\bf 1}}
   \dSOM{8}{0}{0}{{\bf 0}}
   \dSOM{9}{0}{0}{{\bf 0}}
   \dSOM{10}{0}{0}{}
   \dSOM{11}{0}{0}{}
   \bSOM{12}{0}{2}{}
   \dSOM{13}{0}{0}{}
   \dSOM{14}{0}{0}{}
   \dSOM{15}{0}{0}{}
   \bSOM{16}{0}{1}{}
\end{tikzpicture}

\vskip 0.4cm

\begin{tikzpicture}
\FLECHE{0}{0}
\LIGNE{1}{0}{12}{0}
\POINTILLE{0}{0}{1}{0}
\POINTILLE{12}{0}{13}{0}
   \bSOM{1}{0}{1}{}
   \dSOM{2}{0}{0}{}
   \dSOM{3}{0}{0}{{\bf 0}}
   \dSOM{4}{0}{0}{{\bf 1}}
   \dSOM{5}{0}{0}{{\bf 0}}
   \bSOM{6}{0}{3}{{\bf 0}}
   \dSOM{7}{0}{0}{{\bf 1}}
   \dSOM{8}{0}{0}{{\bf 0}}
   \dSOM{9}{0}{0}{{\bf 0}}
   \SOM{10}{0}{0}{{\bf 0}}
   \dSOM{11}{0}{0}{{\bf 2}}
   \bSOM{12}{0}{1}{{\bf 0}}
\end{tikzpicture}

(c) $2g(x_{i_j})+1\equiv 1\pmod 3$ ($7$ in this example) and $p=0$,

or $2g(x_{i_j})+1\equiv 1\pmod 3$ ($7$ in this example), $p=0$,\\ and both $x_{i_j+g(x_{i_j})+1}$
and $x_{i_{j+1}+g(x_{i_{j+1}})+1}$ are $g$-dominated,

or $2g(x_{i_j})+1\equiv 1\pmod 3$ and $p=1$

\vskip 0.8cm

\begin{tikzpicture}
\FLECHE{0}{0}
\LIGNE{1}{0}{10}{0}
\POINTILLE{0}{0}{1}{0}
\POINTILLE{10}{0}{11}{0}
   \bSOM{1}{0}{1}{}
   \dSOM{2}{0}{0}{}
   \dSOM{3}{0}{0}{{\bf 0}}
   \dSOM{4}{0}{0}{{\bf 1}}
   \bSOM{5}{0}{2}{{\bf 0}}
   \dSOM{6}{0}{0}{{\bf 0}}
   \dSOM{7}{0}{0}{{\bf 0}}
   \dSOM{8}{0}{0}{{\bf 0}}
   \dSOM{9}{0}{0}{{\bf 3}}
   \bSOM{10}{0}{2}{{\bf 0}}
\end{tikzpicture}

(d) $2g(x_{i_j})+1\equiv 2\pmod 3$ ($5$ in this example) and $p=0$

\caption{\label{fig:lem-packing-2}Packing broadcasts for the proof of Lemma~\ref{lem:packing 0-1}, Case 2.}
\end{center}
\end{figure}

Since $g$ is a maximal packing broadcast, 
we necessarily have 
$$1\le d_{P_n}(x_{i_{j-1}},x_{i_{j}}) - g(x_{i_{j-1}}) - g(x_{i_{j}}) \le 3.$$
Moreover, we also have either 
$$d_{P_n}(x_{i_{j-1}},x_{i_{j}})=g(x_{i_{j-1}})+g(x_{i_{j}})+1,\ \  
\text{or}\ \ d_{P_n}(x_{i_{j}},x_{i_{j+1}})=g(x_{i_{j}})+g(x_{i_{j+1}})+1.$$ 


We consider four subcases, depending on the value of $2g(x_{i_j})+1\mod 3$, and on
the number $p$ of vertices lying between $x_{i_{j-1}}$ and $x_{i_{j}}$,
or between $x_{i_{j}}$ and $x_{i_{j+1}}$,
that are not $g$-dominated.
Note that since $g$ is a maximal packing broadcast, we have either 
(i) $p=0$,
or (ii) $p=1$ and either $x_{i_j-g(x_{i_j})-1}$ or $x_{i_j+g(x_{i_j})+1}$ is not $g$-dominated,
or (iii) $p=2$ and either $x_{i_j-g(x_{i_j})-2}$ and $x_{i_j-g(x_{i_j})-1}$,
or $x_{i_j+g(x_{i_j})+1}$ and $x_{i_j+g(x_{i_j})+2}$, are not $g$-dominated.

\begin{enumerate}
\item $2g(x_{i_j})+1\equiv 0\pmod 3$.\\
Let $g_0$ be the function obtained from $g$ by 
setting $g_0(x_{i_j - g(x_{i_j})} \dots x_{i_j + g(x_{i_j})}) = (010)^\alpha$, 
where $\alpha =\frac{2g(x_{i_j})+1}{3}$ (see Figure~\ref{fig:lem-packing-2}(a)).
Since $g$ is maximal, $g_0$ is also maximal and we have
$$\sigma(g_0)-\sigma(g)= \frac{2g(x_{i_j})+1}{3}- g(x_{i_j}) =\frac{1-g(x_{i_j})}{3} < 0,$$
which contradicts the optimality of $g$.

\item $2g(x_{i_j})+1\equiv 1\pmod 3$ and $p=2$, 
or $2g(x_{i_j})+1\equiv 2\pmod 3$ and $1\le p\le 2$.\\ 
Suppose that $2g(x_{i_j})+1\equiv 1\pmod 3$, and $x_{i_j-g(x_{i_j})-2}$, $x_{i_j-g(x_{i_j})-1}$ are not $g$-dominated,
or that $2g(x_{i_j})+1\equiv 2\pmod 3$, and $x_{i_j-g(x_{i_j})-1}$ is not $g$-dominated.
(The other cases are similar.)

Let $g'$ be the function obtained from $g$ by 
setting
\begin{itemize}
\item $g'(x_{i_j-g(x_{i_j})-2} \dots x_{i_j + g(x_{i_j})}) = (010)^\alpha$ if $2g(x_{i_j})+1\equiv 1\pmod 3$,
\item $g'(x_{i_j-g(x_{i_j})-1} \dots x_{i_j + g(x_{i_j})}) = (010)^\alpha$ if $2g(x_{i_j})+1\equiv 2\pmod 3$,
\end{itemize}
where $\alpha = \left\lfloor\frac{2g(x_{i_j})+p+1}{3}\right\rfloor$ (see Figure~\ref{fig:lem-packing-2}(b)).
Since $g$ is maximal, $g'$ is also maximal and we have
$$\sigma(g')-\sigma(g)= \frac{2g(x_{i_j})+p+1}{3}- g(x_{i_j})=\frac{p+1-g(x_{i_j})}{3}\leq 0.$$

The optimality of $g$ then implies either $p=1$ and $g(x_{i_j})=2$, or $p=2$ and $g(x_{i_j})=3$.
In each case, $g'$ is also optimal and the subscript of the leftmost $g'$-broadcast vertex with 
$g'$-value at least~$2$, if any, is strictly greater than
the subscript of the leftmost $g$-broadcast vertex with $g$-value at least~$2$, as required.

\item $2g(x_{i_j})+1\equiv 1\pmod 3$ and $0\le p\le 1$.\\
In that case, $x_{i_j-g(x_{i_j})-2}$ and $x_{i_j+g(x_{i_j})+2}$ are $g$-dominated,
and at most one vertex among $x_{i_j-g(x_{i_j})-1}$ and $x_{i_j+g(x_{i_j})+1}$ is not $g$-dominated.
Let $g'$ be the function obtained from $g$ (see Figure~\ref{fig:lem-packing-2}(c)) by setting
$g'(x_{i_j-g(x_{i_j})} \dots x_{i_j + g(x_{i_j})}) = (010)^\alpha 0$,  where $\alpha = \frac{2g(x_{i_j})}{3}$,
and 
\begin{itemize}
\item
$g'(x_{i_{j+1}})=0$, $g'(x_{i_{j+1}-1})= g(x_{i_{j+1}})+1$, if $x_{i_{j} + g(x_{i_{j}})+1}$ is not $g$-dominated,
\item $g'(x_{i_{j+1}})=g(x_{i_{j+1}})+1$,  if $i_j=i_{t-1}$ or $x_{i_{j} + g(x_{i_{j}})+1}$ is $g$-dominated, 
and $x_{i_{j+1} + g(x_{i_{j+1}})+1}$ is not $g$-dominated,
\end{itemize}
Note that we necessarily have one of the above cases, since otherwise $g'$ would be a maximal
packing broadcast with $\sigma(g')<\sigma(g)$, contradicting the optimality of $g$.
Since $g$ is maximal, $g'$ is also maximal and we have
$$\sigma(g')-\sigma(g)= \frac{2g(x_{i_j})}{3}+1- g(x_{i_j})=\frac{3-g(x_{i_j})}{3}\leq 0.$$
(Recall that since $2g(x_{i_j})+1\equiv 1\pmod 3$, we have $g(x_{i_j})\ge 3$.)
The optimality of $g$ then implies $g(x_{i_j})=3$.
We then get that $g'$ is also optimal and the subscript of the leftmost $g'$-broadcast vertex with 
$g'$-value at least~$2$, if any, is strictly greater than
the subscript of the leftmost $g$-broadcast vertex with $g$-value at least~$2$, as required.

\item $2g(x_{i_j})+1\equiv 2\pmod 3$ and $p=0$.\\ 
Let $g'$ be the function obtained from $g$ by setting
$g'(x_{i_j-g(x_{i_j})} \dots x_{i_j + g(x_{i_j})}) = (010)^\alpha 00$, where $\alpha = \frac{2g(x_{i_j})-1}{3}$,
$g'(x_{i_{j+1}})=0$, and $g'(x_{i_{j+1}-1})=g(x_{i_{j+1}})+1$ (see Figure~\ref{fig:lem-packing-2}(d)). 
Since $g$ is maximal, $g'$ is also maximal and we have
$$\sigma(g')-\sigma(g)= \frac{2g(x_{i_j})-1}{3}+1- g(x_{i_j})=\frac{2-g(x_{i_j})}{3}\leq 0.$$
The optimality of $g$ then implies $g(x_{i_j})=2$.
Therefore, $g'$ is also optimal and the subscript of the leftmost $g'$-broadcast vertex with 
$g'$-value at least~$2$, if any, is strictly greater than
the subscript of the leftmost $g$-broadcast vertex with $g$-value at least~$2$, as required.
\end{enumerate}
%
\end{enumerate}

Repeating the same transformation for each vertex with $g$-value at least~2, 
we eventually produce a $p_b$-broadcast $g'$ on $P_n$ all of whose
broadcast vertices have $g'$-value~1, as claimed in the statement of the lemma.
This concludes the proof.
\end{proof}

We are now able to determine the lower broadcast packing number of paths.

\begin{theorem}\label{th:p-path}
For every integer $n\ge 2$,
\[p_b(P_n)=\left\lbrace\begin{array}{cl}
 \frac{n}{4}&  \text{if }  \,\, n\equiv 0\pmod 8,\\[1ex]
 2\left\lfloor\frac{n}{8}\right\rfloor +1&  \text{if }  \,\, n\equiv 1,2,3\pmod 8,\\[1ex]
 2\left\lfloor\frac{n}{8}\right\rfloor +2&  \text{if }  \,\, n\equiv 4,5,6,7\pmod 8.\\
                 \end{array}
\right.
\]
\end{theorem}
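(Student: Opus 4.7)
The plan is to invoke Lemma~\ref{lem:packing 0-1} and restrict attention to $p_b$-broadcasts $f$ on $P_n$ with $f(x_i)\in\{0,1\}$ for every $i$, so that $\sigma(f)$ equals the cardinality $k$ of the support $S=\{x_{s_1},\ldots,x_{s_k}\}$ (with $s_1<\cdots<s_k$). The task then reduces to determining the minimum $k$ for which such a maximal packing broadcast exists on $P_n$. The small cases $n\in\{2,3\}$ I would handle by inspection (giving $p_b(P_n)=1$), and for $n\ge 4$ we observe that $k\ge 2$ since any single broadcast vertex with value $1$ can be expanded in view of the eccentricity.

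Next I would translate the three requirements (packing, impossibility of adding a new broadcast vertex with value $1$, and impossibility of increasing any $f$-value from $1$ to~$2$) into purely combinatorial constraints on the sequence $s_1,\ldots,s_k$. The packing condition gives $s_{j+1}-s_j\ge 3$; impossibility of adding gives $s_1\le 3$, $s_k\ge n-2$, and $s_{j+1}-s_j\le 5$; and impossibility of increasing forces, for each~$j$, the existence of some $j'\ne j$ with $d_{P_n}(x_{s_j},x_{s_{j'}})=3$. Writing $g_j=s_{j+1}-s_j$, these amount to: $g_j\in\{3,4,5\}$ for $1\le j\le k-1$, $g_1=g_{k-1}=3$, and no two consecutive $g_j$'s belong to $\{4,5\}$.

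Armed with these constraints, I would then maximize the span $s_k-s_1=g_1+\cdots+g_{k-1}$: the positions $j\in\{2,\ldots,k-2\}$ with $g_j\ne 3$ form an independent subset of a path, of size at most $\lceil(k-3)/2\rceil$, and each non-$3$ gap contributes at most $2$ beyond the baseline $3(k-1)$. A short computation then yields a maximum span of $8m-5$ when $k=2m$ and $8m-2$ when $k=2m+1$; combined with the inequality $n\le 5+(s_k-s_1)$ (which follows from $s_1\le 3$ and $s_k\ge n-2$), this gives the maximum admissible $n$ for a given~$k$, namely $n\le 8m$ for $k=2m$ and $n\le 8m+3$ for $k=2m+1$. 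Inverting this table produces precisely the three cases appearing in the statement of the theorem and gives the lower bound $p_b(P_n)\ge $ the claimed value.

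For the matching upper bound, I would exhibit explicit maximal packing broadcasts following the template of pairs of broadcast vertices at distance~$3$, separated by gaps of~$5$: for $n=8m$, take $S=\{x_3,x_6,x_{11},x_{14},\ldots,x_{8m-5},x_{8m-2}\}$; for $n=8m+r$ with $r\in\{1,2,3\}$, append $x_{8m+1}$; for $r\in\{4,5,6\}$, additionally append $x_{8m+4}$; and for $r=7$, modify the tail so that the final two broadcast vertices sit at $x_{8m+3}$ and $x_{8m+6}$. The main technical obstacle will be the rigorous derivation of the distance-$3$ condition from maximality, since the definition of a maximal packing broadcast allows arbitrary simultaneous modifications of $f$ rather than only local ones; boundary eccentricity considerations at very small $n$ also demand some care. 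Once the three structural constraints are firmly established, the remaining argument is essentially bookkeeping.
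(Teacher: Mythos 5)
Your proposal is correct and rests on the same two pillars as the paper's proof --- the reduction to $\{0,1\}$-valued optimal broadcasts via Lemma~\ref{lem:packing 0-1}, and essentially the same explicit constructions (your broadcast vertices at positions $3,6,11,14,\dots,8m-5,8m-2$ with the various tail adjustments are exactly the paper's words $(00100100)^q$ followed by the suffixes $\varepsilon,1,10,\dots,0010010$) --- but the lower-bound counting is organized quite differently. The paper partitions $P_n$ into consecutive blocks of eight vertices, shows each block carries cost $2$ or $3$, and then runs a somewhat delicate case analysis on the residue $n\bmod 8$ involving the possible patterns of the last block. You instead encode a $\{0,1\}$-valued maximal packing broadcast by its gap sequence $g_1,\dots,g_{k-1}$, derive the constraints $g_j\in\{3,4,5\}$, $g_1=g_{k-1}=3$, and ``no two consecutive gaps exceed $3$'' (equivalently, every broadcast vertex has another at distance exactly $3$ --- the same structural facts the paper extracts from maximality), and then maximize the span $\sum_j g_j$ for fixed $k$ by an independent-set bound on the non-$3$ gaps, obtaining $8m-5$ for $k=2m$ and $8m-2$ for $k=2m+1$; combined with $n\le 5+\sum_j g_j$ this inverts cleanly into the three congruence classes of the statement. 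This global extremal computation replaces the paper's block-by-block bookkeeping and is arguably cleaner. One remark on the point you flag as the main technical obstacle: since packing broadcasts are closed under pointwise decrease ($H_f(v)\subseteq H_g(v)$ whenever $f\le g$), a broadcast is a maximal packing broadcast if and only if no \emph{single} value can be increased by~$1$ without violating the packing property or the eccentricity bound; so the three conditions really are purely local, and your derivation of the distance-$3$ condition goes through with no worry about simultaneous modifications (only the eccentricity caveat for $n\le 3$ remains, which you dispose of by inspection).
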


\begin{proof} 
%
Observe first that $10$, $010$, $1001$, $01001$, $001001$, $0010010$ and $00100100$ define
optimal $p_b$-broadcasts on $P_n$ for $n=2,\dots,8$, respectively, whose costs are the values claimed
by the theorem.

Suppose now $n\ge 9$ and let
$n=8q+r$, with $q\geq 1$ and $0\leq r\leq 7$.
According to the value of $r$,  we define the broadcasts $f_r$, $0\le r\le 7$, on $P_n$ as follows:
$$f_0(P_n) = (00100100)^q,\ f_1(P_n) = (00100100)^q1,\ f_2(P_n) = (00100100)^q10,$$
$$f_3(P_n) = (00100100)^q100,\ f_4(P_n) = (00100100)^q1001,\ f_5(P_n) = (00100100)^q01001,$$
$$f_6(P_n) = (00100100)^q001001,\ \mbox{and } f_7(P_n) = (00100100)^q0010010.$$

It is not difficult to check that each $f_r$ is indeed a maximal packing broadcast 
on $P_{8q+r}$, $8q+r\ge 9$, with cost
$$\sigma(f_r)=
\left\{
\begin{array}{cl}
 \frac{n}{4}&  if  \,\, n\equiv 0\pmod 8,\\[1ex]
 2\left\lfloor\frac{n}{8}\right\rfloor +1&  \text{if }  \,\, n\equiv 1,2,3\pmod 8,\\[1ex]
 2\left\lfloor\frac{n}{8}\right\rfloor +2&  \text{if }  \,\, n\equiv 4,5,6,7\pmod 8,
 \end{array}
\right.$$
which gives $p_b(P_n)\leq \sigma(f_r)$, that is, $p_b(P_n)$ is not greater than the value claimed
by the theorem.

\medskip

We now prove the opposite inequality.
By Lemma~\ref{lem:packing 0-1}, we know that there exists a   $p_b$-broadcast all of whose
broadcast vertices have broadcast value~1. Let $g$ be such a broadcast.
For every $k$, $0\leq k\leq q-1$, let
$$\sigma_k = g(x_{8k+1})+\cdots +g(x_{8k+8}).$$

From the definition of a packing broadcast, we get that the distance between any two
consecutive $g$-broadcast vertices (with $g$-value $1$) is $3$, $4$ or $5$, which implies
$2\leq \sigma_k \leq 3$ for every $k$, $0\leq k\leq q-1$.
%
%
Observe also that $x_2$ and $x_{n-1}$ must be $g$-dominated, since otherwise we could
set $g(x_1)=1$ or $g(x_n)=1$, contradicting the maximality of $g$,
and that 
$d_{P_n}(x_{i_1},x_{i_2})=d_{P_n}(x_{i_{t-1}},x_{i_t})=3$, 
since otherwise we could increase the value of $g(x_{i_1})$ or $g(x_{i_t})$, 
contradicting the maximality of $g$.

We now consider three cases, depending on the value of $r$, and prove in each case that
the cost of~$g$ is the value claimed in the statement of the theorem.

\begin{enumerate}
\item $r=0$.\\
In that case, we have
$$p_b(P_n)=\sigma(g)=\sum_{k=0}^{q-1}\sigma_k\geq 2q = \frac{2n}{8}= \frac{n}{4}.$$

\item $r\in\{1,2,3\}$.\\
If $\sigma_k=3$ for some $k$, $0\le k\le q-1$, then
$$p_b(P_n)=\sigma(g) \geq \sum_{k=0}^{q-1}\sigma_k \geq 2(q-1)+3=2q+1= 2\left\lfloor\frac{n}{8}\right\rfloor +1,$$
and we are done.

%

Suppose now that $\sigma_k=2$ for every $k$, $0\leq k\leq q-1$, 
which implies $p_b(P_n)\ge 2\left\lfloor\frac{n}{8}\right\rfloor$.
Suppose, contrary to the statement of the theorem, that $p_b(P_n) = 2\left\lfloor\frac{n}{8}\right\rfloor$.
This implies  $g(x_{n-r+1})=\cdots=g(x_n)=0$.
If $r=3$, then we have a contradiction since $x_{n-1}$ must be $g$-dominated.

We thus have $r\in\{1,2\}$. Since $x_{n-1}$ must be $g$-dominated and
$d_{P_n}(x_{i_{t-1}},x_{i_t})=3$, we necessarily have
$\sigma_{q-1}\in\{00001001,00010010\}$ if $r=1$,
and $\sigma_{q-1}=00001001$ if $r=2$.
Since $g$ is maximal, we cannot have three consecutive vertices that
are not $g$-dominated, which gives
$\sigma_{j}\in\{00001001,00010010\}$ if $r=1$,
and $\sigma_{j}=00001001$ if $r=2$, for every $j$, $0\le j\le q-1$,
a contradiction since, in each case, this would imply that $x_2$ is not $g$-dominated.

Hence we have $p_b(P_n) = 2\left\lfloor\frac{n}{8}\right\rfloor +1$, as required.

%
%

\item $r \in \{4,5,6,7\}$.\\
Note first that we necessarily have $g(x_{n-r+1})+\cdots +g(x_n) \ge 1$.
Hence, if $\sigma_k=3$ for some $k$, $0\le k\le q-1$, then
$$p_b(P_n)=\sigma(g) \geq \sum_{k=0}^{q-1}\sigma_k + 1 \geq 2(q-1)+3+1=2q+2= 2\left\lfloor\frac{n}{8}\right\rfloor +2,$$
and we are done.

Suppose now that $\sigma_k=2$ for every $k$, $0\leq k\leq q-1$,
and, contrary to the statement of the theorem, 
that $p_b(P_n) \le 2\left\lfloor\frac{n}{8}\right\rfloor + 1$,
which implies $g(x_{n-r+1})+\cdots+g(x_n)=1$ since $x_{n-1}$ must be $g$-dominated.
If $r=6$ or $r=7$, then we have a contradiction since we must have $d_{P_n}(x_{i_{t-1}},x_{i_t})=3$.

We thus have $r\in\{4,5\}$. 
Again, since $x_{n-1}$ must be $g$-dominated and
$d_{P_n}(x_{i_{t-1}},x_{i_t})=3$, we necessarily have
$$\sigma_{q-1}\in\{00100001,00001001,01000010,00100010,00010010\}$$ 
if $r=4$, and 
$$\sigma_{q-1}\in\{00100001,00001001\}$$ 
if $r=5$.
Now, since $g$ is maximal, every $g$-broadcast vertex must
be at distance~$3$ from another $g$-broadcast vertex, and thus
we get 
$$\sigma_{j}\in\{00100001,00001001,01000010,00100010,00010010\}$$ 
for every $j$, $0\le j\le q-1$.
We then get a contradiction since $x_2$ must be $g$-dominated and we must have $d(x_{i_1},x_{i_2})=3$.

Hence, in this case also, we have $p_b(P_n) = 2\left\lfloor\frac{n}{8}\right\rfloor +2$, as required.
\end{enumerate}

This completes the proof.
\end{proof}

%
%

Using Theorem~\ref{th:p-path}, we can also prove a similar result for cycles.

\begin{theorem}\label{th:p-cycle}
For every integer $n\geq 3$,
\[
p_b(C_n)=\left\lbrace\begin{array}{cl}
 \frac{n}{4}&  \text{if }  \,\, n\equiv 0\pmod 8,\\[1ex]
 2\lfloor\frac{n}{8}\rfloor +1&  \text{if }  \,\, n\equiv 1,2,3\pmod 8,\\[1ex]
 2\lfloor\frac{n}{8}\rfloor +2&  \text{if }  \,\, n\equiv 4,5,6,7\pmod 8.
                 \end{array}
\right.
\]
\end{theorem}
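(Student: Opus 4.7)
The plan parallels Theorem~\ref{th:p-path}: I would establish the upper bound by explicit construction and the matching lower bound by reducing to the case where all broadcast values equal~$1$. For the upper bound, write $n=8q+r$ with $0\le r\le 7$ and define $f_r$ on $C_n$ by the cyclic adaptation of the word used in the proof of Theorem~\ref{th:p-path} (e.g.\ $f_0(C_n)=(00100100)^q$, $f_1(C_n)=(00100100)^q 1$, and so on). In each case the wraparound from the last broadcast vertex of the word back to the first creates a gap of length $3$, $4$, or $5$, so $f_r$ remains a maximal packing broadcast on $C_n$ whose cost matches the claimed formula.

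For the lower bound, I would first prove a cyclic analog of Lemma~\ref{lem:packing 0-1}: there exists a $p_b$-broadcast $g$ on $C_n$ with $g(v)=1$ for every broadcast vertex~$v$. The argument mimics Lemma~\ref{lem:packing 0-1}, replacing any broadcast vertex of value $\ge 2$ by a local pattern of value-$1$ broadcasts of equal or smaller total cost; the absence of endpoints in $C_n$ actually eliminates Case~1 of that proof, and only the interior rewrites need be adapted. Once $g$ has this form, let $v_0,\ldots,v_{k-1}$ denote the broadcast vertices in cyclic order and $d_i=d_{C_n}(v_i,v_{i+1})$ the gaps. Packing forces $d_i\ge 3$; maximality then forbids inserting a new value-$1$ broadcast vertex into a gap, giving $d_i\le 5$, and forbids raising any $g(v_i)$ to $2$, which yields that for each $i$ at least one of $d_{i-1},d_i$ equals $3$. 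Writing $a,b,c$ for the numbers of gaps of length $3,4,5$, one therefore has $a+b+c=k$, $3a+4b+5c=n$, and (because gaps of length $\ge 4$ are pairwise non-adjacent in the cyclic order) $a\ge b+c$. A short case analysis according to $n\bmod 8$ then yields the desired lower bound on $k$, matching the upper bound. The cases $n\le 8$ and $|V^+_g|\le 2$ (notably $|V^+_g|=1$, giving $\sigma(g)=\lfloor n/2\rfloor$) are dispatched separately.

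The main obstacle is the cyclic analog of Lemma~\ref{lem:packing 0-1}: although the endpoint subcases of the path proof disappear, the interior subcases still involve several delicate local rewrites that have to be adapted carefully to avoid issues on short cycles. An alternative approach --- cutting the cycle at a chosen edge to reduce to $P_n$ as in Theorems~\ref{th:ir-gamma-cycle} and~\ref{th:i-cycle} --- is tempting, but here it is delicate: to guarantee that the resulting path broadcast is maximal one needs to cut inside a tight gap whose two neighbouring gaps in the cycle are also tight, and the existence of such a configuration is not automatic. This is why the all-ones reduction appears to be the more robust route.
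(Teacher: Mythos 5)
Your lower-bound route is genuinely different from the paper's, and the reason you give for setting aside the cut-the-cycle reduction is based on a misconception: that reduction is exactly what the paper does, and the configuration you worry about always exists. Call a gap between consecutive $f$-broadcast vertices \emph{tight} if its length equals the sum of the two broadcast values plus one. Maximality of a packing broadcast on $C_n$ forbids two consecutive non-tight gaps (otherwise the broadcast value of the vertex between them could be raised by one), so either every gap is tight, or some gap is non-tight and both of its cyclic neighbours are tight. One cuts inside such a gap $G$ --- which need \emph{not} itself be tight, contrary to what you require: tightness of the two neighbouring gaps is what prevents the two extremal broadcast vertices of the resulting path from being increased, and since the slack of $G$ is at most~$2$, the cut position can always be chosen so that neither end segment of the path is long enough to host a new value-$1$ broadcast vertex. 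This yields a maximal packing broadcast on $P_n$ of the same cost, hence $p_b(P_n)\le p_b(C_n)$, and Theorem~\ref{th:p-path} finishes the proof. This is precisely the paper's argument (its second case cuts just after an undominated vertex inside a non-tight gap).

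Your alternative is viable but leaves its main load-bearing piece unproved. The gap-counting step is correct and I would only flag one point you gloss over: with $a,b,c$ the numbers of gaps of length $3,4,5$ one gets $n=3k+(b+2c)$ and $b+2c\le 2(b+c)\le a+b+c=k$, hence $k\ge\lceil n/4\rceil$; this already settles all residues except $n\equiv 4\pmod 8$, where $\lceil n/4\rceil=2\lfloor n/8\rfloor+1$ falls one short of the claimed value and one must additionally observe that $b+2c=k$ forces $b=0$, $a=c$ and $k$ even, ruling out the odd value $2\lfloor n/8\rfloor+1$. The real issue is the cyclic analogue of Lemma~\ref{lem:packing 0-1}: that lemma's proof is the longest and most delicate in this part of the paper, and asserting that its interior rewrites ``adapt'' to $C_n$ is not a proof --- you would in effect be redoing all of that work, whereas the cutting argument reuses Theorem~\ref{th:p-path} as a black box and reduces the cycle case to half a page. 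As written, the proposal has a genuine gap at exactly the step you yourself identify as the main obstacle.
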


\begin{proof}
Observe first that the broadcasts $f_0,\dots,f_7$, defined in the proof of Theorem~\ref{th:p-path},
are also maximal packing broadcast on $C_n$, with $n\ge 3$, $n=8q+r$ and $0\le r\le 7$, 
which gives $p_b(C_n)\leq \sigma(f_r)$, that is, $p_b(C_n)$ is not greater than the value claimed
by the theorem.

We now prove the opposite inequality.
Observe first that $010$, $2000$, $20000$ and $001001$
are optimal solutions for $C_3$, $C_4$, $C_5$ and $C_6$, respectively,
and that their cost is exactly the value claimed by the theorem.
We can thus assume $n\ge 7$.
Let $f$ be a   $p_b$-broadcast on $C_n$, $n\ge 7$.

We first claim that we necessarily have $|V^+_f|\geq 2$.
Indeed, suppose to the contrary that $|V^+_f|=1$ and
that $V^+_f=\{x_0\}$, without loss of generality.
Since $f$ is maximal, we necessarily have 
$\sigma(f)=f(x_0)=\left\lfloor\frac{n-1}{2}\right\rfloor$,
and thus $p_b(C_n)=\left\lfloor\frac{n-1}{2}\right\rfloor$,
which contradicts the inequality we previously established when $n\ge 7$.

We thus have $|V^+_f|\geq 2$.
Let $V^+_f=\{x_{i_0},\dots,x_{i_{t-1}}\}$, $t\ge 2$.
%
Since $f$ is maximal, we necessarily have 
$$f(x_{i_j}) + f(x_{i_{j+1}}) +1 \leq  d_{C_n}(x_{i_j},x_{i_{j+1}}) \leq f(x_{i_j}) + f(x_{i_{j+1}}) + 3$$
for every $j$, $0\le j\le t-1$.
Moreover, if 
$d_{C_n}(x_{i_j},x_{i_{j+1}}) \ge f(x_{i_j}) + f(x_{i_{j+1}}) + 2$, then we necessarily have
$d_{C_n}(x_{i_{j+1}},x_{i_{j+2}}) = f(x_{i_{j+1}}) + f(x_{i_{j+2}}) + 1$
(we may have $i_j = i_{j+2}$), since otherwise we could
increase the value of $f(x_{i_{j+1}})$ by~1.

We consider the two following cases.
\begin{enumerate}
\item If all vertices in $C_n$ are $f$-dominated, 
that is, $d_{C_n}(x_{i_{j}},x_{i_{j+1}}) = f(x_{i_{j}}) + f(x_{i_{j+1}}) + 1$ for every $j$, $0\le j\le t-1$,
then, to avoid confusion, we let $P_n=y_0y_1\dots y_{n-1}$
with $y_0=x_{{i_0}+f(x_{i_0})+1}$.
Let now $g$ be the function
defined by $g(y_j)=f(x_{j+i_0+f(x_{i_0})+1})$ for every $j$, $0\le j\le n-1$.
Clearly, both $y_0$ and $y_{n-1}$ are $g$-dominated.
Since $f$ was a maximal packing broadcast on $C_n$, we get that $g$ is also
a maximal packing broadcast on $P_n$, which gives $p_b(P_n)\leq p_b(C_n)$. 

\item
If $C_n$ contains at least one vertex which is not $f$-dominated, then
we can assume, without loss
of generality, that $x_{i_1+f(x_{i_1})+1}$ is not $f$-dominated,
which implies $d_{C_n}(x_{i_{0}},x_{i_{1}}) = f(x_{i_{0}}) + f(x_{i_{1}})+1$
and $d_{C_n}(x_{i_{2}},x_{i_{3}}) = f(x_{i_{2}}) + f(x_{i_{3}})+1$
(we may have $x_{i_0}=x_{i_2}$ and $x_{i_1}=x_{i_3}$).
Again, to avoid confusion, we let $P_n=y_0y_1\dots y_{n-1}$
with $y_0=x_{{i_1}+f(x_{i_1})+2}$.
Let now $g$ be the function
defined by $g(y_j)=f(x_{j+i_1+f(x_{i_1})+2})$ for every $j$, $0\le j\le n-1$.
Since $d_{C_n}(x_{i_{0}},x_{i_{1}}) = f(x_{i_{0}}) + f(x_{i_{1}})+1$
and $d_{C_n}(x_{i_{2}},x_{i_{3}}) = f(x_{i_{2}}) + f(x_{i_{3}})+1$, 
the values of both the leftmost and the rightmost $g$-broadcast vertex in $P_n$ cannot be increased.
Since $f$ was a maximal packing broadcast on $C_n$, we thus get that $g$ is also
a maximal packing broadcast on $P_n$, which gives $p_b(P_n)\leq p_b(C_n)$.
\end{enumerate}


We thus have $p_b(P_n)\leq p_b(C_n)$ in all cases and the result follows. 
\end{proof}

\end{document}